\newcommand{\commentout}[1]{}
\tikzset{  
    -Latex,auto,node distance =1.5 cm and 1.3 cm, thick,
    state/.style ={ellipse, draw, minimum width = 0.9 cm}, 
    point/.style = {circle, draw, inner sep=0.18cm, fill, node contents={}},  
    bidirected/.style={Latex-Latex,dashed}, 
    el/.style = {inner sep=2.5pt, align=right, sloped}  
}  
\DeclareMathOperator*{\argmin}{arg\,min}
\DeclareMathOperator*{\argmax}{\arg\!\max}
\algnewcommand\INPUT{\item[\textbf{Input:}]}%
\algnewcommand\OUTPUT{\item[\textbf{Output:}]}%
\renewcommand{\raggedright}{\leftskip=0pt \rightskip=0pt plus 0cm}
\def\hilite<#1>{%
	\temporal<#1>{\color{blue!25}}{\color{magenta}}%
	{\color{blue!55}}}
\newcolumntype{H}{>{\columncolor{blue!20}}c!{\vrule}}
\newcolumntype{H}{>{\columncolor{blue!20}}c}
\newtheorem{theorem}{Theorem}
\newtheorem{remark}{Remark}
\newtheorem{definition}{Definition}
\newtheorem{example}{Example}
\newtheorem{lemma}{Lemma}
\newtheorem{assumption}{Assumption}
\newtheorem{proposition}{Proposition}
\numberwithin{equation}{section} 
\numberwithin{theorem}{section}
\numberwithin{lemma}{section} 
\numberwithin{corollary}{section}
\numberwithin{definition}{section}
\numberwithin{proposition}{section} 
\numberwithin{remark}{section}
\numberwithin{example}{section}
\DeclareMathOperator\supp{supp}
\renewcommand{\oddsidemargin}{0mm}
\def\R{\mathbb R}
\def\Z{\mathbb Z}
\def\Z{\mathbb Z}
\def\Sbb{\mathbb S}
\newcommand{\T}{\intercal}
\DeclareMathOperator*{\Def}{Def}
\DeclareMathOperator*{\CV}{CV}
\definecolor{mydarkgreen}{rgb}{0,0.4,0}
\def\@makefnmark{}
\newcommand{\colim@}[2]{%
  \vtop{\m@th\ialign{##\cr
    \hfil$#1\operator@font colim$\hfil\cr
    \noalign{\nointerlineskip\kern1.5\ex@}#2\cr
    \noalign{\nointerlineskip\kern-\ex@}\cr}}%
}
\newcommand{\colim}{%
  \mathop{\mathpalette\colim@{\rightarrowfill@\scriptscriptstyle}}\nmlimits@
}
\renewcommand{\varinjlim}{%
  \mathop{\mathpalette\varlim@{\rightarrowfill@\scriptscriptstyle}}\nmlimits@
}
\renewcommand{\varprojlim}{%
  \mathop{\mathpalette\varlim@{\leftarrowfill@\scriptscriptstyle}}\nmlimits@
}
\newcommand{\cdedit}[1]{{\color{cyan}  #1}}
\begin{document}
\vspace{-15in}	
	\title{\LARGE {\textbf{Statistical Inference on Grayscale Images via the Euler-Radon Transform}}}
	\author[1,*]{Kun Meng}
    \author[2]{Mattie Ji}
    \author[3]{Jinyu Wang}
    \author[2]{Kexin Ding}
    \author[4]{Henry Kirveslahti}
	\author[5]{Ani Eloyan}
    \author[5,6]{Lorin Crawford}
		
	\affil[1]{\small Division of Applied Mathematics, Brown University, RI, USA}
 \affil[2]{\small Department of Mathematics, Brown University, RI, USA}
    \affil[3]{\small Data Science Initiative, Brown University, RI, USA}
    \affil[4]{\small Laboratory for Topology and Neuroscience, EPFL, Lausanne, Switzerland}
    \affil[5]{\small Department of Biostatistics, Brown University School of Public Health, RI, USA}
	\affil[6]{\small Microsoft Research New England, Cambridge, MA, USA}
	\affil[*]{Corresponding Author: e-mail: \texttt{kun\_meng@brown.edu}.}
	
	\maketitle

\begin{abstract}

Tools from topological data analysis have been widely used to represent binary images in many scientific applications. Methods that aim to represent grayscale images (i.e., where pixel intensities instead take on continuous values) have been relatively underdeveloped. 
In this paper, we introduce the Euler-Radon transform, which generalizes the Euler characteristic transform to grayscale images by using o-minimal structures and Euler integration over definable functions. Coupling the Karhunen–Loève expansion with our proposed topological representation, we offer hypothesis-testing algorithms based on the $\chi^2$ distribution for detecting significant differences between two groups of grayscale images. We illustrate our framework via extensive numerical experiments and simulations. \footnote{
\begin{itemize}
\item \textbf{Keywords:} Euler calculus; Karhunen–Loève expansion; o-minimal structures; smooth Euler-Radon transform.

    \item \textbf{Abbreviations:} BIS, binary image segmentation; CT, computed tomography; CDT, cell decomposition theorem; DERT, dual Euler-Radon transform; ECT, Euler characteristic transform; GBM, glioblastoma multiforme; iid, independently and identically distributed; LECT, lifted Euler characteristic transform; MEC, marginal Euler curve; MRI, magnetic resonance imaging; Micro-CT, micro-computed tomography; PHT, persistent homology transform; PET, positron emission tomography; RCLL, right continuous with left limit; SECT, smooth Euler characteristic transform; TDA, topological data analysis; WECT, weighted Euler curve transform.
\end{itemize}
}
\end{abstract}


\section{Introduction}\label{Introduction}

The analysis of grayscale images is important in many fields. In medical imaging, data can be derived from different modalities including magnetic resonance imaging (MRI), computed tomography (CT), positron emission tomography (PET), and micro-computed tomography (Micro-CT). Analyzing the variation among pixel intensities in these images can help diagnose diseases, detect abnormalities in human tissues, and monitor the effectiveness of different treatment strategies (e.g., see Figure \ref{fig: Grayscale_image_from_Prof_Duan}). Beyond medicine, grayscale imaging is essential in astronomy for capturing details in celestial bodies and other phenomena \citep{howell2006handbook}, in geology for studying rock and mineral compositions using electron microscopy \citep{reed2005electron}, and in meteorology for interpreting satellite imagery to predict weather patterns \citep{kidder1995satellite}. 

There is an important distinction between binary and grayscale images. Unlike binary images, where pixels are exactly one of two colors (usually black and white), pixels in grayscale images take on continuous values. A binary image can be modeled as a binary-valued function which is often expressed in the following form
\begin{align}\label{eq: generic form of black-white images}
\mathbbm{1}_K(x)=\left\{\begin{aligned}
1, \ \ & \mbox{ if the color of point (pixel) $x$ is white},\\
0, \ \ & \mbox{ if the color of point (pixel) $x$ is black},
\end{aligned}
\right.
\end{align}
where the subscript $K$ denotes the region of white points in the image (which we will refer to as a ``shape" throughout this paper). In contrast, a grayscale image must be modeled as a real-valued function. Specifically, the grayscale intensity of each pixel is represented as the function value at that corresponding point.

Many methods in the field of topological data analysis (TDA) \citep{carlsson2009topology, vishwanath2020limits} have been developed for analyzing binary images (equivalently, shapes). Some of these works include the persistent homology transform (PHT) \citep{turner2014persistent}, the Euler characteristic transform (ECT) \citep{turner2014persistent, ghrist2018persistent}, and the smooth Euler characteristic transform (SECT) \citep{crawford2020predicting} --- all of which are proposed statistics used to represent shapes while preserving the complete information they contain. \cite{crawford2020predicting} applied the SECT on MRI-derived binary images taken from tumors of glioblastoma multiforme (GBM) patients. Here, the authors used the resulting summary statistics from the SECT in a class of Gaussian process regression models to predict survival-based outcomes. \cite{wang2021statistical} utilized the ECT for sub-image analysis which aims to identify geometric features that are important for distinguishing various classes of shapes. \cite{marsh2022detecting} recently presented the DETECT: an extension of the ECT to analyze temporal changes in shapes. The authors demonstrated their approach by studying the growth of mouse small intestine organoid experiments from segmented videos. Lastly, \cite{meng2022randomness} recently used the SECT framework to introduce a $\chi^2$ distribution-based approach to test hypotheses on random shapes, with the corresponding mathematical foundation being established therein through algebraic topology, functional analysis with  Sobolev embeddings \citep{brezis2011functional}, and probability theory using the Karhunen–Loève expansion \citep{alexanderian2015brief}.

Previous TDA methods that use Euler characteristic-based invariants are well suited for binary images and shapes with clearly defined boundaries (e.g., the region $K$ of white points defined via the binary-valued function in Eq.~\eqref{eq: generic form of black-white images}). Grayscale images, on the other hand, are arrays of 2-dimensional pixel or 3-dimensional voxel intensities that represent varying levels of brightness in a continuous space (e.g., see Figures \ref{fig: Grayscale_image_from_Prof_Duan} and \ref{fig:grayscale_image}). These images lack the clear boundaries that can be used to define and compute the Euler characteristic. Consequently, due to the inherent continuous nature of grayscale images, both the Euler characteristic and the corresponding TDA methods that leverage it are not immediately applicable. 

One natural way to apply the Euler characteristic to grayscale images is through binary image segmentation (BIS) --- a process that divides points in a grayscale image into foreground (the shape of interest) and background. That is, BIS can serve as a preprocessing step to convert a grayscale image into a binary format, facilitating the application of the Euler characteristic. Numerous state-of-the-art applications of Euler characteristic-based TDA methods to grayscale images rely on BIS. For example, \cite{crawford2020predicting} used the computer-assisted program MITKats \citep{chen2017fast} to threshold MRI scans of GBM tumors and convert them to binary formats for their analyses with the SECT (refer to Figure 3 in \cite{crawford2020predicting}). Unfortunately, there are several challenges associated with BIS that can impede on the effectiveness and accuracy of downstream analyses with TDA methods. One of the primary challenges is selecting an appropriate threshold to distinguish between foreground and background points. Improper choice of a BIS threshold may result in the issue of over- or under-segmentation. Moreover, pinpointing a proper threshold is often not straightforward (especially when the images have low image contrast, large noise, or complex heterogeneity) and can be computationally demanding. This selection process can be especially challenging for medical images of soft tissues (including organs, tumors, and blood vessels). Most importantly, performing BIS inevitably causes a loss of information in images of interest. By generalizing Euler characteristic-based statistics and enabling them to be directly applicable to grayscale images without the need for BIS, we will increase their utility for better powered shape analyses.

\begin{figure}[h]
    \centering
    \includegraphics[width=172mm]{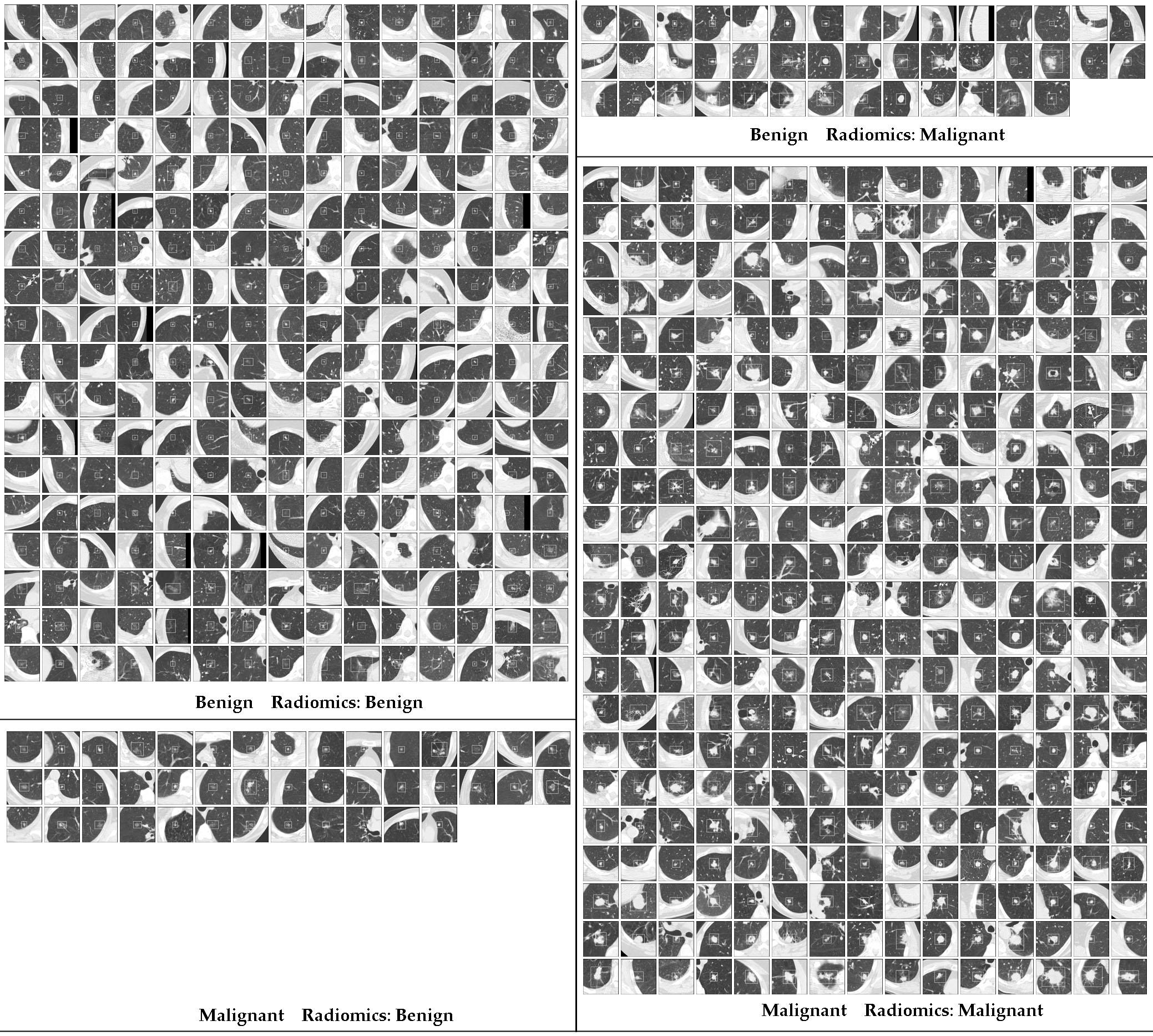}
    \caption{CT scans of lung cancer tumors labeled as ``benign" or ``malignant." This figure has been previously published in \cite{maldonado2021validation}. The details of this figure are provided therein.}
    \label{fig: Grayscale_image_from_Prof_Duan}
\end{figure}

Radiomics \citep{aerts2014decoding} has been used to estimate features from grayscale images to predict outcomes of interest, such as patient survival time in cancer. Most commonly used radiomics approaches are based on estimating parameters that describe various image intensity features, which are then employed for prediction. For example, \cite{just2014improving} described the estimation of moments of intensity histograms, \cite{brooks2013quantification} proposed a pixel-based distance-dependent approach using the deviation from a linear intensity gradation, and \cite{eloyan2020tumor} described an approach for estimation of intensity histogram and shape features taking into account the correlation structure of pixel intensities. Additionally, \cite{aerts2014decoding} showed the computation of shape and texture features from the image intensities. While radiomic features computed from grayscale images can be used to compare populations of images, they also inevitably lose information about the grayscale images of interest. This loss reduces the power of downstream statistical analyses.

The major contributions that we present in this paper all center around developing a generalization of the ECT for grayscale images. A summary of these include:
\begin{enumerate}
    \item Utilizing o-minimal structures \citep{van1998tame} and the framework proposed in \cite{baryshnikov2010euler} for Euler integration over definable functions, we introduce the Euler-Radon transform (ERT). This ERT serves as a topological summary statistic that aims to unify the ECT \citep{turner2014persistent, ghrist2018persistent}, the weighted Euler curve transform (WECT) \citep{jiang2020weighted}, and the marginal Euler curve (MEC) \citep{kirveslahti2023representing}. Notably, when the ERT is employed on binary images, it coincides with the ECT. Moreover, akin to the framework presented in \cite{kirveslahti2023representing}, our ERT does not rely on the diffeomorphism assumption posited in many state-of-the-art methods \citep[e.g.,][]{ashburner2007fast}. Lastly, unlike previous TDA-based methods that rely on BIS and standard radiomic approaches, the ``Schapira inversion formula" \citep{schapira1995tomography} guarantees that our proposed ERT summary preserves all information within the pixel intensity arrays of grayscale images.

    \item Using the proposed ERT as a building block, we also introduce the smooth Euler-Radon transform (SERT). When applied to binary images, the SERT coincides with the SECT \citep{crawford2020predicting, meng2022randomness}. Importantly, the SERT represents the grayscale images as functional data. Numerous tools from functional data analysis (FDA) \citep{hsing2015theoretical} and functional analysis \citep{brezis2011functional} are applicable with the SERT \citep[e.g., the Karhunen–Loève expansion][]{alexanderian2015brief}.

    \item Using the ERT and SERT, we propose several statistical algorithms aimed at detecting significant differences between paired collections of grayscale images (e.g., analyzing CT scans of malignant and benign tumors from Figure \ref{fig: Grayscale_image_from_Prof_Duan}). Particularly, our proposed algorithms combine the Karhunen–Loève expansion with a permutation-based approach. Using simulations, we show that our hypothesis test is uniformly powerful across various scenarios and does not suffer from type I error inflation. These algorithms are a generalization of results presented in \cite{meng2022randomness}.
\end{enumerate}
Beyond the contributions outlined above, due to the resemblance between the ECT and our proposed ERT, this paper paves the way for generalizing a series of ECT-based methods \citep{crawford2020predicting, wang2021statistical, marsh2022detecting} from binary images to grayscale images. 

The remainder of this paper is organized as follows. In Section \ref{section: Representations of Grayscale Images via the Euler Calculus}, we review some existing representations of grayscale images. In Section \ref{section: The Euler-Radon Transform of Grayscale Functions}, we first review the basics of Euler calculus that have been described in \cite{van1998tame}, \cite{baryshnikov2010euler}, and \cite{ghrist2014elementary}. Then, using Euler calculus, we define the ERT and detail its properties. In Section \ref{section: Existing Frameworks}, we comprehensively describe the relationship between our proposed ERT and existing topological representations of grayscale images \citep{jiang2020weighted, kirveslahti2023representing}. Section \ref{subsection: proof-of-concept simulation} offers a proof-of-concept example that illustrates the behavior of our proposed SERT. In Section \ref{section: Alignment of Images and the Invariance of ERT}, we propose an ERT-based alignment approach for preprocessing grayscale images prior to statistical inference without relying on correspondences between them. In Section \ref{section: Statistical Inference of Grayscale Functions}, we propose several statistical algorithms designed to differentiate between two sets of grayscale images. The performance of the proposed algorithms is presented in Section \ref{section: Numerical Experiments} using simulations. Lastly, we conclude this paper in Section \ref{section: Conclusions and Future Research} and discuss several future research directions. The proofs of all theorems in this paper are provided in Appendix \ref{Appendix: Proofs} unless otherwise stated.

\section{Representations of Grayscale Images}\label{section: Representations of Grayscale Images via the Euler Calculus}

The statistical inference on grayscale images necessitates an appropriate representation of grayscale images. For the application of statistical methods akin to the ECT-based methods in \cite{crawford2020predicting}, \cite{wang2021statistical}, \cite{meng2022randomness}, and \cite{marsh2022detecting}, it is imperative that our proposed representation of grayscale images aligns as closely as possible with the ECT. Prior to delving into our proposed approach, this section offers a review of some existing representations of grayscale images. 


In an attempt to employ an ECT-like method to grayscale images of GBM tumors, \cite{jiang2020weighted} transformed each grayscale image into the discrete representation $\sum a_\sigma\cdot\mathbbm{1}_\sigma(x)$ in their preprocessing step, where $\sum$ denotes a finite sum, each $\sigma$ is a simplicial complex, and each weight $a_\sigma$ belongs to $\mathbb{N}:=\{0,1,2,\ldots\}$. In other words, each grayscale image is transformed into a weighted sum of a finite set of simplexes (referred to as a ``weighted complex" therein). Subsequently, \cite{jiang2020weighted} introduced the weighted Euler curve transform (WECT) tailored for weighted complexes. A limitation of the WECT method is the dependency of data analysis results on the discretization $\sum a_\sigma\cdot\mathbbm{1}_\sigma(x)$, unless the original image is discrete and can be exactly depicted as a weighted complex. When dealing with a high-resolution grayscale image, the deviation between the original image and its weighted complex discretization can be substantial. Importantly, the reliance of data analysis on the discretization complicates theoretical analysis, which necessitates discretization-free representations of grayscale images.

Many state-of-the-art methods for analyzing grayscale images rely on the diffeomorphism assumption that the grayscale images being analyzed are diffeomorphic \citep[e.g.,][]{ashburner2007fast}. To obviate the diffeomorphism assumption, \cite{kirveslahti2023representing} introduced two novel representations called the lifted Euler characteristic transform (LECT) and the super LECT (SELECT) utilizing a lifting technique. In contrast to the approach proposed by \cite{jiang2020weighted}, neither the LECT nor SELECT depends on the discretization preprocessing for grayscale images. However, the lifting procedure used in both the LECT and SELECT introduces an additional dimension. For example, the LECT represents a $d$-dimensional grayscale image as a function on a $(d+1)$-dimensional manifold (see Eq.~\eqref{eq: def of LECT} for details). This increase in dimensionality distinguishes the LECT and SELECT distinctly from the ECT, precluding straightforward theoretical and methodological generalizations of ECT-based methods to grayscale images \citep{ghrist2018persistent, crawford2020predicting, wang2021statistical, meng2022randomness, marsh2022detecting}. Additionally, the augmented dimensionality elevates the computational expense associated with subsequent statistical inference.

A grayscale image can also be represented using function values where $g(x)$ is used to denote the grayscale intensity at point (pixel) $x$. One straightforward approach to quantify differences between two images can then be done using the Lebesgue integral $\int \vert g^{(1)}(x) - g^{(2)}(x)\vert^2 \,dx$ for grayscale images represented by functions $g^{(1)}$ and $g^{(2)}$. Representing grayscale images through topological invariants presents distinct advantages compared to using function values. These advantages have been documented by \cite{kirveslahti2023representing}. As a toy example, the Euler characteristic of a singular point is 1; whereas, its Lebesgue integral is 0.

\section{Euler-Radon Transform of Grayscale Functions}\label{section: The Euler-Radon Transform of Grayscale Functions}

In this section, we generalize the ECT from binary images to grayscale images via the Euler calculus \citep{baryshnikov2010euler, ghrist2014elementary}. This generalization does not depend on the discretization of grayscale images nor does it introduce an additional dimension. More precisely, we introduce the Euler-Radon transform (ERT) as a means to perform statistical inference on grayscale images. This approach serves as a natural extension of both the ECT and WECT, and has a direct connection to the LECT and SELECT. Furthermore, to apply functional data analysis, we propose a smooth version of the ERT --- the smooth Euler-Radon transform (SERT), which is a generalization of the SECT. 

\subsection{Outline}

To elucidate the conceptual foundation of our ERT as an extension of the ECT, we revisit the ECT of shapes $K$ from the viewpoint of Euler calculus. The discussion in this subsection primarily serves a heuristic purpose; a thorough and precise exposition is given in Section \ref{section: Euler-Randon Transform}. 

Without loss of generality, we assume $K\subseteq B_{\mathbb{R}^d}(0,R):=\{x\in\mathbb{R}^d:\Vert x\Vert < R\}$ for a prespecified radius $R>0$ and $K$ is compact. The ECT of $K$ is a collection of Euler characteristics $\{\chi(K_t^\nu):\, (\nu,t)\in\mathbb{S}^{d-1}\times[0,2R]\}$, where $\chi\left(K_t^\nu\right) \text{ is the Euler characteristic of }K_t^\nu$ and $K_t^\nu:=\{x\in K:\, x\cdot \nu\le t-R\}$ \citep[also][]{meng2022randomness}. The Euler characteristic $\chi\left(K_t^\nu\right)$ can be represented as follows using the Euler functional $\int(\cdot) d\chi$ \citep[e.g.,][]{ghrist2018persistent}
\begin{align}\label{eq: Euler function representation of ECT}
\begin{aligned}
& \chi\left(K_t^\nu\right) = \int \mathbbm{1}_K(x) \cdot R(x,\nu,t)\, d\chi(x),\\
&\text{where }\ \ R(x,\nu,t) := \mathbbm{1}\left\{\left(x,\nu,t\right)\in B_{\mathbb{R}^d}(0,R) \times\mathbb{S}^{d-1}\times[0,T]\,:\, x\cdot\nu\le t-R\right\} \text{ and }T:=2R.
\end{aligned}
\end{align}
In essence, this formulation can be viewed as a generalized Radon transform of the function $\mathbbm{1}_K$ \citep{schapira1995tomography, baryshnikov2011inversion, ghrist2018persistent}. Here, the indicator function $\mathbbm{1}_K$ represents a binary image, as referenced in Eq.~\eqref{eq: generic form of black-white images}. In the development of the ERT, our primary objective is to substitute the indicator function $\mathbbm{1}_K$ in Eq.~\eqref{eq: Euler function representation of ECT} with a real-valued function $g$ representing a grayscale image.

\begin{figure}[!tbp]
\centering
\subfloat[Grayscale image $g(x)$]{
\centering
\includegraphics[width=84mm]
{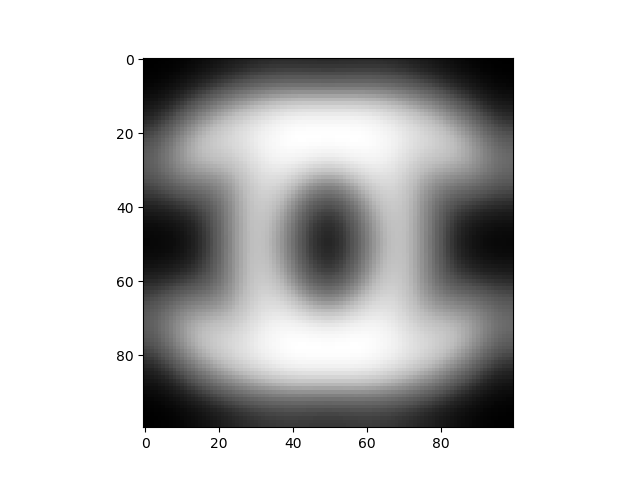}
}
\hfill
\subfloat[Grayscale image $\lambda\cdot g(x)$ with $\lambda=-0.5$]{
\centering

\includegraphics[width=84mm]{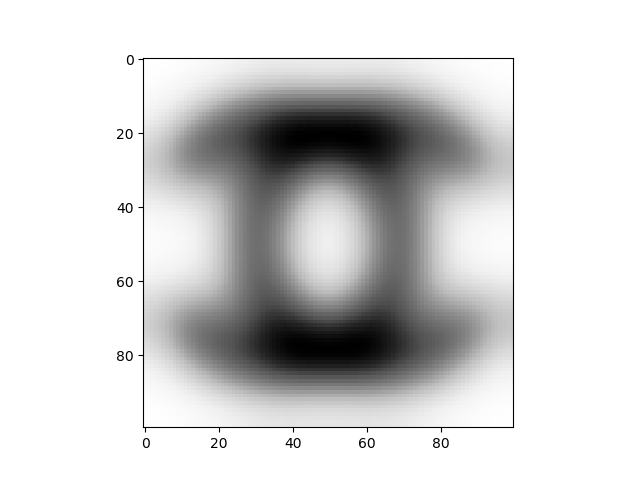}
}
\caption{Panel (a) shows a grayscale image represented by a function $g$ in the form of Eq.~\eqref{eq: general grayscale function}. Here, each white pixel corresponds to value 1 and each black pixel corresponds to 0. Panel (b) presents a rescaled version of $g$, where the negative coefficient $\lambda$ induces a ``white-to-black" transition. In this image, each white pixel corresponds to value 0 and each black pixel corresponds to value -0.5.}
\label{fig:grayscale_image}
\end{figure}

A conventional approach to modeling a grayscale image $g$ takes the form
\begin{align}\label{eq: general grayscale function}
g(x)=\left\{
\begin{aligned}
& 1, \ \ \ \mbox{ if the color at point (pixel) $x$ is white},\\
& \#, \ \ \ \mbox{the value of the grayscale intensity at point (pixel) $x$ scaled between 0 and 1,}\\
& 0, \ \ \ \mbox{ if the color at point (pixel) $x$ is black}.
\end{aligned}
\right.
\end{align}
For reference, we encourage readers to compare Eq.~\eqref{eq: general grayscale function} with Eq.~\eqref{eq: generic form of black-white images}. An example of a grayscale image is presented in Figure~\ref{fig:grayscale_image}(a). In many applications, it becomes important to rescale a grayscale image where $g(x)\mapsto \lambda\cdot g(x)$ for $\lambda\in\mathbb{R}$. However, the configuration in Eq.~\eqref{eq: general grayscale function} is not invariant under this rescaling transform. Therefore, we will no longer view a grayscale image as a function taking values between $[0,1]$. Instead, throughout this paper, we will treat each grayscale image generally as a bounded function $g$, which we subsequently term a grayscale function (see Section \ref{section: Euler-Randon Transform} for details). Obviously, a rescaled bounded function is still bounded, although it may take values outside $[0,1]$. Furthermore, for negative $\lambda$, this rescaling inverts the grayscale spectrum, reminiscent of a ``white-to-black" transition (an example of this is available in Figure~\ref{fig:grayscale_image}(b)). By encompassing general bounded functions — beyond those merely within the $[0,1]$ range — our approach broadens its applicability, encapsulating scalar fields that are beyond the structure posited in Eq.~\eqref{eq: general grayscale function} (e.g., the realizations of Gaussian random fields) \citep{adler2007random, bobrowski_borman_2012}.

As previously mentioned, the key to generalizing the ECT to the ERT is replacing the indicator function $\mathbbm{1}_K$ in Eq.~\eqref{eq: Euler function representation of ECT} with a real-valued function $g$ representing a grayscale image. That is, the ERT of a $d$-dimensional grayscale image $g$ can be heuristically expressed as follows
\begin{align}\label{eq: formal generalization of ECT to real-valued input}
\int g(x) \cdot R(x,\nu,t) \, d\chi(x),
\end{align}
which is a function of $(\nu,t)$ on the $d$-dimensional manifold $\mathbb{S}^{d-1}\times[0,T]$. If the grayscale function $g$ in Eq.~\eqref{eq: formal generalization of ECT to real-valued input} takes only finitely many values in $\mathbb{Z}$ (e.g., $g$ is a discretized version of some underlying high-resolution grayscale image), the integration in Eq.~\eqref{eq: formal generalization of ECT to real-valued input} can be easily defined via the Euler integration over constructible functions (see Section 3.6 of \cite{ghrist2014elementary}) and is equal to the WECT under some tameness conditions \citep{jiang2020weighted}. When the grayscale function $g$ has an ``infinitely fine resolution" (i.e., $g$ is generally real-valued), the rigorous definition of the integration in Eq.~\eqref{eq: formal generalization of ECT to real-valued input} necessitates the techniques developed in \cite{baryshnikov2010euler}. A more elaborate discussion on this, using o-minimal structures, will be discussed in Section \ref{section: Euler Calculus}.

\subsection{Euler Calculus via O-minimal Structures}\label{section: Euler Calculus}

The primary objective of this subsection is to revisit Euler calculus --- to prepare for a rigorous version of the heuristic integration presented in Eq.~\eqref{eq: formal generalization of ECT to real-valued input}. This will result in the precise definition of the ERT which we detail in Section \ref{section: Euler-Randon Transform}. 

\paragraph*{O-minimal Structures and Definable Functions.} Euler calculus has been extensively examined in the literature \citep{baryshnikov2010euler, ghrist2014elementary}. Lebesgue integration is established for measurable functions. In a manner similar to Lebesgue integration, Euler integration begins by determining a set of functions that act as integrands for Euler integrals. These specific functions are termed ``definable functions" and are specified through o-minimal structures. The definition of o-minimal structures is available in \cite{van1998tame} and rephrased as follows:
\begin{definition}\label{def: definability}
An \textbf{o-minimal structure on $\mathbb{R}$} is a sequence $\mathcal{O}=\{\mathcal{O}_n\}_{n\ge1}$ satisfying the following axioms:\\ 
(i) for each $n$, the collection $\mathcal{O}_n$ is a Boolean algebra of subsets of $\mathbb{R}^n$; \\
(ii) $A\in\mathcal{O}_n$ implies $A\times \mathbb{R}\in\mathcal{O}_{n+1}$ and $\mathbb{R}\times A\in\mathcal{O}_{n+1}$; \\
(iii) $\{(x_1,\ldots,x_n)\in\mathbb{R}^n \,:\, x_i=x_j\}\in\mathcal{O}_n$ for all $1\le i < j\le n$; \\
(iv) $A\in\mathcal{O}_{n+1}$ implies $\pi(A)\in\mathcal{O}_{n}$, where $\pi:\mathbb{R}^{n+1}\rightarrow\mathbb{R}^n$ is the projection map on the first $n$ coordinates; \\
(v) $\{r\}\in\mathcal{O}_1$ for all $r\in\mathbb{R}$, and $\{(x,y)\in\mathbb{R}^2 \,:\, x<y\}\in\mathcal{O}_2$; \\
(vi) the only sets in $\mathcal{O}_1$ are the finite unions of open intervals (with $\pm\infty$ endpoints allowed) and points. \\
A set $K$ is said to be \textbf{definable} with respect to $\mathcal{O}$ if $K\in\mathcal{O}$ (i.e., there exists an $n$ such that $K\in\mathcal{O}_n$).
\end{definition}
\noindent A typical example of o-minimal structures is the collection of \textbf{semialgebraic sets} which is defined as: a set $K\subseteq\mathbb{R}^n$ is said to be a semialgebraic subset of $\mathbb{R}^n$ if it is a finite union of sets of the following form
\begin{align*}
    \Big\{x\in\mathbb{R}^n \,:\, p_1(x)=0,\,\ldots,\, p_k(x)=0,\, q_1(x)>0,\,\ldots,\, q_l(x)>0\Big\},
\end{align*}
where $k$ and $l$ are positive integers, and $p_1,\ldots,p_k, q_1,\ldots,q_l$ are real polynomial functions on $\mathbb{R}^n$ (also see Chapter 2 of \cite{van1998tame}). Specifically, if we let $\mathcal{O}_n=$ the collection of semialgebraic subsets of $\mathbb{R}^n$, then $\mathcal{O}=\{\mathcal{O}_n\}_{n\ge1}$ is an o-minimal structure. Since the unit sphere $\mathbb{S}^{n-1}=\{x\in\mathbb{R}^n :\, \Vert x\Vert^2 - 1 = 0\}$ is defined using the polynomial $\Vert x\Vert^2 - 1$, it is definable with respect to this o-minimal structure. It is also true for the open ball $B_{\mathbb{R}^n}(0,R)=\left\{x\in\mathbb{R}^n:\, \Vert x\Vert^2-R<0\right\}$ centered at the origin with radius $R>0$. Throughout this paper, to include many common sets in our framework, we assume the o-minimal structures $\mathcal{O}$ of interest to satisfy the following assumption:
\begin{assumption}\label{Assumption: basic requirements for o-minimal structures of interest}
    The o-minimal structure $\mathcal{O}$ of interest contains all semialgebraic sets.
\end{assumption}
\noindent Importantly, under Assumption \ref{Assumption: basic requirements for o-minimal structures of interest}, we are able to apply the ``triangulation theorem" and the ``trivialization theorem" as presented in \cite{van1998tame}. In particular, the ``triangulation theorem" (see Chapter 8 of \cite{van1998tame}) indicates that each definable set is homeomorphic to a polyhedron, which subsequently suggests that each definable set is Borel-measurable (see ``Definition 1.21" of \cite{klenke2013probability}).

In addition to o-minimal structures, we need the concepts in the following definition, which are also available in \cite{baryshnikov2010euler}.
\begin{definition}\label{def: definability}
Suppose we have an o-minimal structure $\mathcal{O}$ on $\mathbb{R}$. Let $X$ be definable and $Y\subseteq\mathbb{R}^N$ for some positive integer $N$.
\begin{enumerate}
\item A function $g: X\rightarrow Y$ is said to be \textbf{definable} if its graph $\Gamma(g):=\{(x,y)\in X\times Y : y=g(x)\}$ is definable (i.e., $\Gamma(g)\in\mathcal{O}$).
\item Let $\operatorname{Def}(X;Y)$ denote the collection of compactly supported definable functions $g: X\rightarrow Y$. Denote $\operatorname{Def}(X):=\operatorname{Def}(X;\mathbb{R})$.
\item Denote $\operatorname{CF}(X):=\operatorname{Def}(X;\,\mathbb{Z})$. Any function in $\operatorname{CF}(X)$ is called a \textbf{constructible function}.
\item If a definable set is also compact, we call this set a \textbf{constructible set}; the collection of all constructible subsets of $X$ is denoted by $\operatorname{CS}(X)$. Obviously, for any constructible subset $K \subseteq X$, $K\mapsto \mathbbm{1}_K$ is an injective map from $\operatorname{CS}(X)$ to $\operatorname{CF}(X)$.
\end{enumerate}
\end{definition} 
\noindent It is simple to verify that, under Assumption \ref{Assumption: basic requirements for o-minimal structures of interest}, the function $R(x,\nu,t)$ defined in Eq.~\eqref{eq: Euler function representation of ECT} is a constructible function where $R\in\operatorname{CF}(\mathbb{R}^{2d+1})$, and $\{(x,\nu,t)\in\mathbb{R}^{2d+1}\,:\,R(x,\nu,t)=1\}$ is a constructible set.

The term ``tameness" is frequently used in the TDA literature. While ``tame" is often used synonymously with ``definable," certain works \citep[e.g.,][]{bobrowski_borman_2012} attribute ``tameness" to a notion that slightly diverges from the definability outlined in Definition \ref{def: definability}. To ensure clarity, we will use ``definable" in lieu of ``tame." An in-depth exploration of the interplay between definability and tameness can be found in Appendix \ref{section: Definability vs. Tameness}.

\paragraph*{Euler Characteristic.} In Euler calculus, the Euler characteristic $\chi(\cdot)$ plays a role analogous to the Lebesgue measure in the Lebesgue integral theory. For any definable set $K\in\mathcal{O}$, the cell decomposition theorem (CDT) indicates that there exists a partition $\mathcal{P}$ of $K$, where $\mathcal{P}$ is a finite collection of cells (see Chapter 3 of \cite{van1998tame} for the definition of cells and CDT). Then, the Euler characteristic $\chi(K)$ of the definable set $K$ is defined as follows
\begin{align}\label{eq: EC defined using o-min structures}
    \chi(K) := \sum_{C\in\mathcal{P}} (-1)^{\operatorname{dim}(C)},
\end{align}
where $\operatorname{dim}(C)$ denotes the dimension of the cell $C$ (see Chapter 4 of \cite{van1998tame} for the definition of dimensions). One can also show that the value $\chi(K)$ does not depend on the choice of partition $\mathcal{P}$ (see Chapter 4 of \cite{van1998tame}, Section 2 therein). As discussed at the beginning of \cite{baryshnikov2010euler}, the Euler characteristic in Eq.~\eqref{eq: EC defined using o-min structures} is equivalently defined via the Borel-Moore homology, where $\chi(K)=\sum_{n\in\mathbb{Z}}(-1)^n \cdot \operatorname{dim} H_n^{BM}(K;\mathbb{R})$ with $H_*^{BM}$ denoting the Borel-Moore homology \citep{bredon2012sheaf}. Note that $\chi(K)$ is a homotopy invariant if $K$ is compact but is only a homeomorphism invariant in general.

\paragraph*{Euler Integration over Constructible Functions.} For any constructible function $g\in\operatorname{CF}(X)$, its Euler integral is defined as follows (also see Section 3.6 of \cite{ghrist2014elementary})
\begin{align}\label{eq: Euler integration}
    \int_X g(x) \, d\chi(x):=\sum_{n=-\infty}^{+\infty} n\cdot\chi\left(\{x\in X: \, g(x)=n\}\right).
\end{align}
Particularly, $\int_{X} \mathbbm{1}_K(x) \,d\chi(x)=\chi(K)$ for all $K\in\operatorname{CS}(X)$. Using the Euler integration $\int (\cdot)\,d\chi$ defined in Eq.~\eqref{eq: Euler integration}, we may represent the ECT by the following$^\dagger$\footnote{$\dagger$: We use $\{f(x)\}_{x\in X}$ to denote the function $f:X\rightarrow\mathbb{R},\, x\mapsto f(x)$ throughout this paper.}
\begin{align}\label{eq: ECT via Euler integration}
    \begin{aligned}
        \operatorname{ECT}:\ \ &\operatorname{CS}\left(B_{\mathbb{R}^d}(0,R)\right) \rightarrow \mathbb{Z}^{\mathbb{S}^{d-1}\times[0,T]},\\
        & K \mapsto \operatorname{ECT}(K)=\left\{\chi(K_t^\nu)=\int_{B_{\mathbb{R}^d}(0,R)} \mathbbm{1}_K(x)\cdot R(x,\nu,t) \, d\chi(x)\right\}_{(\nu,t)\in\mathbb{S}^{d-1}\times[0,T]},
    \end{aligned}
\end{align}
where $T=2R$ and $R(x,\nu,t)$ is the indicator function defined in Eq.~\eqref{eq: Euler function representation of ECT}. Eq.~\eqref{eq: ECT via Euler integration} is a rigorous version of Eq.~\eqref{eq: Euler function representation of ECT} in the sense that Eq.~\eqref{eq: ECT via Euler integration} specifies the collection of shapes for which the ECT is well-defined. Furthermore, $\chi(K_t^\nu)$ varies ``definably" with respect to $(\nu,t)$, which is precisely presented by the following theorem.
\begin{theorem}\label{thm: tameness theorem}
    Suppose $K\in \operatorname{CS}(B_{\mathbb{R}^d}(0,R))$. Then, we have the following:
    \begin{enumerate}
        \item $\chi(K_t^\nu)$ takes only finitely many values as $(\nu,t)$ runs through $\mathbb{S}^{d-1}\times[0,T]$. In addition, for each integer $z \in \mathbb{Z}$, the set $\{(\nu,t)\in \mathbb{S}^{d-1}\times[0,T]:\, \chi(K_t^\nu)=z\}$ is definable; hence, the function $(\nu,t) \mapsto \chi(K_t^\nu)$ is a definable function.
        \item The function $(\nu,t) \mapsto \chi(K_t^\nu)$ is Borel-measurable.
        \item For each fixed direction $\nu\in\mathbb{S}^{d-1}$, the function $t\mapsto \chi(K_t^\nu)$ has at most finitely many discontinuities. More precisely, there are points $a_1<\ldots<a_k$ in $(0,T)$ such that on each interval $(a_j, a_{j+1})$ with $a_{k+1}=T$, the function is constant.
    \end{enumerate}
\end{theorem}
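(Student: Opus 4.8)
The plan is to realize the entire family $\{K_t^\nu\}_{(\nu,t)}$ as the fibers of a single definable projection and then use the trivialization theorem to force $\chi$ to be locally constant in the parameter $(\nu,t)$. First I would introduce the total space
\[
W := \left\{(x,\nu,t)\in B_{\mathbb{R}^d}(0,R)\times\mathbb{S}^{d-1}\times[0,T]\,:\, x\in K,\ x\cdot\nu\le t-R\right\},
\]
which is precisely the support of the indicator $R(x,\nu,t)$ from Eq.~\eqref{eq: Euler function representation of ECT} intersected with $K\times\mathbb{S}^{d-1}\times[0,T]$. Under Assumption~\ref{Assumption: basic requirements for o-minimal structures of interest} this set is definable, since $K$ is definable by hypothesis, the sphere and the interval are semialgebraic, and the constraint $x\cdot\nu-t+R\le0$ is cut out by a polynomial and hence semialgebraic. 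Let $p:W\to\mathbb{S}^{d-1}\times[0,T]$ be the projection $(x,\nu,t)\mapsto(\nu,t)$; its fiber over $(\nu,t)$ is a copy of $K_t^\nu$, which, being a closed subset of the compact set $K$, is itself compact and definable, so each $\chi(K_t^\nu)$ is well defined via Eq.~\eqref{eq: EC defined using o-min structures}.

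The engine of the argument is the trivialization theorem (Hardt's theorem; Chapter 9 of \cite{van1998tame}), which applies to $p$ because it is a definable map with definable base. It produces a finite partition $\mathbb{S}^{d-1}\times[0,T]=S_1\sqcup\cdots\sqcup S_m$ into definable pieces together with definable fibers $F_1,\dots,F_m$ such that $p^{-1}(S_i)$ is definably homeomorphic to $S_i\times F_i$ over $S_i$. Restricting such a homeomorphism to a single fiber shows $K_t^\nu$ is definably homeomorphic to $F_i$ for every $(\nu,t)\in S_i$. Since the o-minimal Euler characteristic is invariant under definable homeomorphisms (indeed under definable bijections; Chapter 4 of \cite{van1998tame}), I obtain $\chi(K_t^\nu)=\chi(F_i)$ throughout $S_i$. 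This yields part (i) at once: the function attains only the finitely many values $\chi(F_1),\dots,\chi(F_m)$; for each $z\in\mathbb{Z}$ the level set $\{(\nu,t):\chi(K_t^\nu)=z\}=\bigcup_{i:\,\chi(F_i)=z}S_i$ is a finite union of definable sets, hence definable; and as its graph equals $\bigcup_i S_i\times\{\chi(F_i)\}$, the map $(\nu,t)\mapsto\chi(K_t^\nu)$ is definable.

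Part (ii) is then immediate: each level set is definable and therefore Borel-measurable (as already noted via the triangulation theorem), and a finite-valued map with Borel level sets is Borel-measurable. For part (iii) I would fix $\nu\in\mathbb{S}^{d-1}$ and slice: the set $\{t\in[0,T]:\chi(K_t^\nu)=z\}$ is the fiber over $\nu$ of the definable set from part (i), hence a definable subset of $\mathbb{R}$, and by o-minimality such a set is a finite union of points and open intervals. Collecting the finitely many endpoints that arise across the finitely many values $z$ produces points $a_1<\cdots<a_k$ in $(0,T)$; no transition can occur strictly inside any $(a_j,a_{j+1})$, so $t\mapsto\chi(K_t^\nu)$ is constant on each such interval, as claimed. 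Equivalently, one may simply reapply the trivialization theorem to the one-parameter family $p^{-1}(\{\nu\}\times[0,T])\to[0,T]$.

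The main obstacle is not any single computation but securing the two structural inputs cleanly: that $p$ genuinely satisfies the hypotheses of the trivialization theorem, and that the combinatorial Euler characteristic of Eq.~\eqref{eq: EC defined using o-min structures} really is a definable-homeomorphism invariant on compact definable sets. Compactness of the fibers $K_t^\nu$ is what makes the definable Euler characteristic well behaved and permits the free passage between $\chi(K_t^\nu)$ and $\chi(F_i)$, while the definability of $W$ — the one place where Assumption~\ref{Assumption: basic requirements for o-minimal structures of interest} is essential, in order to accommodate the bilinear pairing $x\cdot\nu$ and the sphere $\mathbb{S}^{d-1}$ — is the remaining bookkeeping.
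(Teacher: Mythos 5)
Your proposal is correct, and it starts from the same place the paper does: both arguments package the whole family $\{K_t^\nu\}$ into the single definable set $\{(\nu,t,x): x\in K,\ x\cdot\nu\le t-R\}$ (your $W$, the paper's $S$), with Assumption~\ref{Assumption: basic requirements for o-minimal structures of interest} doing exactly the work you identify. Where you diverge is in the tool used to extract the conclusion. The paper simply invokes the rephrased ``(2.10) Proposition'' of Chapter 4 of \cite{van1998tame} (Lemma~\ref{lemma: (2.10) Proposition of Dries} in the appendix), which states verbatim that $a\mapsto\chi(S_a)$ takes finitely many values and has definable level sets; you instead rederive that statement from Hardt's trivialization theorem together with the invariance of the o-minimal Euler characteristic under definable bijections. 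Your route is self-contained modulo two other standard o-minimal facts, and it makes visible \emph{why} $\chi(K_t^\nu)$ is locally constant on a finite definable partition of the base, which the black-boxed proposition hides; the paper's route is shorter. For part~(iii) the paper appeals to the monotonicity theorem, whereas you slice the definable level sets at fixed $\nu$ and apply axiom~(vi) of the o-minimal structure directly; for an integer-valued definable function these give the same conclusion, and your version is arguably the more elementary of the two. One small correction: the passage from $\chi(K_t^\nu)$ to $\chi(F_i)$ does not need compactness of the fibers --- the combinatorial Euler characteristic of Eq.~\eqref{eq: EC defined using o-min structures} is defined for every definable set and is invariant under definable bijections regardless of compactness (compactness only matters for homotopy invariance) --- so the ``main obstacle'' you flag there is not actually an obstacle.
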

\noindent The proof of Theorem \ref{thm: tameness theorem} is given in Appendix \ref{proof of thm: tameness theorem}. The third result of Theorem \ref{thm: tameness theorem} indicates that the ``tameness assumption" in (an old version of) \cite{meng2022randomness} is redundant if the shapes of interest are definable. Recall that the SECT of $K\in \operatorname{CS}(B_{\mathbb{R}^d}(0,R))$ is defined by the following \citep{crawford2020predicting, meng2022randomness}
\begin{align}\label{eq: def of SECT}
    \operatorname{SECT}(K)(\nu,t):=\int_0^t \chi(K_\tau^\nu)\,d\tau - \frac{t}{T} \int_0^T \chi(K_\tau^\nu)\,d\tau,\ \ \ \text{for all }(\nu,t)\in\mathbb{S}^{d-1}\times[0,T].
\end{align}
Theorem \ref{thm: tameness theorem} also guarantees that the Lebesgue integrals in Eq.~\eqref{eq: def of SECT} are well-defined and that the map $(\nu,t) \mapsto \operatorname{SECT}(K)(\nu,t)$ is Borel-measurable.

\paragraph*{Euler Integration over Definable Functions.} The Euler integration $\int_X (\cdot)d\chi$ defined in Eq.~\eqref{eq: Euler integration} is exclusively tailored for integer-valued functions within $\operatorname{CF}(X)=\operatorname{Def}(X;\mathbb{Z})$ (e.g., the indicator function $\mathbbm{1}_K$ that represents a binary image). Consequently, it cannot accommodate real-valued grayscale functions $g$ possessing infinitely fine resolutions (e.g., see Figures \ref{fig: Grayscale_image_from_Prof_Duan} and \ref{fig:grayscale_image}). Therefore, to provide a rigorous definition of the integrals in Eq.~\eqref{eq: formal generalization of ECT to real-valued input}, we need a framework that extends beyond the scope of Eq.~\eqref{eq: Euler integration}.

When the integrands are real-valued functions, one needs step-function approximations. We first review the definitions of floor and ceiling functions. For any real number $s$, $\lfloor s\rfloor :=$ the greatest integer less than or equal to $s$, and $\lceil s\rceil :=$ the least integer greater than or equal to $s$. Based on the functional $\int_X (\cdot)d\chi$ in Eq.~\eqref{eq: Euler integration}, \cite{baryshnikov2010euler} proposed the Euler integration functionals $\int_X (\cdot)\,\lfloor d\chi\rfloor$, $\int_X (\cdot)\,\lceil d\chi\rceil$, and $\int_X (\cdot)\, [d\chi]$ for real-valued definable function $g\in\operatorname{Def}(X;\mathbb{R})$ as follows
\begin{align}\label{eq: def of int [dx]}
    \begin{aligned}
       \text{(floor version)}\ \ \ \ \  & \int_X g(x) \,\lfloor d\chi(x)\rfloor := \lim_{n\rightarrow\infty} \left\{ \frac{1}{n}\int_X \lfloor n\cdot g(x)\rfloor \, d\chi(x)\right\}, \\
    \text{(ceiling version)}\ \ \ \ \  & \int_X g(x) \,\lceil d\chi(x)\rceil := \lim_{n\rightarrow\infty} \left\{ \frac{1}{n}\int_X \lceil n\cdot g(x)\rceil \, d\chi(x)\right\}, \\
    \text{(averaged version)}\ \ \ \ \  & \int_X g(x) \, [d\chi(x)] := \frac{1}{2} \left( \int_X g(x) \,\lfloor d\chi(x)\rfloor + \int_X g(x) \,\lceil d\chi(x)\rceil \right).
    \end{aligned}
\end{align}
\cite{baryshnikov2010euler} (``Lemma 3" therein) showed that the limits in Eq.~\eqref{eq: def of int [dx]} exist; hence, the functionals in Eq.~\eqref{eq: def of int [dx]} are well-defined. The following equation indicates that the functionals defined in Eq.~\eqref{eq: def of int [dx]} are generalizations of $\int_X (\cdot)\,d\chi$ where
\begin{align}\label{eq: dX = [dX] for integer-valued functions}
    \begin{aligned}
        \int_X \mathbbm{1}_K(x) \,\lfloor d\chi(x)\rfloor=\int_X \mathbbm{1}_K(x) \,\lceil d\chi(x)\rceil &= \int_X \mathbbm{1}_K(x) \,[d\chi(x)] = \int_X \mathbbm{1}_K(x) \,d\chi(x) = \chi(K),
    \end{aligned}
\end{align}
for all $K\in\operatorname{CS}(X)$. The proof of Eq.~\eqref{eq: dX = [dX] for integer-valued functions} is in Appendix \ref{section: Proof of eq: dX = [dX] for integer-valued functions}. However, for general integrands, $\int_X g(x) \,\lfloor d\chi(x)\rfloor$ and $\int_X g(x) \,\lceil d\chi(x)\rceil$ are not equal (see ``Lemma 1" of \cite{baryshnikov2010euler}). Neither $\int_X (\cdot)\,\lfloor d\chi\rfloor$ nor $\int_X (\cdot)\,\lceil d\chi\rceil$ is linear. What is more, neither of them is homogeneous --- they are only positively homogeneous (see \cite{baryshnikov2010euler} for details). Fortunately, the ``averaged version" $\int_X (\cdot)\,[d\chi]$ is homogeneous. This means that $\int_X \lambda\cdot g(x)\,[d\chi(x)] = \lambda\cdot \int_X g(x)\,[d\chi(x)]$ for all $\lambda\in\mathbb{R}$, which is implied by ``Lemmas 4 and 6" of \cite{baryshnikov2010euler}. 

Within the trio of functionals outlined in Eq.~\eqref{eq: def of int [dx]}, our study predominantly employs the averaged version, $\int_X (\cdot)[d\chi]$, to ensure the homogeneity of the proposed ERT. To elucidate, consider the ERT denoted as $\operatorname{ERT}: g\mapsto \operatorname{ERT}(g)$. Our objective is to maintain homogeneity such that $\operatorname{ERT}(\lambda\cdot g)=\lambda\cdot\operatorname{ERT}(g)$ holds universally for any $\lambda\in\mathbb{R}$. This homogeneity property is validated using the averaged version $\int_X (\cdot)[d\chi]$ as detailed in Theorem \ref{thm: homogeneity of ERT} in Section \ref{section: Euler-Randon Transform}. This choice not only streamlines the theoretical presentation but also yields computational efficiency in practical applications. For example, suppose we need to rescale (and maybe also white-to-black transition as presented in Figure \ref{fig:grayscale_image}) the grayscale function $g$ post $\operatorname{ERT}(g)$ computation. In that case, we may directly rescale the computed ERT --- meaning that we can compute $\lambda\cdot\operatorname{ERT}(g)$ instead of computing the ERT of the rescaled image $\lambda\cdot g$. Rescaling a computed ERT is much more efficient than computing the ERT of a rescaled image.

\subsection{Euler-Randon Transform}\label{section: Euler-Randon Transform}

In this subsection, we introduce the precise definition of the ERT for grayscale images. Without loss of generality, we postulate that all functions representing grayscale images of interest are defined on the open ball $B_{\mathbb{R}^d}(0,R)$ with a prespecified radius $R<\infty$ (after all, there is no infinitely large image in practice). We model grayscale images/functions by the following definition.
\begin{definition}\label{def: grayscale iamges}
    Any element in the function class $\mathfrak{D}_{R,d}$ defined as follows is called a \textbf{grayscale function} or \textbf{grayscale image}
    \begin{align*}
        \mathfrak{D}_{R,d} := \left\{ g\in \operatorname{Def}\left(B_{\mathbb{R}^d}(0,R) \,;\, \mathbb{R}\right) \,:\, \sup_{x}\vert g(x)\vert <\infty \text{ and }\operatorname{dist}\Big(\supp(g), \partial B_{\mathbb{R}^d}(0,R)\Big)>0\right\},
    \end{align*}
    where $\operatorname{dist}(A, B):=\inf\{\Vert a-b \Vert:\, a\in A \text{ and }b\in B\}$ denotes the distance between two sets. 
\end{definition}
\noindent The condition $\operatorname{dist}\left(\supp(g), \partial B_{\mathbb{R}^d}(0,R)\right)>0$ in Definition \ref{def: grayscale iamges} means that the support of every grayscale function is strictly smaller than the domain $B_{\mathbb{R}^d}(0,R)$. This condition simplifies the proofs of Theorem \ref{thm: SERT preserves all information of ERT} and Eq.~\eqref{eq: formula inversion formula of ERT} and can be easily satisfied (e.g., we can always enlarge the radius $R$ and extend $g$ by zero to satisfy the condition).

\paragraph*{Definition of the ERT.} Using the Euler integration $\int_X (\cdot)\,[d\chi]$ defined in Eq.~\eqref{eq: ECT via Euler integration}, we define the ERT of grayscale functions as follows
\begin{align}\label{eq: def of Euler-Radon transform}
\begin{aligned}
\operatorname{ERT}:\ \ & \mathfrak{D}_{R,d} \rightarrow \mathbb{R}^{\mathbb{S}^{d-1}\times[0,T]},\\
& g \mapsto \operatorname{ERT}(g)=\left\{\operatorname{ERT}(g)(\nu,t)\right\}_{(\nu,t)\in \mathbb{S}^{d-1}\times [0,T]}, \\
& \text{where }\ \operatorname{ERT}(g)(\nu,t):=\int_{B_{\mathbb{R}^d}(0,R)} g(x) \cdot R(x,\nu,t) \, [d\chi(x)],
\end{aligned}
\end{align}
and $T=2R$. We may also replace the averaged version $\int (\cdot)\,[d\chi]$ in Eq.~\eqref{eq: def of Euler-Radon transform} with the floor version $\int (\cdot)\,\lfloor d\chi \rfloor$ or ceiling version $\int (\cdot)\,\lceil d\chi \rceil$; the transforms corresponding to $\int (\cdot)\,\lfloor d\chi \rfloor$ and $\int (\cdot)\,\lceil d\chi \rceil$ are denoted as $\lfloor\operatorname{ERT}\rfloor(g)$ and $\lceil\operatorname{ERT}\rceil(g)$, respectively. Eq.~\eqref{eq: dX = [dX] for integer-valued functions} indicates that the ERT, as well as $\lfloor\operatorname{ERT}\rfloor(g)$ and $\lceil\operatorname{ERT}\rceil(g)$, is a generalization of the ECT in the following sense 
\begin{align}\label{eq: ERT is a generalization of ECT}
    \operatorname{ERT}(\mathbbm{1}_K) = \lfloor\operatorname{ERT}\rfloor(\mathbbm{1}_K) = \lceil\operatorname{ERT}\rceil(\mathbbm{1}_K) = \operatorname{ECT}(K), \ \ \ \text{ for all }K \in \operatorname{CS}\left(B_{\mathbb{R}^d}(0,R)\right).
\end{align}

\paragraph*{Properties of the ERT.} Here, we present several properties of the ERT that will be utilized in later sections. First, the homogeneity of $\int (\cdot)\,[d\chi]$ and ``Lemma 6" of \cite{baryshnikov2010euler} directly imply the following theorem (its proof is omitted)
\begin{theorem}\label{thm: homogeneity of ERT}
    Suppose $g\in\mathfrak{D}_{R,d}$. Then, we have the following:
    \begin{enumerate}
        \item $\lfloor\operatorname{ERT}\rfloor$ and $\lceil\operatorname{ERT}\rceil$ are positively homogeneous, such that $\lfloor\operatorname{ERT}\rfloor(\lambda\cdot g)=\lambda\cdot\lfloor\operatorname{ERT}\rfloor(g)$ and $\lceil\operatorname{ERT}\rceil(\lambda\cdot g)=\lambda\cdot\lceil\operatorname{ERT}\rceil(g)$ for all $\lambda>0$.
        \item $\operatorname{ERT}$ is homogeneous, such that $\operatorname{ERT}(\lambda\cdot g)=\lambda\cdot \operatorname{ERT}(g)$ for all $\lambda\in\mathbb{R}$.
    \end{enumerate}
\end{theorem}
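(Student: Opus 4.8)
The plan is to reduce both statements to the homogeneity properties of the three Euler integration functionals $\int(\cdot)\lfloor d\chi\rfloor$, $\int(\cdot)\lceil d\chi\rceil$, and $\int(\cdot)[d\chi]$ established in \cite{baryshnikov2010euler}, applied pointwise in $(\nu,t)$. Fix $(\nu,t)\in\mathbb{S}^{d-1}\times[0,T]$ and set $h(x):=g(x)\cdot R(x,\nu,t)$. The first thing I would verify is that $h\in\operatorname{Def}(B_{\mathbb{R}^d}(0,R);\mathbb{R})$: since $g$ is definable and $R(\cdot,\nu,t)$ is constructible (hence definable), their product is definable, and $\supp(h)\subseteq\supp(g)$ is compact with positive distance to $\partial B_{\mathbb{R}^d}(0,R)$ by Definition~\ref{def: grayscale iamges}; thus all three functionals are well-defined on $h$. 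The key elementary observation is that the scalar factors out of the product, $(\lambda g)(x)\cdot R(x,\nu,t)=\lambda\cdot\big(g(x)\cdot R(x,\nu,t)\big)=\lambda h(x)$ (because $R$ is $\{0,1\}$-valued, this is just the distributivity of scalar multiplication); hence each transform evaluated at $\lambda g$ equals the corresponding functional applied to $\lambda h$.

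For Part 1, I would invoke the positive homogeneity of the floor and ceiling functionals, namely $\int_B \lambda h\,\lfloor d\chi\rfloor=\lambda\int_B h\,\lfloor d\chi\rfloor$ and the analogous ceiling identity for $\lambda>0$, which is ``Lemma 6'' of \cite{baryshnikov2010euler}. As a sanity check, for rational $\lambda=p/q$ one recovers this directly from the definition in Eq.~\eqref{eq: def of int [dx]} by passing to the subsequence $n=qm$ in $\frac1n\int_B\lfloor n\lambda h\rfloor\,d\chi$; this subsequence converges to the same value as the full sequence because the limit in Eq.~\eqref{eq: def of int [dx]} exists, and letting $m\to\infty$ produces exactly the factor $\lambda$. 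Combining the factoring observation with this identity yields $\lfloor\operatorname{ERT}\rfloor(\lambda g)(\nu,t)=\lambda\,\lfloor\operatorname{ERT}\rfloor(g)(\nu,t)$ and $\lceil\operatorname{ERT}\rceil(\lambda g)(\nu,t)=\lambda\,\lceil\operatorname{ERT}\rceil(g)(\nu,t)$ for every $(\nu,t)$ and every $\lambda>0$, which is Part 1.

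For Part 2, the case $\lambda>0$ follows immediately by averaging the two identities from Part 1. For $\lambda\le 0$ I would use the full homogeneity of the averaged functional, which is the content of ``Lemmas 4 and 6'' of \cite{baryshnikov2010euler}. If a self-contained derivation of the negative case is preferred, it suffices to combine positive homogeneity with the floor--ceiling duality $\int_B(-h)\,\lfloor d\chi\rfloor=-\int_B h\,\lceil d\chi\rceil$ and $\int_B(-h)\,\lceil d\chi\rceil=-\int_B h\,\lfloor d\chi\rfloor$, both coming from $\lfloor -s\rfloor=-\lceil s\rceil$; averaging these gives $\int_B(-h)[d\chi]=-\int_B h[d\chi]$, and together with positive homogeneity this upgrades to $\int_B(\lambda h)[d\chi]=\lambda\int_B h[d\chi]$ for all $\lambda\in\mathbb{R}$. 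Applying this pointwise in $(\nu,t)$ gives $\operatorname{ERT}(\lambda g)=\lambda\,\operatorname{ERT}(g)$.

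There is no deep difficulty here---the result is essentially a restatement of the functional-level homogeneity in \cite{baryshnikov2010euler}. The only points requiring genuine care are (i) confirming that $h=g\cdot R(\cdot,\nu,t)$ lands in $\operatorname{Def}(B_{\mathbb{R}^d}(0,R);\mathbb{R})$ so the functionals are defined (definability of the product, and compactness and boundary-distance of its support), and (ii) remembering that the floor and ceiling functionals are \emph{only} positively homogeneous, so the negative scalars in Part 2 genuinely require the averaged functional together with the floor--ceiling duality rather than the floor or ceiling transforms taken individually.
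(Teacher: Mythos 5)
Your proposal is correct and matches the paper's approach exactly: the paper omits the proof, stating that the theorem follows directly from the homogeneity of $\int(\cdot)\,[d\chi]$ and ``Lemma 6'' of \cite{baryshnikov2010euler}, which is precisely the pointwise reduction you carry out. Your additional details (definability of $g\cdot R(\cdot,\nu,t)$, the rational-scalar sanity check, and the floor--ceiling duality for negative $\lambda$) are all sound elaborations of that one-line justification.
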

\noindent Secondly, we have the following theorem on the measurability of the function $(\nu,t) \mapsto \operatorname{ERT}(g)(\nu,t)$.
\begin{theorem}\label{thm: tameness of ERT}
    Suppose $g\in\mathfrak{D}_{R,d}$. Then, the function $(\nu,t) \mapsto \operatorname{ERT}(g)(\nu,t)$ is Borel-measurable, which holds for $\lfloor\operatorname{ERT}\rfloor$ and $\lceil\operatorname{ERT}\rceil$ as well.
\end{theorem}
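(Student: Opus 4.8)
The plan is to build on the structure of the averaged integral and reduce everything to the behavior of the floor version, which is itself a pointwise limit of honest (integer-valued) Euler integrals. By the defining relation in Eq.~\eqref{eq: def of int [dx]}, the averaged integral $\int_X(\cdot)\,[d\chi]$ is the arithmetic mean of the floor and ceiling integrals, so $\operatorname{ERT}(g) = \tfrac12\big(\lfloor\operatorname{ERT}\rfloor(g)+\lceil\operatorname{ERT}\rceil(g)\big)$ pointwise in $(\nu,t)$. Since the average of two Borel-measurable functions is Borel-measurable, and since the ceiling case is entirely symmetric to the floor case (one replaces $\lfloor\cdot\rfloor$ by $\lceil\cdot\rceil$ throughout), it suffices to prove that $(\nu,t)\mapsto\lfloor\operatorname{ERT}\rfloor(g)(\nu,t)$ is Borel-measurable. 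Writing out the floor functional from Eq.~\eqref{eq: def of int [dx]} gives
\[
\lfloor\operatorname{ERT}\rfloor(g)(\nu,t)=\lim_{n\to\infty} h_n(\nu,t),\qquad h_n(\nu,t):=\frac{1}{n}\int_{B_{\mathbb{R}^d}(0,R)} \big\lfloor n\,g(x)\,R(x,\nu,t)\big\rfloor\, d\chi(x),
\]
where the limit exists by ``Lemma 3'' of \cite{baryshnikov2010euler}. Because a pointwise limit of Borel-measurable functions is Borel-measurable, the whole argument reduces to showing that each $h_n$ is Borel-measurable.

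Next I would show that each $h_n$ is in fact a \emph{definable} function of $(\nu,t)$, which by the triangulation theorem (and the resulting Borel-measurability of definable sets recorded after Assumption \ref{Assumption: basic requirements for o-minimal structures of interest}) immediately yields Borel-measurability. The key observation is that, since $\sup_x|g(x)|<\infty$ by Definition \ref{def: grayscale iamges} and $R$ takes values in $\{0,1\}$, the integrand $G_n(x,\nu,t):=\lfloor n\,g(x)\,R(x,\nu,t)\rfloor$ is a \emph{bounded} integer-valued function on $B_{\mathbb{R}^d}(0,R)\times\mathbb{S}^{d-1}\times[0,T]$, taking only finitely many integer values $m$. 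For each such $m$, the level set $A_m:=\{(x,\nu,t): m\le n\,g(x)\,R(x,\nu,t)<m+1\}$ is cut out by definable inequalities in the definable function $g\cdot R$, hence is definable, so $G_n$ is a constructible function on the product space. By the definition of Euler integration over constructible functions in Eq.~\eqref{eq: Euler integration}, for each fixed $(\nu,t)$ we obtain the finite sum $h_n(\nu,t)=\tfrac{1}{n}\sum_m m\cdot\chi\big((A_m)_{(\nu,t)}\big)$, where $(A_m)_{(\nu,t)}$ denotes the fiber of $A_m$ over $(\nu,t)$.

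It then remains to show, for each fixed $m$, that $(\nu,t)\mapsto\chi\big((A_m)_{(\nu,t)}\big)$ is definable. This is exactly the step where I would invoke the trivialization theorem of \cite{van1998tame}: applied to the definable projection $A_m\to\mathbb{S}^{d-1}\times[0,T]$, it produces a finite definable partition of the base over each cell of which the projection is definably trivial, so that all fibers lying over one cell are definably homeomorphic. Since $\chi$ (defined via cell decomposition, as in Eq.~\eqref{eq: EC defined using o-min structures}) is invariant under definable homeomorphisms, the fiberwise Euler characteristic is constant on each cell; hence $(\nu,t)\mapsto\chi\big((A_m)_{(\nu,t)}\big)$ takes finitely many integer values, each of its level sets is a finite union of cells and therefore definable, and it is Borel-measurable. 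This is precisely the mechanism already used in Theorem \ref{thm: tameness theorem}, now applied to the lifted sets $A_m$ rather than to the sublevel shapes $K_t^\nu$. Summing the finitely many definable functions $m\cdot\chi\big((A_m)_{(\nu,t)}\big)$ shows $h_n$ is definable, hence Borel-measurable, and passing to the pointwise limit in $n$ finishes the argument for the floor version, and thus for the averaged and ceiling versions as well.

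The main obstacle is this last step: establishing that the fiberwise Euler characteristic varies definably (equivalently, measurably) in $(\nu,t)$, which genuinely relies on trivialization in the o-minimal structure together with the definable-homeomorphism invariance of $\chi$. Two points warrant care. First, the boundedness clause $\sup_x|g(x)|<\infty$ in Definition \ref{def: grayscale iamges} is essential, as it is what guarantees the sum over $m$ is finite for each $n$; without it $h_n$ would not reduce to a finite combination of fiberwise Euler characteristics. Second, the definable partition provided by trivialization depends on $n$, so one does not obtain a single definability statement uniform in $n$ for the limit $\lfloor\operatorname{ERT}\rfloor(g)$ itself; this is harmless, however, because measurability is stable under pointwise limits, which is all that the stated theorem asserts.
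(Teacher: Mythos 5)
Your proof is correct, but it takes a genuinely different route from the paper's. The paper obtains Theorem \ref{thm: tameness of ERT} as a corollary of Theorem \ref{thm: relationship between our proposed ERT and the referred existing transforms}: it first writes $\lfloor\operatorname{ERT}\rfloor(g)$ and $\lceil\operatorname{ERT}\rceil(g)$ as Lebesgue integrals in $s$ of the $\operatorname{LECT}$ and $\operatorname{SELECT}$ (Eq.~\eqref{eq: Lebesgue representations of the floor and ceiling ERT}), then invokes the joint measurability of $(\nu,t,s)\mapsto\operatorname{LECT}(g)(\nu,t,s)$ and $(\nu,t,s)\mapsto\operatorname{SELECT}(g)(\nu,t,s)$ from Theorem \ref{thm: tameness of LECT and SELECT}, and concludes via Fubini--Tonelli that the $s$-integral is Borel-measurable in $(\nu,t)$. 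You instead work directly from the defining limit in Eq.~\eqref{eq: def of int [dx]}: each approximant $h_n$ is a finite sum of fiberwise Euler characteristics of definable subsets of the product space, these vary definably in $(\nu,t)$ by the same o-minimal input (Lemma \ref{lemma: (2.10) Proposition of Dries}, equivalently the trivialization argument) that drives Theorem \ref{thm: tameness theorem}, and measurability then passes to the pointwise limit in $n$. Both arguments ultimately rest on the fact that fiberwise Euler characteristics of a definable family take finitely many values on a finite definable partition of the base; your version is more self-contained, since it bypasses the Baryshnikov--Ghrist level-set representations underlying Theorem \ref{thm: relationship between our proposed ERT and the referred existing transforms}, whereas the paper's version costs nothing extra because those LECT/SELECT formulas are needed later for computation anyway. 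Your handling of the delicate points is also sound: boundedness of $g$ makes each $h_n$ a finite sum, the $m=0$ level set contributes nothing even though its fibers need not be compact, and the $n$-dependence of the definable partition is harmless because only measurability, not definability, is asserted for the limit.
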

\noindent Theorem \ref{thm: tameness of ERT} will be a straightforward result of a later Theorem \ref{thm: relationship between our proposed ERT and the referred existing transforms}.

\paragraph*{Definition of the SERT.} Theorem \ref{thm: tameness of ERT} allows us to define the smooth Euler-Radon transform (SERT) as follows,
\begin{align}\label{eq: def of SERT}
\begin{aligned}
\operatorname{SERT}:\ \ & \mathfrak{D}_{R,d} \rightarrow \mathbb{R}^{\mathbb{S}^{d-1}\times [0,T]},\\
& g \mapsto \operatorname{SERT}(g)=\left\{\operatorname{SERT}(g)(\nu,t)\right\}_{(\nu,t)\in \mathbb{S}^{d-1}\times [0,T]},  \\
& \text{where }\ \operatorname{SERT}(g)(\nu,t) := \int_0^t \operatorname{ERT}(g)(\nu,\tau) \,d\tau-\frac{t}{T}\int_0^T \operatorname{ERT}(g)(\nu,\tau) \,d\tau.
\end{aligned}
\end{align}
Theorem \ref{thm: tameness of ERT} implies that the Lebesgue integrals in Eq.~\eqref{eq: def of SERT} are well-defined, and the function $(\nu,t) \mapsto \operatorname{SERT}(g)(\nu,t)$ is Borel-measurable. Transforms $\lfloor\operatorname{SERT}\rfloor$ and $\lceil\operatorname{SERT}\rceil$ are defined via $\lfloor\operatorname{ERT}\rfloor$ and $\lceil\operatorname{ERT}\rceil$, respectively, in a similar way; they also have the referred properties of $\operatorname{SERT}$. Furthermore, Eq.~\eqref{eq: ERT is a generalization of ECT} implies that $\operatorname{SERT}(\mathbbm{1}_K) = \lfloor\operatorname{SERT}\rfloor(\mathbbm{1}_K) = \lceil\operatorname{SERT}\rceil(\mathbbm{1}_K) = \operatorname{SECT}(K)$ for all $K \in \operatorname{CS}\left(B_{\mathbb{R}^d}(0,R)\right)$.

\paragraph*{Comparing the ERT and SERT.} The following theorem indicates that the SERT preserves all information about the ERT under a regularity condition.
\begin{theorem}\label{thm: SERT preserves all information of ERT}
    Suppose $g\in\mathfrak{D}_{R,d}$. If $t \mapsto \operatorname{ERT}(g)(\nu, t)$ is a right continuous function for each fixed $\nu\in\mathbb{S}^{d-1}$, then $\operatorname{ERT}(g)$ can be expressed in terms of $\operatorname{SERT}(g)$; hence, $\operatorname{ERT}(g)$ and $\operatorname{SERT}(g)$ determine each other.
\end{theorem}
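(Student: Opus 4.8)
The plan is to fix a direction $\nu\in\mathbb{S}^{d-1}$ and reconstruct the one-variable function $f(t):=\operatorname{ERT}(g)(\nu,t)$ from $S(t):=\operatorname{SERT}(g)(\nu,t)$. One direction of the claimed equivalence is free: by Eq.~\eqref{eq: def of SERT}, $\operatorname{SERT}(g)$ is obtained from $\operatorname{ERT}(g)$ by an explicit linear integral operator, so $\operatorname{ERT}(g)$ determines $\operatorname{SERT}(g)$. The substance of the theorem is the reverse implication, which I would establish through an explicit inversion formula, carried out uniformly in $\nu$.

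First I would record the regularity of $f$. By Theorem \ref{thm: tameness of ERT}, $f$ is Borel-measurable, and since $g$ is bounded and definable the Euler integral defining $f$ is bounded in $t$, so $f\in L^\infty[0,T]$. Consequently $F(t):=\int_0^t f(\tau)\,d\tau$ is Lipschitz (hence absolutely continuous), and Eq.~\eqref{eq: def of SERT} reads $S(t)=F(t)-\tfrac{t}{T}F(T)$. The right-continuity hypothesis then yields $F'_+(t)=f(t)$ at every $t\in[0,T)$, where $F'_+$ is the right-hand derivative, because the average of a right-continuous function over $[t,t+h]$ converges to its value as $h\downarrow 0$.

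The crucial input is the behaviour of $f$ near the endpoints, which is precisely where the support condition in Definition \ref{def: grayscale iamges} enters. Writing $R(x,\nu,t)=\mathbbm{1}\{x\cdot\nu\le t-R\}$ as in Eq.~\eqref{eq: Euler function representation of ECT} and using $\operatorname{dist}(\supp(g),\partial B_{\mathbb{R}^d}(0,R))\ge\delta>0$, one gets $\supp(g)\subseteq B_{\mathbb{R}^d}(0,R-\delta)$, so $-R+\delta\le x\cdot\nu\le R-\delta$ on the support (uniformly in $\nu$, since $|x\cdot\nu|\le\Vert x\Vert$). Hence $g(x)\cdot R(x,\nu,t)$ is identically zero on the ball for $t\in[0,\delta)$ and identically equal to $g(x)$ for $t\in(T-\delta,T]$. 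This forces $f\equiv 0$ on $[0,\delta)$ and $f\equiv\int g\,[d\chi]$ (a constant) on $(T-\delta,T]$; in particular $f(0)=0$ and $f$ is continuous at both endpoints.

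From here the inversion is routine. Because $f\equiv 0$ on $[0,\delta)$, $S$ is linear there, $S(t)=-\tfrac{t}{T}F(T)$, so the otherwise-undetermined constant $F(T)=\int_0^T\operatorname{ERT}(g)(\nu,\tau)\,d\tau$ is recovered from $S$ as $F(T)=-T\,S'_+(0)$. Then $F(t)=S(t)+\tfrac{t}{T}F(T)$ is recovered for all $t$, and finally
\[
\operatorname{ERT}(g)(\nu,t)=f(t)=F'_+(t)=S'_+(t)-S'_+(0),\qquad t\in[0,T),
\]
with the endpoint value $f(T)$ supplied by the continuity at $T$ established above. Since this reconstructs $f$ for every $\nu$, the two transforms determine each other. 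I expect the main obstacle to be exactly this endpoint analysis: differentiating $S$ alone leaves the additive constant $F(T)$ free, and it is the combination of right-continuity with the support condition (equivalently, the fact $f(0)=0$) that pins it down. A secondary point needing explicit justification is the boundedness of $f$ used to make $F$ absolutely continuous, which follows from the boundedness of $g$ and definability but should be stated rather than assumed.
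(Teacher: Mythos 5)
Your proposal is correct and follows essentially the same route as the paper's proof: differentiate $\operatorname{SERT}(g)(\nu,\cdot)$ to obtain $\operatorname{ERT}(g)(\nu,t)-\tfrac{1}{T}\int_0^T\operatorname{ERT}(g)(\nu,\tau)\,d\tau$, use the support condition $\operatorname{dist}(\supp(g),\partial B_{\mathbb{R}^d}(0,R))>0$ to force $\operatorname{ERT}(g)(\nu,t)=0$ near $t=0$ and thereby identify the constant as $-S'_+(0)$, and invoke right-continuity to recover $f$ at every $t$. Your systematic use of right-hand derivatives (and the explicit endpoint and boundedness remarks) is a slightly more careful rendering of the identical argument, not a different one.
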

\noindent The proof of Theorem \ref{thm: SERT preserves all information of ERT} is given in Appendix \ref{eq: proof, ERT and SERT determine each other}. We selected right continuity over left continuity in Theorem \ref{thm: SERT preserves all information of ERT} for four reasons. The first reason is to align with Morse theory (see Remark 2.4 in \cite{milnor1963morse}, on page 20 therein, for a right-continuity result). The second reason is that $t\mapsto\operatorname{ERT}(g)(\nu,t)$ is not left continuous in general (see Appendix \ref{section: ECT is not left continuous} for examples with right-continuity). The third reason comes from probability theory. When $g$ is random, the function $t\mapsto\operatorname{ERT}(g)(\nu,t)$ is a stochastic process for each fixed $\nu\in\mathbb{S}^{d-1}$; if $t\mapsto\operatorname{ERT}(g)(\nu,t)$ is right continuous, it automatically becomes a stochastic process whose sample paths are right continuous with left limit (RCLL). Stochastic processes with RCLL sample paths are well studied in probability theory (e.g., see Section 21.4 of \cite{klenke2013probability}). The fourth reason is rooted in the following deliberation: if we view the Euler characteristic $\chi$ as an analog of a probability measure $\mathbb{P}$, then $\operatorname{ERT}(\mathbbm{1}_K)(\nu,t)=\chi(\{x\in K;\, x\cdot\nu+R\le t\})$ is an analog of the cumulative distribution function $F_X(t):=\mathbb{P}(X\le t)$ of a random variable $X$ and the function $t\mapsto F_X(t)$ is right continuous.

\paragraph*{Invertibility of the ERT and SERT.} In practical applications, grayscale images must be discretized into arrays of pixel intensities (or their higher-dimensional counterparts such as voxels) for storage on electronic devices. Consequently, we can represent grayscale images utilized in these contexts as members of the following piecewise-constant function class
\begin{align}\label{eq: def of piece-wise grayscale image space}
    \mathfrak{D}_{R, d}^{pc}:=\left\{g\in\mathfrak{D}_{R,d}:\, g \text{ takes finitely many values in }\mathbb{R}\right\}.
\end{align}
For any piecewise-constant grayscale image $g$ in $\mathfrak{D}_{R, d}^{pc}$, the following theorem indicates that $\operatorname{ERT}(g)$ does not lose any information about the image $g$ and, as a result, justifies the implementation of the ERT in practical applications.  
\begin{theorem}\label{thm: invertibility on piecewise images}
    (i) The restriction of $\operatorname{ERT}$ on $\mathfrak{D}_{R, d}^{pc}$ is invertible for all dimensions $d$. (ii) The restriction of $\operatorname{SERT}$ on $\{g\in\mathfrak{D}_{R, d}^{pc}: \text{ the function $t\mapsto\operatorname{ERT}(g)(\nu,t)$ is right continuous for each fixed }\nu\}$ is invertible for all dimensions $d$.
\end{theorem}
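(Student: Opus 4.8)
The plan is to prove part (i) by exploiting a feature special to piecewise-constant inputs: although the averaged Euler integral $\int(\cdot)\,[d\chi]$ is only homogeneous on general definable functions (not additive), its restriction to $\mathfrak{D}_{R,d}^{pc}$ coincides with a genuinely $\mathbb{R}$-linear Euler-Radon transform, to which the Schapira-type inversion formula \eqref{eq: formula inversion formula of ERT} applies verbatim. Part (ii) will then follow formally from part (i) together with Theorem \ref{thm: SERT preserves all information of ERT}.

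First I would fix $g\in\mathfrak{D}_{R,d}^{pc}$ and record its layer-cake (level-set) decomposition. Since $g$ takes finitely many values, its positive superlevel sets $U_j:=\{x:g(x)\ge c_j\}$ and negative sublevel sets $V_k:=\{x:g(x)\le -d_k\}$ are constructible subsets of $B_{\mathbb{R}^d}(0,R)$, nested as $U_1\supseteq U_2\supseteq\cdots$ and $V_1\supseteq V_2\supseteq\cdots$, and $g=\sum_j\alpha_j\mathbbm{1}_{U_j}-\sum_k\beta_k\mathbbm{1}_{V_k}$ with positive reals $\alpha_j,\beta_k$ equal to the successive gaps between the values of $g$. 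The excursion-set representation of \cite{baryshnikov2010euler} for $\int(\cdot)\lfloor d\chi\rfloor$ and $\int(\cdot)\lceil d\chi\rceil$, combined with $R(x,\nu,t)\in\{0,1\}$, identifies the level sets of the integrand as $\{x:g(x)R(x,\nu,t)\ge s\}=(\{g\ge s\})_t^\nu$ and its strict analogue; because $\{g\ge s\}$ and $\{g>s\}$ differ only at the finitely many values of $g$ (a Lebesgue-null set of heights $s$), the floor, ceiling, and averaged versions all agree here, and each equals
\[
\operatorname{ERT}(g)(\nu,t)=\sum_j\alpha_j\,\chi\big((U_j)_t^\nu\big)-\sum_k\beta_k\,\chi\big((V_k)_t^\nu\big)=\sum_j\alpha_j\operatorname{ECT}(U_j)(\nu,t)-\sum_k\beta_k\operatorname{ECT}(V_k)(\nu,t).
\]
This is the crucial linearization: on $\mathfrak{D}_{R,d}^{pc}$ the $\operatorname{ERT}$ is the $\mathbb{R}$-linear extension of the ordinary (constructible) Euler-Radon transform $\phi\mapsto\{\int\phi(x)\cdot R(x,\nu,t)\,d\chi(x)\}_{(\nu,t)}$ evaluated at the constructible building blocks $\mathbbm{1}_{U_j},\mathbbm{1}_{V_k}$.

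Next I would invert this linear transform. The Schapira inversion realized in \eqref{eq: formula inversion formula of ERT} produces, via the dual Euler-Radon transform (DERT), an identity of the form $\operatorname{DERT}\circ\operatorname{ERT}=(\mu-\lambda)\,\mathrm{id}+\lambda\,\big(\int(\cdot)\,d\chi\big)\,\mathbbm{1}$ on constructible functions, where $\mu$ and $\lambda$ are the Euler characteristics of the self- and cross-fibers of the incidence relation $\{x\cdot\nu\le t-R\}$. Applying this identity term-by-term to the finite combination above, and using $\mathbb{R}$-linearity, recovers $\sum_j\alpha_j\mathbbm{1}_{U_j}-\sum_k\beta_k\mathbbm{1}_{V_k}=g$ pointwise, provided $\mu\neq\lambda$. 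The hard part will be precisely this fiber computation: I must verify, using the triangulation and trivialization theorems and the support condition $\operatorname{dist}(\supp(g),\partial B_{\mathbb{R}^d}(0,R))>0$ from Definition \ref{def: grayscale iamges}, that $\mu-\lambda\neq0$ uniformly in $d$, so that the normalization is invertible in every dimension; the nestedness of the $U_j,V_k$ and the compact support (which keeps all relevant level sets away from the boundary sphere) are what render the cross-fiber characteristic $\lambda$ constant and the formula exact. Injectivity of $\operatorname{ERT}$ on $\mathfrak{D}_{R,d}^{pc}$, hence invertibility onto its image, then follows for all $d$.

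Finally, for part (ii) I would argue purely formally. On the stated subclass the map $t\mapsto\operatorname{ERT}(g)(\nu,t)$ is right continuous for every $\nu$, so Theorem \ref{thm: SERT preserves all information of ERT} supplies a reconstruction of $\operatorname{ERT}(g)$ from $\operatorname{SERT}(g)$; thus $g\mapsto\operatorname{SERT}(g)$ factors as the injection $g\mapsto\operatorname{ERT}(g)$ of part (i) followed by the invertible passage $\operatorname{ERT}(g)\mapsto\operatorname{SERT}(g)$. Composing with its inverse shows that $\operatorname{SERT}(g)$ determines $\operatorname{ERT}(g)$ and therefore $g$, so the restriction of $\operatorname{SERT}$ to this subclass is invertible in all dimensions. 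The only point needing care is that right continuity is exactly the hypothesis under which Theorem \ref{thm: SERT preserves all information of ERT} applies, and this is built into the definition of the subclass, so no additional work is required.
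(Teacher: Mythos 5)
Your proposal is essentially correct, but it reaches the inversion by a genuinely different route than the paper. The paper's proof works by verifying the ``Fubini condition'' (Eq.~\eqref{eq: the Fubini condition}) directly for the real-valued function $g\in\mathfrak{D}_{R,d}^{pc}$: it proves a push-forward lemma (Lemma~\ref{lem::pwc-fubini}) for arbitrary definable maps applied to finitely-valued definable integrands, using the cell decomposition and trivialization theorems, and then feeds the two coordinate projections on $B_{\mathbb{R}^d}(0,R)\times(\mathbb{S}^{d-1}\times[0,T])$ into Proposition~\ref{prop::fubini-assumption-invert}. You instead linearize first: the layer-cake decomposition $g=\sum_j\alpha_j\mathbbm{1}_{U_j}-\sum_k\beta_k\mathbbm{1}_{V_k}$ together with the excursion-set formulas shows that on $\mathfrak{D}_{R,d}^{pc}$ the floor, ceiling, and averaged integrals coincide and $\operatorname{ERT}$ is the $\mathbb{R}$-linear extension of the constructible Euler--Radon transform, after which the classical Schapira inversion for constructible functions applies term by term. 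Both routes land on the same identity $\operatorname{DERT}\circ\operatorname{ERT}=(\mu-\lambda)\,\mathrm{id}+\lambda\int(\cdot)\,[d\chi]$ with $\mu-\lambda=(-1)^{d+1}\neq0$, and your part (ii) is identical to the paper's (an appeal to Theorem~\ref{thm: SERT preserves all information of ERT}). Your approach buys an explicit structural fact (real-linearity of the ERT on piecewise-constant images) and reduces everything to the known integer-valued theory, though it still silently relies on the linearity of $\int(\cdot)\,[d\chi]$ over real combinations of constructible functions, which is exactly the paper's Lemma~\ref{eq: linearity in some sense}; the paper's Fubini lemma is the more reusable tool since it applies to arbitrary definable maps. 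Two small inaccuracies to fix: the constants $\mu=1-(-1)^d$ and $\lambda=1$ are determined purely by the incidence kernel $R\cdot R'$ (a cited computation of Schapira), not by the nestedness of the $U_j,V_k$ or the support condition as you suggest; and you should state explicitly that the additive term $\lambda\int g\,[d\chi]$ is eliminated by evaluating $\operatorname{DERT}\circ\operatorname{ERT}(g)$ in the limit toward $\partial B_{\mathbb{R}^d}(0,R)$, where $g$ vanishes --- this is where the hypothesis $\operatorname{dist}(\supp(g),\partial B_{\mathbb{R}^d}(0,R))>0$ actually enters.
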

\noindent Theorem \ref{thm: invertibility on piecewise images} extends the invertibility findings of the ECT and SECT in ``Corollary 1" of \cite{ghrist2018persistent}. A comprehensive proof for Theorem \ref{thm: invertibility on piecewise images} can be found in Appendix \ref{proof of thm: invertibility on piecewise images}. The piecewise constant constraint in Theorem \ref{thm: invertibility on piecewise images} can be slightly relaxed. Namely, the invertibility of the ERT holds for all grayscale functions $g\in \mathfrak{D}_{R, d}$ that satisfy the ``Fubini condition" (see Appendix \ref{section: Discussions on the Invertibility of ERT and SERT}). The invertibility of the ERT on $\mathfrak{D}_{R, d}$, instead of $\mathfrak{D}_{R, d}^{pc}$, is still an open problem and is discussed in detail in Appendix \ref{section: Discussions on the Invertibility of ERT and SERT}. The main obstacle in solving the open problem is that the ``Fubini theorem" in Euler calculus \citep[][Section 3.8]{ghrist2014elementary} does not hold over real-valued Euler integrands. 

\subsection{Dual Euler-Radon Transform}

As a remark on the invertibility of the ERT presented in Theorem \ref{thm: invertibility on piecewise images}, we introduce the dual Euler-Radon transform (DERT). Let $R'$ be the dual kernel of $R(x, v, t)$ (see Eq.~\eqref{eq: Euler function representation of ECT}) given by
\begin{align}\label{eq: dual kernel R'}
    R'(\nu, t, x) \coloneqq \mathbbm{1}\left\{\left(\nu,t, x\right)\in \mathbb{S}^{d-1}\times[0,T]\times B_{\mathbb{R}^d}(0,R)\,:\, x\cdot\nu\ge t - R\right\}.
\end{align}
We define the DERT as follows
\begin{align}\label{eq: def of dual Euler-Radon transform}
\begin{aligned}
\operatorname{DERT}:\ \ & \operatorname{Def}(\mathbb{S}^{d-1}\times [0,T]) \rightarrow \mathbb{R}^{B_{\R^d}(0, R)},\\
& h \mapsto \operatorname{DERT}(h)=\left\{\operatorname{DERT}(h)(x):=\int_{\mathbb{S}^{d-1} \times [0, T]} h(\nu, t) \cdot R'(\nu, t, x) \, [d\chi(\nu, t)]\right\}_{x \in B_{\R^d}(0, R)}.
\end{aligned}
\end{align}
Using the DERT, the following provides an inversion formula for recovering $g\in\mathfrak{D}_{R,d}^{pc}$ from $\operatorname{ERT}(g)$
\begin{align}\label{eq: formula inversion formula of ERT}
    g(x') = \frac{1}{(-1)^{d+1}}\cdot (\operatorname{DERT} \circ \operatorname{ERT})(g)(x') - \lim_{\xi\rightarrow \partial B_{\R^d}(0,R)} \frac{1}{(-1)^{d+1}}\cdot (\operatorname{DERT} \circ \operatorname{ERT})(g)(\xi),
\end{align}
where $\lim_{\xi\rightarrow \partial B_{\R^d}(0,R)}$ means that $\xi$ converges to a point on the sphere $\partial B_{\R^d}(0,R)=\{x\in\R^d:\, \Vert x\Vert=R\}$. The details and proof of Eq.~\eqref{eq: formula inversion formula of ERT} are given in Appendix \ref{section: Discussions on the Invertibility of ERT and SERT}.

\section{Existing Frameworks}\label{section: Existing Frameworks}

In Section \ref{section: Euler-Randon Transform}, we demonstrated that the introduced ERT and SERT serve as generalizations of the ECT and SECT, respectively. In this section, we discuss the relationship between our proposed framework and other established transforms: the WECT, LECT, SELECT, and the marginal Euler curve (MEC). 

\paragraph*{LECT and SELECT.} \cite{kirveslahti2023representing} introduced the LECT and SELECT for the analysis of scalar fields (including grayscale images), motivated by the idea of super-level sets implemented in the topology of Gaussian random fields \citep{adler2007random, taylor2008random}. Using the notations introduced in Section \ref{section: Euler-Randon Transform}, the LECT can be represented as follows
\begin{align}\label{eq: def of LECT}
    \begin{aligned}
        \operatorname{LECT}:\ \ & \mathfrak{D}_{R,d} \rightarrow \mathbb{Z}^{\mathbb{S}^{d-1}\times[0,T]\times[0,1]}, \\
    & g \mapsto \operatorname{LECT}(g):=\left\{\operatorname{LECT}(g)(\nu,t,s)\right\}_{(\nu,t,s) \in \mathbb{S}^{d-1}\times[0,T]\times[0,1]}, \\
    & \text{where }\ \operatorname{LECT}(g)(\nu,t,s):=\chi\left(\left\{x\in B_{\mathbb{R}^d}(0,R):\, x\cdot\nu\le t-R \text{ and } g(x)=s\right\}\right).
    \end{aligned}
\end{align}
That is, the LECT transforms $d$-dimensional grayscale images into integer-valued functions defined on a $(d+1)$-dimensional manifold. Inspired by Morse theory \citep{milnor1963morse} and considerations of statistical robustness, the SELECT can be represented as follows
\begin{align}\label{eq: def of SELECT}
\begin{aligned}
    \operatorname{SELECT}:\ \ & \mathfrak{D}_{R,d} \rightarrow \mathbb{Z}^{ \mathbb{S}^{d-1}\times[0,T]\times[0,1] }, \\
    & g \mapsto \operatorname{SELECT}(g):=\left\{\operatorname{SELECT}(g)(\nu,t,s)\right\}_{(\nu,t,s) \in \mathbb{S}^{d-1}\times[0,T]\times[0,1]}, \\
    & \text{where }\ \operatorname{SELECT}(g)(\nu,t,s):=\chi\left(\left\{x\in B_{\mathbb{R}^d}(0,R):\, x\cdot\nu\le t-R \text{ and }g(x) \ge s\right\}\right).
\end{aligned}
\end{align}
We have the following result on the functions $(\nu,t,s) \mapsto \operatorname{LECT}(g)(\nu,t,s)$ and $(\nu,t,s) \mapsto \operatorname{SELECT}(g)(\nu,t,s)$, which is an analog of Theorem \ref{thm: tameness theorem}.
\begin{theorem}\label{thm: tameness of LECT and SELECT}
    Suppose $g\in \mathfrak{D}_{R,d}$. Then, we have the following:
    \begin{enumerate}
        \item $\operatorname{LECT}(g)(\nu,t,s)$ and $\operatorname{SELECT}(g)(\nu,t,s)$ take only finitely many values as $(\nu,t,s)$ runs through $\mathbb{S}^{d-1}\times[0,T]\times[0,1]$. In addition, for each integer $z \in \mathbb{Z}$, the sets $\{(\nu,t)\in \mathbb{S}^{d-1}\times[0,T]\times[0,1]:\, \operatorname{LECT}(g)(\nu,t,s) = z\}$ and $\{(\nu,t)\in \mathbb{S}^{d-1}\times[0,T]\times[0,1]:\, \operatorname{SELECT}(g)(\nu,t,s) = z\}$ are definable; hence, the functions $(\nu,t,s) \mapsto \operatorname{LECT}(g)(\nu,t,s)$ and $(\nu,t,s) \mapsto \operatorname{SELECT}(g)(\nu,t,s)$ are definable.
        \item The functions $(\nu,t,s) \mapsto \operatorname{LECT}(g)(\nu,t,s)$ and $(\nu,t,s) \mapsto \operatorname{SELECT}(g)(\nu,t,s)$ are Borel-measurable.
    \end{enumerate}
\end{theorem}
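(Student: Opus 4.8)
The plan is to mirror the proof of Theorem \ref{thm: tameness theorem}, since the present statement is precisely its analog with the constructible set $K$ replaced by the parametrized fibers defining the LECT and SELECT. The engine of the argument will be the trivialization theorem for o-minimal structures (Hardt's theorem, Chapter 9 of \cite{van1998tame}), combined with the invariance of the combinatorial Euler characteristic $\chi$ of Eq.~\eqref{eq: EC defined using o-min structures} under definable homeomorphisms.

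First I would assemble the two ``universal'' families over the parameter space $P := \mathbb{S}^{d-1}\times[0,T]\times[0,1]$. Concretely, set
\begin{align*}
    \mathcal{L} &:= \left\{(\nu,t,s,x) \in P \times B_{\mathbb{R}^d}(0,R) \,:\, x\cdot\nu \le t-R \text{ and } g(x) = s\right\}, \\
    \mathcal{S} &:= \left\{(\nu,t,s,x) \in P \times B_{\mathbb{R}^d}(0,R) \,:\, x\cdot\nu \le t-R \text{ and } g(x) \ge s\right\},
\end{align*}
so that the fiber of $\mathcal{L}$ (resp.\ $\mathcal{S}$) over $(\nu,t,s)$ is exactly the set whose Euler characteristic defines $\operatorname{LECT}(g)(\nu,t,s)$ (resp.\ $\operatorname{SELECT}(g)(\nu,t,s)$). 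Both families are definable: the sphere, the interval factors, and the ball are semialgebraic, hence definable under Assumption \ref{Assumption: basic requirements for o-minimal structures of interest}; the inequality $x\cdot\nu \le t-R$ is a polynomial inequality, hence semialgebraic; the condition $g(x)=s$ is definable because the graph $\Gamma(g)$ of the definable function $g\in\mathfrak{D}_{R,d}$ is definable; and $g(x)\ge s$ is definable because the order relation is part of any o-minimal structure (Definition \ref{def: definability}). Closure of $\mathcal{O}$ under finite intersections and products then yields definability of $\mathcal{L}$ and $\mathcal{S}$.

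Next I would apply the trivialization theorem to the projection $\pi:\mathcal{L}\to P$, and separately to $\mathcal{S}\to P$. This produces a finite partition of $P$ into definable pieces $P_1,\dots,P_m$ such that over each $P_j$ the family is definably trivial; in particular, any two parameters in the same $P_j$ have definably homeomorphic fibers (including pieces over which all fibers are empty). Since $\chi$ as defined in Eq.~\eqref{eq: EC defined using o-min structures} is invariant under definable homeomorphisms of definable sets, the map $(\nu,t,s)\mapsto\operatorname{LECT}(g)(\nu,t,s)$ is constant on each $P_j$. It therefore assumes at most $m$ distinct integer values, and for each $z\in\mathbb{Z}$ the level set $\{(\nu,t,s)\in P:\operatorname{LECT}(g)(\nu,t,s)=z\}$ is the union of those pieces $P_j$ carrying value $z$, hence definable; the identical argument handles SELECT. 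This establishes the first assertion. For the second, the triangulation theorem (Chapter 8 of \cite{van1998tame}) shows that every definable set is Borel, so each level set above is Borel; as both functions take finitely many values with Borel level sets, they are Borel-measurable.

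The main obstacle I anticipate is not the definability bookkeeping but the need to invoke homeomorphism rather than homotopy invariance of $\chi$: unlike the compact slices $K_t^\nu$ in Theorem \ref{thm: tameness theorem}, the fibers here need not be compact — for instance at $s=0$, where $\{x:g(x)=0\}$ meets a neighborhood of $\partial B_{\mathbb{R}^d}(0,R)$ — so one cannot appeal to the cleaner homotopy invariance. I would therefore make explicit that the combinatorial Euler characteristic of Eq.~\eqref{eq: EC defined using o-min structures} is well-defined and finite on every (possibly non-compact) definable fiber by the cell decomposition theorem, and that it is preserved by the definable homeomorphisms produced by trivialization. This is exactly what keeps the argument uniform across all $(\nu,t,s)\in P$.
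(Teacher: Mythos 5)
Your proposal is correct and follows essentially the same route as the paper: the paper omits this proof as "similar to that of Theorem \ref{thm: tameness theorem}," whose argument is exactly to form the definable universal family over the parameter space and invoke Lemma \ref{lemma: (2.10) Proposition of Dries} (the rephrased ``(2.10) Proposition'' of \cite{van1998tame}) to get finitely many fiberwise Euler characteristics with definable level sets. The only difference is that you re-derive that lemma's conclusion by running the trivialization theorem yourself rather than citing it, which is the same underlying mechanism; your added care about non-compact fibers (homeomorphism rather than homotopy invariance of $\chi$) is a correct and worthwhile clarification.
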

\noindent Since the proof of Theorem \ref{thm: tameness of LECT and SELECT} is similar to that of Theorem \ref{thm: tameness theorem}, we omit it.

\paragraph*{MEC.} \cite{kirveslahti2023representing} also proposed the marginal Euler curve (MEC) $M_\nu^g (t)$ which is defined via the SELECT as follows
\begin{align}\label{eq: def of MEC}
    M_\nu^g (t) := \int_\mathbb{R} \operatorname{SELECT}(g)(\nu, t, s) \, ds,\ \ \ \text{ for all }(\nu,t)\in\mathbb{S}^{d-1}\times[0,T].
\end{align}
Theorem \ref{thm: tameness of LECT and SELECT} guarantees that the Lebesgue integral in Eq.~\eqref{eq: def of MEC} is well-defined.

\paragraph*{Relationship with the ERT.} The subsequent theorem establishes a connection between our proposed ERT and the LECT and SELECT --- thereby also linking the ERT to both the MEC and WECT.
\begin{theorem}\label{thm: relationship between our proposed ERT and the referred existing transforms}
Suppose $g\in\mathfrak{D}_{R,d}$. Then, we have the following Euler integration representation of the $\operatorname{ERT}$ via the $\operatorname{LECT}$
\begin{align}\label{eq: Euler representation of ERT via LECT}
    \operatorname{ERT}(g)(\nu, t) = \int_\mathbb{R} s \cdot \operatorname{LECT}(g)(\nu, t, s) \,[d\chi(s)], \ \ \ \text{for all }(\nu, t) \in \mathbb{S}^{d-1} \times [0, T].
\end{align}
In addition, we have the following Lebesgue integration representations of the $\lfloor\operatorname{ERT}\rfloor$ and $\lceil\operatorname{ERT}\rceil$ via the $\operatorname{LECT}$ and $\operatorname{SELECT}$
\begin{align}\label{eq: Lebesgue representations of the floor and ceiling ERT}
    \begin{aligned}
        & \lfloor\operatorname{ERT}\rfloor(g)(\nu,t) = \int_0^\infty \operatorname{SELECT}(g)(\nu, t, s) - \operatorname{SELECT}(-g)(\nu, t, s) + \operatorname{LECT}(-g)(\nu, t, s) \,ds,\\
    & \lceil\operatorname{ERT}\rceil(g)(\nu,t) = \int_0^\infty \operatorname{SELECT}(g)(\nu, t, s) - \operatorname{LECT}(g)(\nu, t, s) - \operatorname{SELECT}(-g)(\nu, t, s) \,ds.
    \end{aligned}
\end{align}
In particular, we have the following Lebesgue integration representation of the $\operatorname{ERT}$
\begin{align}\label{eq: Lebesgue representation of ERT}
    \begin{aligned}
        \operatorname{ERT}(g)(\nu,t) = & \int_0^\infty \left\{ \operatorname{SELECT}(g)(\nu, t, s) - \operatorname{SELECT}(-g)(\nu, t, s) \right\} \,ds \\
    & + \frac{1}{2}\int_0^\infty \left\{ \operatorname{LECT}(-g)(\nu, t, s) - \operatorname{LECT}(g)(\nu, t, s) \right\} \,ds.
    \end{aligned}
\end{align}
\end{theorem}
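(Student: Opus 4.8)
The plan is to prove the three displays in the order: first the floor and ceiling Lebesgue representations \eqref{eq: Lebesgue representations of the floor and ceiling ERT}, then the averaged representation \eqref{eq: Lebesgue representation of ERT} by taking their half-sum, and finally the Euler-integration representation \eqref{eq: Euler representation of ERT via LECT}, which I expect to be the delicate part. Throughout I would fix $(\nu,t)\in\mathbb{S}^{d-1}\times[0,T]$ and write $H=\{x\in B_{\mathbb{R}^d}(0,R):x\cdot\nu\le t-R\}$, so that $g(x)\cdot R(x,\nu,t)=g(x)\cdot\mathbbm{1}_H(x)$ and every assertion becomes a pointwise-in-$(\nu,t)$ statement about the single definable function $g\mathbbm{1}_H\in\operatorname{Def}(B_{\mathbb{R}^d}(0,R))$. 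Two ingredients drive the argument: (a) the Baryshnikov--Ghrist representation of the floor and ceiling functionals as Lebesgue integrals of Euler characteristics of super-/sub-level sets, and (b) the Fubini theorem of Euler calculus for constructible integrands, applied to the level-set map $g$.

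For \eqref{eq: Lebesgue representations of the floor and ceiling ERT} I would start from $\int_X h\,\lfloor d\chi\rfloor=\int_0^\infty[\chi\{h\ge s\}-\chi\{h<-s\}]\,ds$ and $\int_X h\,\lceil d\chi\rceil=\int_0^\infty[\chi\{h>s\}-\chi\{h\le -s\}]\,ds$, which follow from the defining limits in \eqref{eq: def of int [dx]} by rewriting $\int_X(\cdot)\,d\chi$ over super-level sets and recognizing Riemann sums. Substituting $h=g\mathbbm{1}_H$ with $s>0$, the four level sets become exactly the sets defining the LECT and SELECT: $\{g\mathbbm{1}_H\ge s\}=\{g\ge s\}\cap H$ has Euler characteristic $\operatorname{SELECT}(g)(\nu,t,s)$, while $\{g\mathbbm{1}_H\le -s\}=\{-g\ge s\}\cap H$ has Euler characteristic $\operatorname{SELECT}(-g)(\nu,t,s)$. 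The strict inequalities are handled by finite additivity of the Euler characteristic on disjoint definable sets, giving $\chi\{g\mathbbm{1}_H>s\}=\operatorname{SELECT}(g)(\nu,t,s)-\operatorname{LECT}(g)(\nu,t,s)$ and $\chi\{g\mathbbm{1}_H<-s\}=\operatorname{SELECT}(-g)(\nu,t,s)-\operatorname{LECT}(-g)(\nu,t,s)$. Plugging these in yields \eqref{eq: Lebesgue representations of the floor and ceiling ERT} verbatim, and \eqref{eq: Lebesgue representation of ERT} follows at once by taking the half-sum, which is the very definition of $\int(\cdot)\,[d\chi]$; measurability and finiteness of all integrands are supplied by Theorem~\ref{thm: tameness of LECT and SELECT}.

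For the Euler-integration representation \eqref{eq: Euler representation of ERT via LECT} I would push $\lfloor ng\rfloor$ and $\lceil ng\rceil$ forward along $g$. Since these are constructible and constant on each fiber $g^{-1}(s)\cap H$ (equal there to $\lfloor ns\rfloor$, resp.\ $\lceil ns\rceil$), the constructible Fubini theorem gives the exact identities $\int_H\lfloor ng\rfloor\,d\chi=\int_\mathbb{R}\lfloor ns\rfloor\,\operatorname{LECT}(g)(\nu,t,s)\,d\chi(s)$ and its ceiling analog, where $\operatorname{LECT}(g)(\nu,t,s)=\chi(g^{-1}(s)\cap H)$ is the fiberwise Euler characteristic. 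Dividing by $n$, averaging, and letting $n\to\infty$, the left-hand side converges by definition to $\operatorname{ERT}(g)(\nu,t)$, and the claim is that the right-hand side converges to $\int_\mathbb{R} s\cdot\operatorname{LECT}(g)(\nu,t,s)\,[d\chi(s)]$.

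This last convergence is the main obstacle. It cannot be obtained by a uniform estimate, because the level sets of the oscillating step functions $\lfloor ns\rfloor\operatorname{LECT}(g)(\nu,t,\cdot)$ proliferate like $n$, so the relevant Euler characteristics grow and only the $\tfrac1n$ prefactor together with cancellation tames them; indeed, the analogous identity is false for the floor and ceiling functionals separately, and only the average survives. I would therefore evaluate the limit exactly on the finite cell decomposition of the constructible function $s\mapsto\operatorname{LECT}(g)(\nu,t,s)$: on each cell the half-open-interval vanishing of $\chi$ collapses $\tfrac1n\int\lfloor ns\rfloor\,d\chi$ and $\tfrac1n\int\lceil ns\rceil\,d\chi$ to explicit endpoint values whose average matches $\int_\mathbb{R} s\cdot\operatorname{LECT}(g)(\nu,t,s)\,[d\chi(s)]$ computed by the same decomposition. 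An alternative and perhaps cleaner route is to deduce \eqref{eq: Euler representation of ERT via LECT} directly from the already-proven \eqref{eq: Lebesgue representation of ERT} via the Euler-calculus analog of the ``$\int s\,d\mu=\int\text{tails}$'' integration-by-parts identity: the constructible-Fubini relation $\operatorname{SELECT}(g)(\nu,t,s)=\int_{[s,\infty)}\operatorname{LECT}(g)(\nu,t,u)\,d\chi(u)$ converts the tail form \eqref{eq: Lebesgue representation of ERT} into the density form \eqref{eq: Euler representation of ERT via LECT}. Either way, the essential difficulty is that the clean pushforward identity holds only after averaging, so the proof must exploit this cancellation and evaluate the defining limits exactly rather than through bounds --- a subtlety inherited from the failure of the Euler--Fubini theorem over real-valued integrands noted after Theorem~\ref{thm: invertibility on piecewise images}.
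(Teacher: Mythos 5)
Your treatment of the floor and ceiling identities \eqref{eq: Lebesgue representations of the floor and ceiling ERT} and of the averaged formula \eqref{eq: Lebesgue representation of ERT} is correct and is essentially the paper's argument: the paper quotes ``Proposition 2'' of \cite{baryshnikov2010euler} for $\int_X h\,\lfloor d\chi\rfloor=\int_0^\infty[\chi\{h\ge s\}-\chi\{h<-s\}]\,ds$ and its ceiling analogue, applies them to $h=g\cdot\mathbbm{1}_{S_{\nu,t}}$, and takes the half-sum; your bookkeeping of strict versus non-strict level sets via additivity of $\chi$ reproduces the stated right-hand sides exactly, and your re-derivation of the Proposition~2 formulas from the limits in \eqref{eq: def of int [dx]} is a legitimate (if compressed) substitute for the citation.

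The divergence is in \eqref{eq: Euler representation of ERT via LECT}, and here your argument has a gap. The paper obtains this display in one line by citing ``Corollary 8'' of \cite{baryshnikov2010euler}, which is precisely the change-of-variables identity $\int_X h\,[d\chi]=\int_{\mathbb{R}} s\cdot\chi\left(h^{-1}(s)\right)[d\chi(s)]$ for $h\in\operatorname{Def}(X)$ that you are attempting to re-derive. Your constructible-Fubini step $\int_H\lfloor ng\rfloor\,d\chi=\int_{\mathbb{R}}\lfloor ns\rfloor\operatorname{LECT}(g)(\nu,t,s)\,d\chi(s)$ is valid, and you correctly diagnose the obstruction: $\tfrac1n\lfloor ns\rfloor\phi(s)$ is not $\tfrac1n\lfloor n\,s\,\phi(s)\rfloor$, so the limit of the right-hand side is not, by definition, $\int s\,\phi(s)\,[d\chi(s)]$, and the identity survives only after averaging. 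But as written you merely gesture at two possible resolutions (an exact evaluation on a cell decomposition of $s\mapsto\operatorname{LECT}(g)(\nu,t,s)$, or an integration-by-parts identity) without carrying either out, so this step is incomplete even though the target identity is true. The economical fix is simply to invoke the cited result of \cite{baryshnikov2010euler}; if you insist on a self-contained proof, the cell-decomposition route is the one to finish, and it reduces to the explicit one-dimensional evaluation of $\tfrac1n\int_{(a,b)}\lfloor ns\rfloor\,d\chi(s)$ and $\tfrac1n\int_{(a,b)}\lceil ns\rceil\,d\chi(s)$ on the finitely many open intervals where $\operatorname{LECT}(g)(\nu,t,\cdot)$ is constant, using that $\chi$ of a half-open interval vanishes so that only endpoint contributions survive the limit.
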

\noindent The proof of Theorem \ref{thm: relationship between our proposed ERT and the referred existing transforms} is in Appendix \ref{proof: relationship between our proposed ERT and the referred existing transforms}. Theorem \ref{thm: tameness of LECT and SELECT} guarantees that the Euler integral and Lebesgue integrals in Eqs.~\eqref{eq: Euler representation of ERT via LECT}-\eqref{eq: Lebesgue representation of ERT} are well-defined. The representations in Theorem \ref{thm: relationship between our proposed ERT and the referred existing transforms} will be used to compute the ERT in the next section. Furthermore, with the Fubini theorem, the Lebesgue integration representations in Eq.~\eqref{eq: Lebesgue representations of the floor and ceiling ERT} and Eq.~\eqref{eq: Lebesgue representation of ERT} imply the result in Theorem \ref{thm: tameness of ERT}.

The representations in Theorem~\ref{thm: relationship between our proposed ERT and the referred existing transforms} connect to our proposed ERT to the MEC and WECT in the following ways:
\begin{enumerate}
    \item If $g(x)\ge 0$ for all $x$, then we have $\{x\in B_{\mathbb{R}^d}(0,R):\, x\cdot\nu\le t-R \text{ and }-g(x) \ge s\}=\emptyset$ for all $s>0$. Hence, $\lfloor\operatorname{ERT}\rfloor(g)(\nu,t) = \int_0^\infty \operatorname{SELECT}(g)(\nu, t, s) \,ds=M_\nu^g(t)$ --- meaning that the $\lfloor\operatorname{ERT}\rfloor$ is equal to the MEC specified in Eq.~\eqref{eq: def of MEC} for nonnegative grayscale functions.
    \item If $g$ is nonnegative and $\operatorname{LECT}(g)(\nu,t,s)=0$ for almost every $s$ with respect to the Lebesgue measure, Eq.~\eqref{eq: Lebesgue representation of ERT} implies that $\operatorname{ERT}(g)(\nu,t)=M_\nu^g(t)$. From this viewpoint, we can easily show that the WECT proposed in \cite{jiang2020weighted} is a special case of our proposed ERT. \cite{jiang2020weighted} models each grayscale image as a ``weighted simplicial complex" in the form $g=\sum_{\sigma\in\mathfrak{S}} a_\sigma\cdot\mathbbm{1}_\sigma$ for a finite sum (each $\sigma$ is a simplex, $a_\sigma\in\mathbb{N}$, and $\mathfrak{S}$ is a finite collection of simplexes), assuming the ``consistency condition" $a_\sigma=\max\{a_\tau: \tau\in\mathfrak{S}\text{ and $\sigma$ is a face of $\tau$}\}$ for all $\sigma\in\mathfrak{S}$. The WECT of $g$ is defined as follows 
    \begin{align*}
        \operatorname{WECT}(g)(\nu,t) := \sum_{d=0}^{\max\{\operatorname{dim}(\sigma):\,\sigma\in\mathfrak{S}\}} (-1)^d\cdot \left(\sum_{\sigma\in\mathfrak{S},\, \operatorname{dim}(\sigma)=d,\,\text{and }\sigma\subseteq\{x\in\mathbb{R}^d: x\cdot\nu\le t-R\}} a_\sigma\right).
    \end{align*}
    \cite{kirveslahti2023representing} show that the WECT of $g$ coincides with the MEC of $g$ where $\operatorname{WECT}(g)(\nu,t)= M_\nu^g(t)$. Furthermore, the definition of the LECT in Eq.~\eqref{eq: def of LECT} indicates that $\operatorname{LECT}(g)(\nu,t,s)=0$ unless $s=a_\sigma\in\mathbb{N}$ for some simplex $\sigma$. Therefore, $\int_{0}^1 \operatorname{LECT}(h)(\nu, t, s) \, ds=0$. Hence, we have $\operatorname{WECT}(g)(\nu,t)= M_\nu^g(t)=\operatorname{ERT}(g)(\nu,t)$ for all $(\nu,t)\in\mathbb{S}^{d-1}\times[0,T]$, if $g=\sum_{\sigma} a_\sigma\cdot\mathbbm{1}_\sigma$ with $a_\sigma\in\mathbb{N}$.
\end{enumerate}

\paragraph*{Dissimilarities between Grayscale Images.} Lastly, we may use the ERT and SERT to measure the (dis-)similarity between two grayscale functions. Inspired by the dissimilarity quantities defined in the literature \citep{turner2014persistent, crawford2020predicting, meng2022randomness, marsh2023stability, wang2023hypothesis}, we introduce the following (semi-)distances between grayscale functions $g_1,g_2\in\mathfrak{D}_{R,d}$
\begin{align}\label{eq: semi distances between grayscale functions}
    \begin{aligned}
    & \operatorname{dist}^{\operatorname{ERT}}_{p,q}(g_1, g_2):= \left\Vert \,\operatorname{ERT}(g_1) - \operatorname{ERT}(g_2) \,\right\Vert_{L_\nu^q L_t^p},  \\
    & \operatorname{dist}^{\operatorname{SERT}}_{p,q}(g_1, g_2):= \left\Vert\, \operatorname{SERT}(g_1) - \operatorname{SERT}(g_2) \,\right\Vert_{L_\nu^q L_t^p},
    \end{aligned}
\end{align}
where $1\le p,q \le \infty$, and the $L_\nu^q L_t^p$-norm $\Vert f\Vert_{L_\nu^q L_t^p}$ of a function $(\nu,t)\mapsto f(\nu,t)$ is defined in a two-step process. Initially, we take the $L^p$-norm $\Vert f(\nu,\cdot)\Vert_{L_t^p}$ with respect to $t\in[0,T]$; then we next take the $L^q$-norm with respect to $\nu\in\mathbb{S}^{d-1}$ (regarding the spherical measure $d\nu$ on $\mathbb{S}^{d-1}$). Theorem \ref{thm: tameness of ERT} implies that the $L_\nu^q L_t^p$-norm used in Eq.~\eqref{eq: semi distances between grayscale functions} is well-defined. Theorem \ref{thm: invertibility on piecewise images} indicates that $\operatorname{dist}^{\operatorname{ERT}}_{p,q}$ is a distance, rather than just a semi-distance, on the space $\mathfrak{D}_{R,d}^{pc}$ defined in Eq.~\eqref{eq: def of piece-wise grayscale image space}. Let $K_1$ and $K_2$ belong to $\operatorname{CS}(B_{\mathbb{R}^d}(0,R))$. Then, we have the following: 
\begin{itemize}
    \item $\operatorname{dist}^{\operatorname{SERT}}_{p,p}(\mathbbm{1}_{K_1}, \mathbbm{1}_{K_2})$ agrees with the SECT distance stated in \cite{crawford2020predicting};

    \item $\operatorname{dist}^{\operatorname{ERT}}_{p,p}(\mathbbm{1}_{K_1}, \mathbbm{1}_{K_2})$ agrees with the ECT distance stated in \cite{curry2022many};

    \item $\operatorname{dist}^{\operatorname{ERT}}_{2,\infty}(\mathbbm{1}_{K_1}, \mathbbm{1}_{K_2})$ agrees with the distance stated in \cite{meng2022randomness}, which underpins the generation of the Borel algebra implemented in that work;

    \item $\operatorname{dist}^{\operatorname{ERT}}_{1,\infty}(\mathbbm{1}_{K_1}, \mathbbm{1}_{K_2})$ aligns with the distance stated in \cite{marsh2023stability} for the examination of the stability of the ECT;

    \item $\operatorname{dist}^{\operatorname{SERT}}_{2,\infty}(\mathbbm{1}_{K_1}, \mathbbm{1}_{K_2})$ agrees with the distance utilized in \cite{wang2023hypothesis} for permutation test.
\end{itemize}
Futhermore, the distances defined in Eq.~\eqref{eq: semi distances between grayscale functions} are the analogs of the following distances proposed in \cite{kirveslahti2023representing}
\begin{align}\label{eq: metrics defined in kirveslahti2023representing}
    \begin{aligned}
        & \operatorname{dist}^{\operatorname{SELECT}}_p(g_1, g_2):= \left(\int_{\mathbb{S}^{d-1}} \int_0^T \int_{\mathbb{R}} \left\vert\, \operatorname{SELECT}(g_1)(\nu,t,s) - \operatorname{SELECT}(g_2)(\nu,t,s) \right\vert^p\, ds\, dt\, d\nu \right)^{1/p}, \\ 
        & \operatorname{dist}^{\operatorname{MEC}}_p(g_1, g_2):=\left(\int_{\mathbb{S}^{d-1}}\int_0^T \left\vert\, M_\nu^{g_1}(t) - M_\nu^{g_2}(t) \,\right\vert^p \,dt \,d\nu \right)^{1/p},
    \end{aligned}
\end{align}
for $1\le p<\infty$. Theorem \ref{thm: tameness of LECT and SELECT} implies that the Lebesgue integrals implemented in the $L^p$-norms in Eq.~\eqref{eq: metrics defined in kirveslahti2023representing} are well-defined. We will compare the performance of all the referred distances from a statistical inference perspective in Section \ref{section: Numerical Experiments}.

\commentout{
\cdedit{
\begin{lemma}
    The semi-distances above are well-defined distances.
\end{lemma}

\begin{proof}
    First, we note that the well-definedness of the (semi) distances depends on the measurability and integrability of the functions involved. This condition is satisfied for the functions in Eq.(2.11) by Lemma~\ref{lem::ert_measurable_integrable}. Therefore, we only need to show that $\operatorname{dist}(h_1, h_2)=0$ implies $h_1=h_2$.

\begin{itemize}
    \item We claim that  $\operatorname{dist}^{L_p}(h_1, h_2)=0$ implies $h_1=h_2$.

        Suppose $\operatorname{dist}^{L_p}(h_1, h_2)=0$. Then, 
$$\left(\int_{B_{\mathbb{R}^d}(0,R)}\vert h_1(x)-h_2(x)\vert^p dx\right)^{1/p} = 0.$$
Since $|h_1 - h_2|^p \geq 0$ for all $x \in B_{\mathbb{R}^d}(0,R)$, it follows that $|h_1 - h_2|^p = 0$ for almost every $x \in B_{\mathbb{R}^d}(0,R)$, which further implies that $h_1 = h_2$ for almost every $x \in B_{\mathbb{R}^d}(0,R)$. So $h_1 = h_2$ under $L_p$ space.
\item Given $\operatorname{dist}^{SERT}_{p,p}(\mathbbm{1}_{K_1}, \mathbbm{1}_{K_2}) = 0$, this is equivalent to say the $L_p$ distance between $ERT(h_1)(\nu, t)$ and $ERT(h_2)(\nu, t)$ has measure zero. 
\end{itemize}
\end{proof}
}
}

\begin{figure}[h]
    \centering
    \includegraphics[scale=0.13]{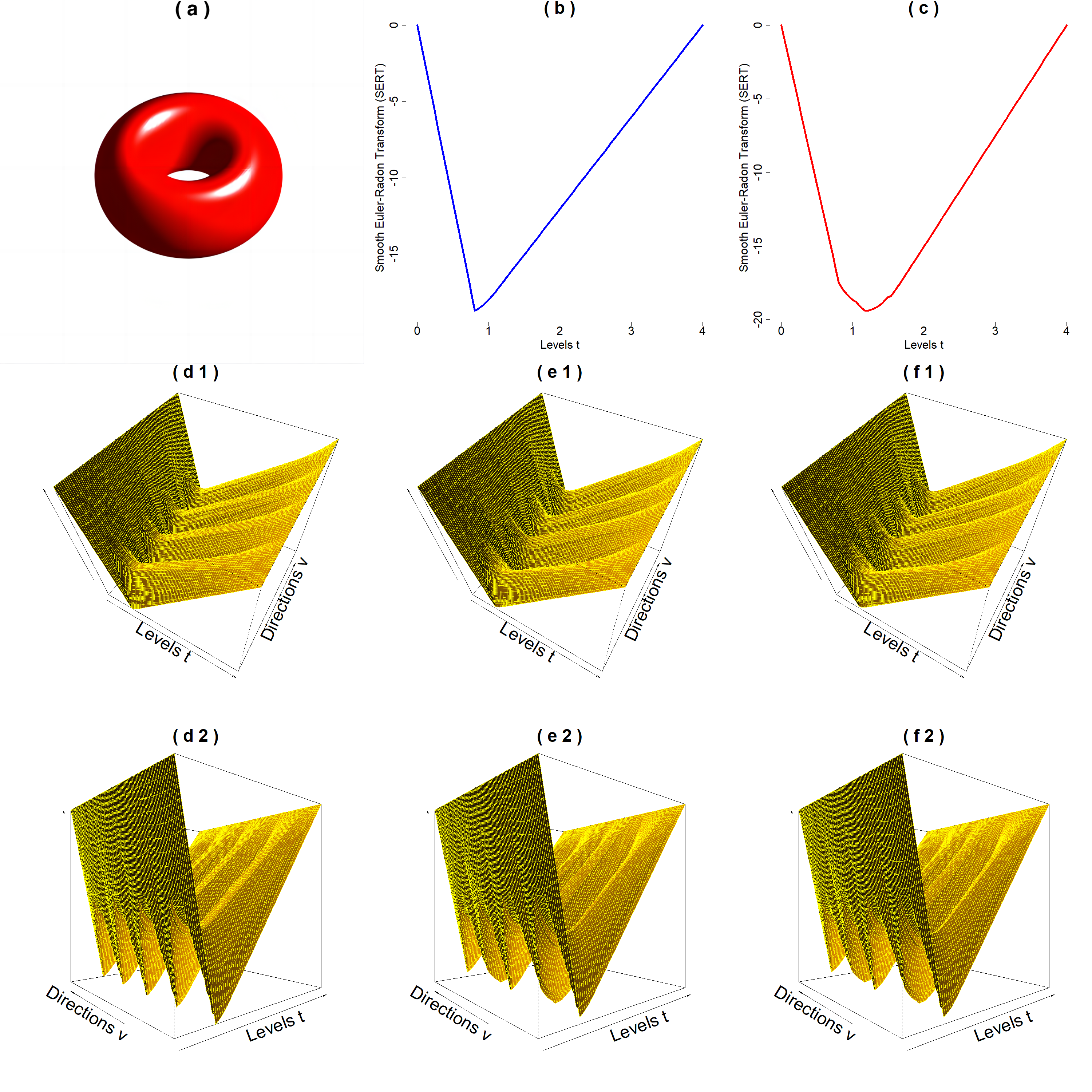}
    \caption{The torus in panel (a) presents the level set $\{x\in\R^3:\, g(x)=0.0834\}$, where $g$ is defined in Eq.~\eqref{eq: proof-of-concept scalar field example}. Panel (b) presents curve $t\mapsto\operatorname{SERT}(g)(\nu,t)$ with $\nu=(0,1,0)^\T$. Panel (c) presents curve $t\mapsto\operatorname{SERT}(g)(\nu,t)$ with $\nu=(0,0,1)^\T$. Panels (d1, d2, e1, e2, f1, f2) present the surfaces $(\theta, t) \mapsto \operatorname{SERT}(g)(\nu,t)$ with different directions $\nu$ as in Eq.~\eqref{eq: segments for visualizations}. Panels (d1, d2) present the same surface from different angles. Similar for panels (e1, e2) and (f1, f2).}
    \label{fig: Merged_document}
\end{figure}

\section{A Proof-of-Concept Example}\label{subsection: proof-of-concept simulation}

Our proposed SERT plays an important role in the statistical inference discussed in Section \ref{subsection: proof-of-concept simulation}, given its ability to convert grayscale images into functional data. Before delving into the SERT-based statistical inference, in this section, we illustrate the SERT via a proof-of-concept example. Here, we focus on dimensionality $d=3$ and the following scalar field
\begin{align}\label{eq: proof-of-concept scalar field example}
    g(x):=\left\{\left(\sqrt{\frac{3}{4}\left(x_1^2+x_2^2\right)}-\frac{1}{2}\right)^2+\frac{3x_3^2}{4}\right\}\cdot\mathbbm{1}_{\{-1\le x_1,\, x_2,\, x_3\le 1\}}, \ \ \ \text{ where }x=(x_1, x_2, x_3)^\T.
\end{align}
The $g$ in Eq.~\eqref{eq: proof-of-concept scalar field example} is a grayscale function (in the sense of Definition \ref{def: grayscale iamges}), where $g \in \mathfrak{D}_{R,d}$ with $d=3$ and $R=2$. A level set $\{x\in\R^3:\, g(x)=0.0834\}$ of $g$ is presented in Figure \ref{fig: Merged_document} (the level 0.0834 was specifically chosen to make the visualization of the level set look like a torus). We computed the ERT of the $g$ in a sequential manner. We first compute the LECT and SELECT using the MATLAB isosurface procedure. Next, we compute the ERT utilizing the Lebesgue integration representation in Eq.~\eqref{eq: Lebesgue representation of ERT}. Finally, the SERT is derived from the ERT via a standard numerical integration method.

The SERT of the $3$-dimensional grayscale image $g$ defined in Eq.~\eqref{eq: proof-of-concept scalar field example} is a scalar field over the 3-dimensional product manifold $\mathbb{S}^2\times[0,4]$, where $(\nu,t)\mapsto\operatorname{SERT}(g)(\nu,t)$ with $\nu=(\nu_1,\nu_2,\nu_3)^\T\in \mathbb{S}^2$. We visualize the following segments of the scalar field
\begin{align}\label{eq: segments for visualizations}
    \begin{aligned}
        & (\theta, t) \mapsto \operatorname{SERT}(g)(\nu,t) \ \text{with} \ \nu=(\cos\theta,\,\sin\theta, \, 0)^\T \ \text{in Figure \ref{fig: Merged_document}(d1, d2)}; \\
    & (\theta, t) \mapsto \operatorname{SERT}(g)(\nu,t) \ \text{with} \ \nu=(\cos\theta, \, 0, \, \sin\theta)^\T \ \text{in see Figure \ref{fig: Merged_document}(e1, e2)};  \\
    & (\theta, t) \mapsto \operatorname{SERT}(g)(\nu,t) \ \text{with}\ \nu=(0, \, \cos\theta, \, \sin\theta)^\T \ \text{in Figure \ref{fig: Merged_document}(f1, f2)};
    \end{aligned}
\end{align}
where $\theta\in[0,2\pi]$ and $t\in[0,4]$. The maps in Eq.~\eqref{eq: segments for visualizations} are scalar fields over $[0, 2\pi]\times[0,4]$ and are presented in the second and third rows of Figure \ref{fig: Merged_document}.

In Figure \ref{fig: Merged_document}, the surfaces corresponding to $(\theta, t) \mapsto \operatorname{SERT}(g)(\nu,t)$ consistently exhibit an approximate periodicity of $\pi/2$ in the variable $\theta$ (indicated by the axis label ``Direction $\nu$"). This approximate periodicity is fundamentally derived from the ``box-shape" indicator function $\mathbbm{1}_{\{-1\le x_1,, x_2,, x_3\le 1\}}$ in Eq.~\eqref{eq: proof-of-concept scalar field example}. The surfaces depicted in Figures \ref{fig: Merged_document}(e1, e2) and \ref{fig: Merged_document}(f1, f2) are indistinguishable, which is a characteristic attributed to the $x_1$-$x_2$ symmetry of the grayscale function $g$ defined in Eq.~\eqref{eq: proof-of-concept scalar field example}. The surfaces presented in Figure \ref{fig: Merged_document}(d1, d2) exhibit subtle distinctions from those in \ref{fig: Merged_document}(e1, e2, f1, f2). The curves and surfaces in Figure \ref{fig: Merged_document}, as well as the scalar field $(\nu,t)\mapsto\operatorname{SERT}(g)(\nu,t)$, suggest a potential association of the SERT with manifold learning \citep{yue2016parameterization, dunson2021inferring, meng2021principal, li2022efficient}.

\begin{figure}[h]
    \centering
    \includegraphics[scale=0.23]{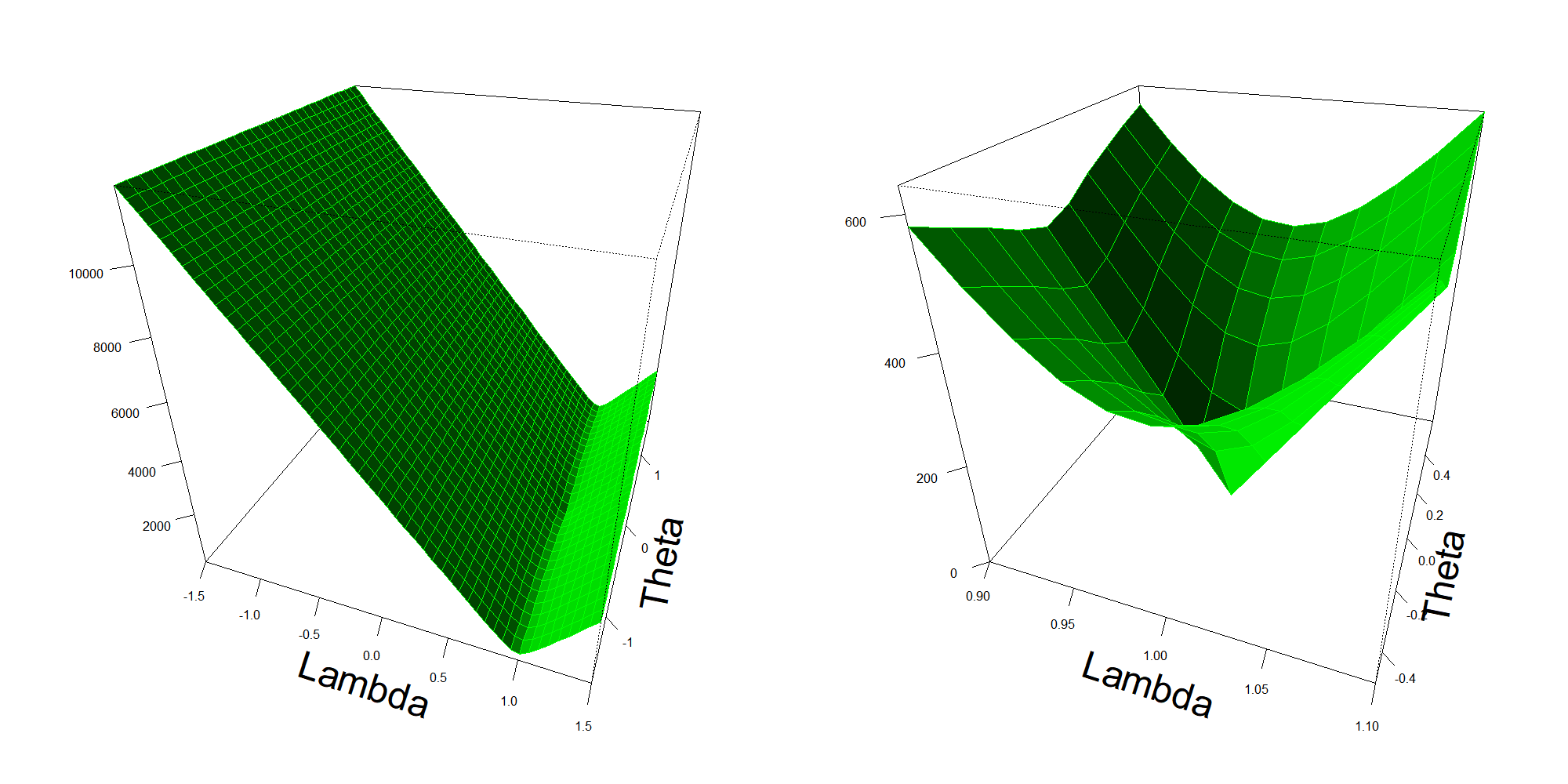}
    \caption{The left panel presents the surface of the function defined in Eq.~\eqref{eq: (theta, lambda) surface} for $\theta\in[-\pi/2,\, \pi/2]$ and $\lambda\in[-3/2,\,3/2]$. Notably, we have also considered the negative scaling parameters $\lambda<0$, which correspond to the ``white-to-black" transition. The right panel presents the same function with $\theta\in[-0.45,\,0.45]$ and $\lambda\in[0.9,\,1.1]$. Both surfaces indicate that the function in Eq.~\eqref{eq: (theta, lambda) surface} is minimized when $\lambda=1$ and $\theta=0$.}
    \label{fig: Alignment_distance}
\end{figure}

\section{Alignment of Images and Invariance of the ERT}\label{section: Alignment of Images and the Invariance of ERT}

In various applications, images are typically aligned before analysis \citep[e.g.,][]{bankman2008handbook}. \cite{wang2021statistical} utilized an ECT-based approach to align shapes (equivalently, binary images) through orthogonal actions. The primary aim of the ECT-based strategy is to lessen the difference between a pair of shapes resulting from the orthogonal movements. An in-depth exposition of the ECT-based method, along with a proof-of-concept example, can be found in Supplementary Section 4 of \cite{wang2021statistical}. Analogous correspondence free alignment techniques are also needed for grayscale images. \cite{kirveslahti2023representing} examined the invariance of the LECT with respect to orthogonal alignments and introduced an LECT/SELECT-based alignment method tailored for grayscale images. Motivated by the studies in both \cite{wang2021statistical} and \cite{kirveslahti2023representing}, in this section, we introduce an ERT-based alignment approach. This approach will serve as a preprocessing step for the ERT-based statistical inference detailed in Section \ref{section: Statistical Inference of Grayscale Functions}.

Let $g^\diamondsuit$ represent a reference image designated as a template. For any source image $g$ under consideration, we consider a collection of transforms, denoted by $\mathscr{T}$, encompassing all transformations $T$ of interest. From $\mathscr{T}$, we select the transformation $T^\blacklozenge\in\mathscr{T}$ such that the transformed image $T^\blacklozenge(g)$ is the most similar to $g^\diamondsuit$ based on a specified dissimilarity metric. Subsequent analysis is then conducted on the aligned image $T^\blacklozenge(g)$ instead of the original source image $g$. A potential choice for the dissimilarity metric can be one of the distances defined in Eq.\eqref{eq: semi distances between grayscale functions}, denoted as $\operatorname{dist}$. This can be summarized as follows
\begin{align}\label{eq: def of optimal transform}
    T^\blacklozenge:=\argmin_{T\in\mathscr{T}}\, \operatorname{dist}\Big(g^\diamondsuit, T(g)\Big).
\end{align}
\noindent Beyond the orthogonal actions implemented in \cite{wang2021statistical} and \cite{kirveslahti2023representing}, we also consider scaling transforms of grayscale images, including the ``white-to-black" transition. In this paper, we focus on the following collection of transforms
\begin{align}\label{eq: the collection of all scaling, rotation, and reflection transforms}
    \mathscr{T}=\left\{T:\, (Tg)(x)=\lambda\cdot g(\pmb{A}^{-1}x),\text{ where } \lambda\in\mathbb{R} \text{ and }\pmb{A}\in\operatorname{O}(d)\right\},
\end{align}
where $\operatorname{O}(d)$ denotes the orthogonal group in dimension $d$, and it contains all the rotations and reflections in $\mathbb{R}^d$. For ease of notation, we define the dual $\pmb{A}_*$ of $\pmb{A}$ by $(\pmb{A}_*g)(x):=g\left(\pmb{A}^{-1}x\right)$. The goal of employing minimization across the collection in Eq.~\eqref{eq: the collection of all scaling, rotation, and reflection transforms} is to mitigate disparities between two images arising from orthogonal movements and variations in pixel intensity scales.

A challenge in using the criterion presented in Eq.~\eqref{eq: def of optimal transform}, combined with any of the dissimilarity metrics in Eq.\eqref{eq: semi distances between grayscale functions}, is the computational cost of obtaining $\operatorname{ERT}(T(g))$ for all $T\in\mathscr{T}$. More specifically, the computation of $\operatorname{ERT}(\lambda\cdot \pmb{A}_*g)$ is required for every $\lambda\in\mathbb{R}$ and $\pmb{A}\in\operatorname{O}(d)$. Fortunately, the homogeneity of the ERT (see Theorem \ref{thm: homogeneity of ERT}) implies $\operatorname{ERT}(\lambda\cdot \pmb{A}_*g)=\lambda\cdot \operatorname{ERT}(\pmb{A}_*g)$ for all $\lambda\in\mathbb{R}$. Therefore, we can simply calculate $\operatorname{ERT}(\pmb{A}_*g)$ and get the $\operatorname{ERT}(\lambda\cdot \pmb{A}_*g)$ for all $\lambda\in\mathbb{R}$ by simply scaling the computed $\operatorname{ERT}(\pmb{A}_*g)$. Similarly, if we further have the ``$\pmb{A}_*$-homogeneity", where ``$\operatorname{ERT}(\pmb{A}_*g)=\pmb{A}_*\operatorname{ERT}(g)$," the amount of computation required in Eq.~\eqref{eq: def of optimal transform} is further reduced. The ``$\pmb{A}_*$-homogeneity" is true and accurately presented by the following result.
\begin{theorem}\label{thm: O(d)-invariance}
    For any $g\in\mathfrak{D}_{R,d}$ and $\pmb{A}\in\operatorname{O}(d)$, we have $\operatorname{ECT}(\pmb{A}_*g)(\nu,t)=\operatorname{ECT}(g)(\pmb{A}^{-1}\nu,t)=:\pmb{A}_*\operatorname{ECT}(g)(\nu,t)$ for all $\nu\in\mathbb{S}^{d-1}$ and $t\in[0,T]$.
\end{theorem}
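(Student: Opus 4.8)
The plan is to unwind the definition of the ERT in Eq.~\eqref{eq: def of Euler-Radon transform} and reduce the claim to a change-of-variables identity for the Euler integral under the orthogonal map. Writing $\psi(y) := \pmb{A}y$, which is a linear (hence definable) self-homeomorphism of $B_{\mathbb{R}^d}(0,R)$ because $\pmb{A}\in\operatorname{O}(d)$ preserves the Euclidean norm, I would first establish the auxiliary fact that Euler integration is invariant under such a definable homeomorphism: for any $f\in\operatorname{Def}(B_{\mathbb{R}^d}(0,R);\mathbb{R})$,
\begin{align*}
\int_{B_{\mathbb{R}^d}(0,R)} f(x)\,[d\chi(x)] = \int_{B_{\mathbb{R}^d}(0,R)} f(\pmb{A}y)\,[d\chi(y)].
\end{align*}
For integer-valued $f\in\operatorname{CF}$ this is immediate from Eq.~\eqref{eq: Euler integration}: each level set $\{y : f(\pmb{A}y)=n\}=\psi^{-1}(\{x:f(x)=n\})$ is homeomorphic to $\{x:f(x)=n\}$, and $\chi$ is a homeomorphism invariant, so the two integrals agree term by term.

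Next I would apply this identity with $f(x) = (\pmb{A}_*g)(x)\cdot R(x,\nu,t) = g(\pmb{A}^{-1}x)\cdot R(x,\nu,t)$. After the substitution one has $f(\pmb{A}y) = g(y)\cdot R(\pmb{A}y,\nu,t)$, and the crucial observation is that the kernel $R$ is itself orthogonally covariant: since $(\pmb{A}y)\cdot\nu = y\cdot(\pmb{A}^\T\nu) = y\cdot(\pmb{A}^{-1}\nu)$ and $\pmb{A}^{-1}\nu\in\mathbb{S}^{d-1}$, the defining inequality and the domain memberships encoded in Eq.~\eqref{eq: Euler function representation of ECT} match exactly, giving $R(\pmb{A}y,\nu,t) = R(y,\pmb{A}^{-1}\nu,t)$ for all $y\in B_{\mathbb{R}^d}(0,R)$. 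Substituting this back yields
\begin{align*}
\operatorname{ERT}(\pmb{A}_*g)(\nu,t) = \int_{B_{\mathbb{R}^d}(0,R)} g(y)\cdot R(y,\pmb{A}^{-1}\nu,t)\,[d\chi(y)] = \operatorname{ERT}(g)(\pmb{A}^{-1}\nu,t),
\end{align*}
which is the asserted identity (the specialization $g=\mathbbm{1}_K$ recovers the stated relation for the ECT via Eq.~\eqref{eq: ERT is a generalization of ECT}).

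The step I expect to be the main obstacle is upgrading the change-of-variables identity from the integer-valued Euler integral $\int d\chi$ to the real-valued averaged integral $\int[d\chi]$, since the latter is defined only through the limit in Eq.~\eqref{eq: def of int [dx]}. I would handle this by noting that the floor commutes with precomposition, $\lfloor n\cdot(f\circ\psi)\rfloor = \lfloor n\cdot f\rfloor\circ\psi$, so the constructible invariance gives $\frac{1}{n}\int \lfloor n(f\circ\psi)\rfloor\,d\chi = \frac{1}{n}\int\lfloor nf\rfloor\,d\chi$ for every $n$; passing to the limit (which exists by ``Lemma 3'' of \cite{baryshnikov2010euler}) transfers the identity to $\int(\cdot)\lfloor d\chi\rfloor$, and the same argument handles $\int(\cdot)\lceil d\chi\rceil$ and hence their average $\int(\cdot)[d\chi]$. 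A minor bookkeeping point to verify along the way is that $f$ stays compactly supported and definable after the substitution: this holds because $\pmb{A}$ preserves $\operatorname{dist}(\supp(g),\partial B_{\mathbb{R}^d}(0,R))>0$, so $\pmb{A}_*g\in\mathfrak{D}_{R,d}$ and its product with the constructible kernel $R$ remains in $\operatorname{Def}$.
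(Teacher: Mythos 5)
Your proposal is correct, but it proves the theorem by a genuinely different route than the paper does. The paper's entire proof is a one-line citation: it invokes ``Proposition 2.18'' of \cite{kirveslahti2023representing} (the $\operatorname{O}(d)$-equivariance of the LECT and SELECT) and then transfers that equivariance to the ERT through the Lebesgue integration representation in Eq.~\eqref{eq: Lebesgue representation of ERT}, since an orthogonal change of direction inside the LECT/SELECT integrands passes through the $ds$-integral unchanged. You instead work directly at the level of Euler calculus: a change-of-variables identity for $\int(\cdot)\,d\chi$ under the definable homeomorphism $y\mapsto\pmb{A}y$ (using that level sets are carried to definably homeomorphic level sets, so their o-minimal Euler characteristics agree), the kernel covariance $R(\pmb{A}y,\nu,t)=R(y,\pmb{A}^{-1}\nu,t)$, and the observation that $\lfloor n\cdot(f\circ\psi)\rfloor=\lfloor n f\rfloor\circ\psi$ to push the identity through the limits defining $\lfloor d\chi\rfloor$, $\lceil d\chi\rceil$, and hence $[d\chi]$. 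Your argument is self-contained (it does not rely on Theorem \ref{thm: relationship between our proposed ERT and the referred existing transforms} or the external proposition), and it simultaneously yields the equivariance of $\lfloor\operatorname{ERT}\rfloor$ and $\lceil\operatorname{ERT}\rceil$; the paper's route is shorter but leans on machinery established elsewhere. One small point worth tightening: the invariance of $\chi$ you need is invariance under \emph{definable} homeomorphisms (indeed definable bijections), which holds here because $\pmb{A}$ is linear and hence semialgebraic under Assumption \ref{Assumption: basic requirements for o-minimal structures of interest}; phrasing it as bare ``homeomorphism invariance'' is slightly looser than what the o-minimal Euler characteristic actually guarantees, though it causes no harm in this setting. (The statement's use of ``$\operatorname{ECT}$'' for a grayscale $g$ is a typo for $\operatorname{ERT}$, and you correctly prove the intended ERT version.)
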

\noindent Theorem \ref{thm: O(d)-invariance} is a direct result of ``Proposition 2.18" of \cite{kirveslahti2023representing} via Eq.~\eqref{eq: Lebesgue representation of ERT}; hence, we omit its proof. Combining the scalar homogeneity and $\pmb{A}_*$-homogeneity, we have the following
\begin{align}\label{eq: scaling and O(d) invariance of ERT}
    \operatorname{ECT}(\lambda\cdot\pmb{A}_*g)(\nu,t)=\lambda\cdot\operatorname{ECT}(g)(\pmb{A}^{-1}\nu,t)=:\lambda\cdot \pmb{A}_*\operatorname{ECT}(g)(\nu,t),
\end{align}
for all $g\in\mathfrak{D}_{R,d}$, $\nu\in\mathbb{S}^{d-1}$, and $t\in[0,T]$. Using $\operatorname{dist}=\operatorname{dist}^{\operatorname{ERT}}_{p,q}$ as an example (see Eq.~\eqref{eq: semi distances between grayscale functions} for the definition of $\operatorname{dist}^{\operatorname{ERT}}_{p,q}$), an optimal transformation across the collection in Eq.~\eqref{eq: the collection of all scaling, rotation, and reflection transforms} can be represented via Eq.~\eqref{eq: scaling and O(d) invariance of ERT} as follows
\begin{align}\label{eq: representation of optimal transform}
    \argmin_{\lambda\in\mathbb{R} \text{ and }\pmb{A}\in\operatorname{O}(d)}\, \left\Vert \,\operatorname{ERT}(g) - \lambda\cdot\pmb{A}_*\operatorname{ERT}(g) \,\right\Vert_{L_\nu^q L_t^p}.
\end{align}
In Eq.~\eqref{eq: representation of optimal transform}, we only need to compute the ERT of $g^\diamondsuit$ and $g$ instead of all the $\lambda\cdot\pmb{A}_*g$ for all $\lambda\in\mathbb{R}$ and $\pmb{A}\in\operatorname{O}(d)$. Notably, when both $g^\diamondsuit$ and $g$ are indicator functions representing constructible sets, the alignment method detailed in Eq.~\eqref{eq: representation of optimal transform} is equivalent to the ECT-based alignment method proposed in \cite{wang2021statistical}.

To illustrate the performance of the alignment approach described in Eq.~\eqref{eq: representation of optimal transform}, we present a proof-of-concept example. Our benchmark criterion for assessing the efficacy of the alignment approach is its ability to eliminate differences between images arising from rotation and scaling. Let $g$ denote the grayscale function defined in Eq.~\eqref{eq: proof-of-concept scalar field example}. We will study the scaled and rotated version $\lambda\cdot\pmb{A}_{\theta,*}g$ of $g$, where $\pmb{A}_{\theta,*}$ denotes the dual of the rotation matrix
$\pmb{A}_\theta$ defined as
\begin{align}\label{eq: A_theta}
\pmb{A}_\theta:=
    \begin{pmatrix}
        \cos\theta & 0 & \sin\theta \\
        0 & 1 & 0 \\
        -\sin\theta & 0 & \cos\theta
    \end{pmatrix},
\end{align}
which represents rotation on the $(x_1,x_3)$-plane by angle $\theta$. Obviously, the differences between the source image $\lambda\cdot\pmb{A}_{\theta,*}g$ and the reference image $g$ vanish if and only if $\theta=0$ and $\lambda=1$ (i.e., no rotation or scaling). Hence, in this example, our proposed alignment approach is effective if 
\begin{align}\label{eq: alignment experiment, goal}
    (0,1)=\argmin_{(\theta,\lambda)} \left\{\left\Vert \,\operatorname{ERT}(g) - \lambda\cdot\pmb{A}_{\theta,*}\operatorname{ERT}(g) \,\right\Vert_{L_\nu^2 L_t^2}\right\}.
\end{align}
To validate Eq.~\eqref{eq: alignment experiment, goal}, we analyze the surface of the following function of $(\theta,\lambda)$
\begin{align}\label{eq: (theta, lambda) surface}
    (\theta,\lambda) \mapsto \left\Vert \,\operatorname{ERT}(g) - \lambda\cdot\pmb{A}_{\theta,*}\operatorname{ERT}(g) \,\right\Vert_{L_\nu^2 L_t^2},
\end{align}
which is presented in Figure \ref{fig: Alignment_distance}. The surfaces presented in Figure \ref{fig: Alignment_distance} confirm the minimization in Eq.~\eqref{eq: alignment experiment, goal} --- the minimum point of the surface corresponds to the coordinates $(\theta,\lambda)=(0,1)$, implying that our alignment approach delineated in Eq.~\eqref{eq: representation of optimal transform} is effective.

To illustrate the performance of the proposed alignment approach in our proof-of-concept example, we consider the \( g \) defined in Eq.~\eqref{eq: proof-of-concept scalar field example} as the reference image, as depicted in Figure~\ref{fig: alignment}(a). Next, \( \lambda \cdot \mathbf{A}_{\theta,*} g(x) \) with parameters \((\theta,\lambda) = (\pi/6,\, 1/5)\) is selected as the source image, as illustrated in Figure~\ref{fig: alignment}(b). The source image is a rotated and scaled version of the reference image. Aligning the source image with respect to the reference yields the post-aligned source image shown in Figure~\ref{fig: alignment}(c). The reference image in Figure~\ref{fig: alignment}(a) and post-aligned source image in Figure~\ref{fig: alignment}(c) are nearly congruent. This similarity underscores the efficacy of the proposed alignment methodology in mitigating discrepancies attributable to rotations and scaling.

\begin{figure}
    \centering
    \includegraphics[scale=0.4]{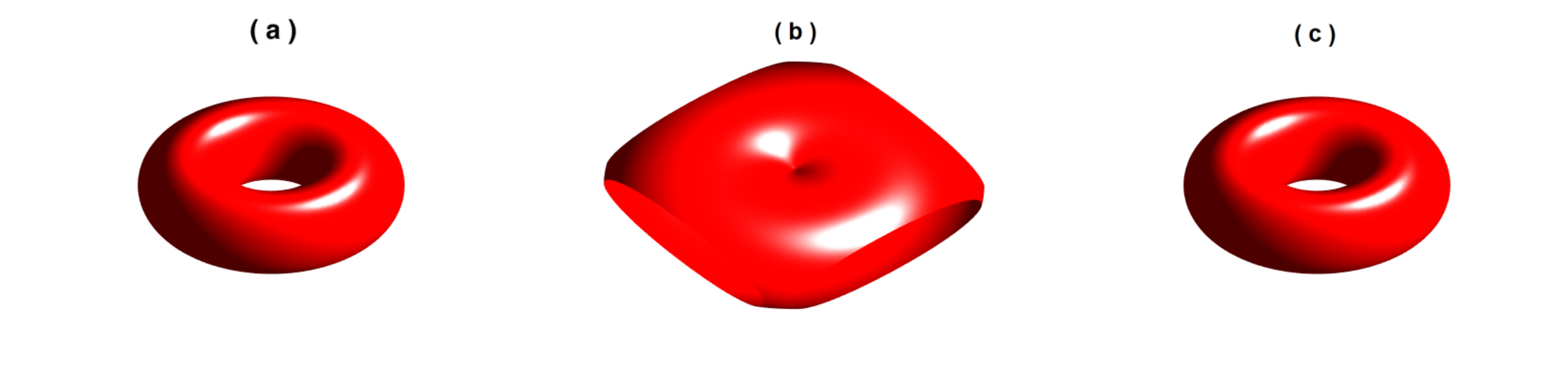}
    \caption{In panel~(a), the level set \( \{ x \in \mathbb{R}^3 : g(x) = 0.0834 \} \) is illustrated with \( g \) being defined as in Eq.~\eqref{eq: proof-of-concept scalar field example}. Panel~(b) depicts the level set \( \{ x \in \mathbb{R}^3 : \lambda \cdot \mathbf{A}_{\theta,*} g(x) = 0.0834 \} \) for parameters \( \lambda = 1/5 \) and \( \theta = \pi/6 \). Here, we consider \( g \) to be the reference image and \( \lambda \cdot \mathbf{A}_{\theta,*} g(x) \) with $(\theta,\lambda)=(\pi/6,\, 1/5)$ representing the source image that is under investigation. Panels (a) and (b) show that the reference and source images are drastically different. Employing the methodology from Eq.~\eqref{eq: representation of optimal transform}, we align \( \lambda \cdot \mathbf{A}_{\theta,*} g(x) \) which results in the aligned image denoted as \( \tilde{g} \). Panel~(c) showcases the level set \( \{ x \in \mathbb{R}^3 : \tilde{g}(x) = 0.0834 \} \), closely mirroring the level set displayed in panel~(a).}
    \label{fig: alignment}
\end{figure}

\section{Statistical Inference of Grayscale Functions}\label{section: Statistical Inference of Grayscale Functions}

We now provide our second major contribution --- approaches for statistical inference on grayscale images. The grayscale functions presented in the previous sections of this work have been viewed as deterministic. In this section, we now view grayscale functions as random where we assume that they are generated from underlying distributions satisfying some regularity conditions (see Assumptions \ref{assumption: finite second moments of SERT} and \ref{assumption: continuity of covariance function} in Section \ref{section: KL decomposition based SI}). Let $\Omega$ denote the collection of grayscale functions of interest, and assume that $\Omega$ is equipped with a $\sigma$-field $\mathscr{F}$. Next, suppose that there are two underlying grayscale function-generating distributions (probability measures), $\mathbb{P}^{(1)}$ and $\mathbb{P}^{(2)}$, defined on the sample space $(\Omega,\mathscr{F})$. Our data are two collections of random grayscale functions sampled from the two distributions: $\{g_i^{(1)}\}_{i=1}^{n}\overset{iid}{\sim}\mathbb{P}^{(1)}$ and $\{g_i^{(2)}\}_{i=1}^{n}\overset{iid}{\sim}\mathbb{P}^{(2)}$. Here, we provide approaches to testing if the two collections of functions are significantly different. More precisely, we propose methods of testing the following hypotheses
\begin{align}\label{eq: hypotheses to test for the permutation-based method}
    H_0^*: \ \ \mathbb{P}^{(1)}=\mathbb{P}^{(2)}\ \ \ \text{vs.}\ \ \ H_1^*: \ \ \mathbb{P}^{(1)}\ne\mathbb{P}^{(2)}.
\end{align}
Without loss of generality, hereafter, we assume that the grayscale images $\{g_i^{(1)}\}_{i=1}^{n}$ and $\{g_i^{(2)}\}_{i=1}^{n}$ have been aligned using the ERT-based alignment method proposed in Section \ref{section: Alignment of Images and the Invariance of ERT}.

\subsection{$\chi^2$-test via the Karhunen–Loève Expansion}\label{section: KL decomposition based SI}

In this subsection, we propose $\chi^2$-based hypothesis testing procedures via the Karhunen–Loève expansion and central limit theorem (CLT). These can be as generalizations of results presented in \cite{meng2022randomness} for the SECT and binary images. Testing the hypotheses in Eq.~\eqref{eq: hypotheses to test for the permutation-based method} is a highly nonparametric problem, and the $\chi^2$-test approaches transform it into a parametric problem.

Suppose the grayscale image $g$ is random and $g\sim\mathbb{P}^{(j)}$ for either $j=1$ or $j=2$. For each fixed $(\nu,t) \in \mathbb{S}^{d-1}\times[0,T]$, we have $\operatorname{SERT}(g)(\nu,t)$ as a real-valued random variable. Hence, for each fixed direction $\nu\in\mathbb{S}^{d-1}$, $\operatorname{SECT}(g)(\nu):=\{\operatorname{SERT}(g)(\nu,t)\}_{t\in[0,T]}$ is a stochastic process. Note that it is straightforward that the sample paths of the stochastic process are continuous (see Eq.~\eqref{eq: def of SERT}). In this section, we assume the following regarding $\mathbb{P}^{(1)}$ and $\mathbb{P}^{(2)}$.
\begin{assumption}\label{assumption: finite second moments of SERT}
    For each $j\in\{1,2\}$ and $(\nu,t)\in\mathbb{S}^{d-1}\times[0,T]$, we have the finite second moment $\mathbb{E}^{(j)}\left\vert\operatorname{SERT}(\nu,t) \right\vert^2 := \int_
    \Omega\vert\operatorname{SERT}(g)(\nu,t)\vert^2 \, \mathbb{P}^{(j)}(dg)<\infty$.
\end{assumption}
\noindent Under Assumption \ref{assumption: finite second moments of SERT}, we define the mean and covariance functions of $\operatorname{SECT}(g)(\nu)$ as follows
\begin{align*}
    & m_\nu^{(j)}(t):=\mathbb{E}^{(j)}\left\{\operatorname{SERT}(\nu,t) \right\}=\int_\Omega \operatorname{SERT}(g)(\nu,t) \, \mathbb{P}^{(j)}(dg), \\
    & \kappa_\nu^{(j)}(s,t):= \int_\Omega \Big(\operatorname{SERT}(g)(\nu,s) - m_\nu^{(j)}(s)\Big)\cdot \Big(\operatorname{SERT}(g)(\nu,t) - m_\nu^{(j)}(t)\Big) \,\mathbb{P}^{(j)}(dg),
\end{align*}
for $j\in\{1,2\}$, $\nu\in\mathbb{S}^{d-1}$, and $s,t\in[0,T]$, where $\mathbb{E}^{(j)}$ denotes the expectation associated with the probability measure $\mathbb{P}^{(j)}$. Furthermore, we need the following assumption on the covariance functions.
\begin{assumption}\label{assumption: continuity of covariance function}
    For each $j\in\{1,2\}$ and fixed $\nu\in\mathbb{S}^{d-1}$, the function $(s,t) \mapsto \kappa_\nu^{(j)}(s,t)$ is continuous on the product space $[0,T]\times[0,T]$.
\end{assumption}
\noindent Under Assumption \ref{assumption: continuity of covariance function}, the stochastic process $\operatorname{SECT}(g)(\nu)$ is mean-square continuous, which is a direct result of ``Lemma 4.2" of \cite{alexanderian2015brief}. The mean-square continuity implies that $t\mapsto m_\nu^{(j)}(t)$ is a continuous function over the compact interval $[0,T]$. 

Distinguishing the two collections of grayscale images, $\{g_i^{(1)}\}_{i=1}^{n}\overset{iid}{\sim}\mathbb{P}^{(1)}$ and $\{g_i^{(2)}\}_{i=1}^{n}\overset{iid}{\sim}\mathbb{P}^{(2)}$, is done by rejecting the null hypothesis $H_0^*$ in Eq.~\eqref{eq: hypotheses to test for the permutation-based method}. To reject the null $H_0^*$ in Eq.~\eqref{eq: hypotheses to test for the permutation-based method}, it suffices to reject the null hypothesis $H_0$ in the following test
\begin{align}\label{eq: hypotheses on means}
    \begin{aligned}
        & H_0: \, m_{\nu^*}^{(1)}(t) = m_{\nu^*}^{(2)}(t) \text{ for all }t\in[0,T] \ \ \ vs. \ \ \ H_1: \, m_{\nu^*}^{(1)}(t') = m_{\nu^*}^{(2)}(t') \text{ for some }t'\in[0,T], \\
    & \text{where }\ \ \nu^* := \argmax_{\nu\in\mathbb{S}^{d-1}} \left\{\sup_{t\in[0,T]}\left\vert m_{\nu^*}^{(1)}(t) - m_{\nu^*}^{(2)}(t) \right\vert\right\}.
    \end{aligned}
\end{align}
We need the following assumption to perform a $\chi^2$-test for the hypotheses in Eq.~\eqref{eq: hypotheses on means}.
\begin{assumption}\label{assumption: equal covariance assumption}
$\kappa_{\nu^*}^{(1)}=\kappa_{\nu^*}^{(2)}$ where the direction $\nu^*$ is defined in Eq.~\eqref{eq: hypotheses on means}.
\end{assumption}
\noindent Assumption \ref{assumption: equal covariance assumption} is true under the null $H_0^*: \mathbb{P}^{(1)} = \mathbb{P}^{(2)}$ in Eq.~\eqref{eq: hypotheses to test for the permutation-based method}. Under Assumption \ref{assumption: equal covariance assumption}, we denote $\kappa:=\kappa_{\nu^*}^{(1)}=\kappa_{\nu^*}^{(2)}$. Under Assumption \ref{assumption: continuity of covariance function}, we have $\kappa\in L^2([0,T]\times[0,T])$ which further implies that the integral operator $f\mapsto \int_0^T f(s)\cdot\kappa(s,\cdot)\, ds$ defined on $L^2(0,T)$ is a compact, positive, and self-adjoint (see ``Lemma 5.1" of \cite{alexanderian2015brief}). The Hilbert-Schmidt theorem (see ``Theorem VI.16" of \cite{reed2012methods}) indicates that this integral operator has countably many orthonormal eigenfunctions  $\{\phi_l\}_{l=1}^\infty$ and nonnegative eigenvalues $\{\lambda_l\}_{l=1}^\infty$. Without loss of generality, we assume $\lambda_1\ge\lambda_2\ge\ldots\ge0$. Following the proof of the ``Karhunen-Loève expansion" in \cite{meng2022randomness}, one can show the following result.
\begin{theorem}\label{thm: Karhunen–Loève expansion}
Suppose $g^{(1)}\sim\mathbb{P}^{(1)}$ and $g^{(2)}\sim\mathbb{P}^{(2)}$ are independent. Let $\mathbb{P}^{(1)}\otimes\mathbb{P}^{(2)}$ denote a product probability measure. For each fixed $l\in\mathbb{N}$, the following identity holds with probability one
    \begin{align}\label{eq: Karhunen–Loève expansion}
    \frac{1}{\sqrt{2\lambda_l}}\int_0^T \left\{\, \operatorname{SERT}(g^{(1)})(\nu^*,t) - \operatorname{SERT}(g^{(2)})(\nu^*,t) \,\right\}\cdot\phi_l(t) \,dt = \theta_l + \left(\frac{Z_l^{(1)}(g^{(1)})-Z_l^{(2)}(g^{(1)})}{\sqrt{2}}\right),
\end{align}
that is, $\mathbb{P}^{(1)}\otimes\mathbb{P}^{(2)}\left\{\text{Eq.~\eqref{eq: Karhunen–Loève expansion} holds}\right\}=1$, where,
\begin{align}\label{eq: def of theta_l and Z}
    \begin{aligned}
        & \theta_l:= \frac{1}{\sqrt{2\lambda_l}} \int_0^T \left\{\, m_{\nu^*}^{(1)}(t) - m_{\nu^*}^{(2)}(t) \,\right\}\cdot\phi_l(t) \,dt, \\
    & Z_l^{(j)}(g)=\frac{1}{\sqrt{\lambda_l}} \int_0^T \left\{\, \operatorname{SECT}(g)(\nu^*,t)-m_{\nu^*}^{(j)}(t) \,\right\}\cdot\phi_l(t) \,dt, \ \ \ \text{ for }j\in\{1,2\}.
    \end{aligned}
\end{align}
Furthermore, for each $j\in\{1,2\}$, random variables $\{Z_l^{(j)}\}_{l=1}^\infty$ are defined on the probability space $(\Omega,\mathscr{F},\mathbb{P}^{(j)})$, mutually uncorrelated, and have mean 0 and variance 1. 
\end{theorem}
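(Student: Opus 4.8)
The plan is to treat this as a direct application of the Karhunen–Loève machinery developed for the SECT in \citep{meng2022randomness}, adapted to the SERT; the only genuinely new ingredient is an elementary linear decomposition of the integrated difference, while everything else is a re-derivation of the orthogonality and moment properties of the coordinate random variables. First I would record that, for each fixed $\nu$, the map $t \mapsto \operatorname{SERT}(g)(\nu, t)$ has continuous sample paths (immediate from Eq.~\eqref{eq: def of SERT}) while $g \mapsto \operatorname{SERT}(g)(\nu, t)$ is measurable for each fixed $t$; together these yield joint measurability of $(g,t)\mapsto\operatorname{SERT}(g)(\nu,t)$. Combined with the finite second moment of Assumption~\ref{assumption: finite second moments of SERT} and the Cauchy-Schwarz inequality, this guarantees that each integral $\int_0^T \operatorname{SERT}(g)(\nu^*, t)\,\phi_l(t)\,dt$ is almost surely finite and that Fubini's theorem may be used to interchange $\mathbb{E}^{(j)}$ with $\int_0^T$. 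This is precisely what makes the random variables $Z_l^{(j)}$ well-defined and legitimizes every integral manipulation below.

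Second, I would establish the identity itself, which is purely algebraic once the integrals exist. Writing $\operatorname{SERT}(g^{(1)})(\nu^*, t) - \operatorname{SERT}(g^{(2)})(\nu^*, t)$ as the sum of the two centered terms $[\operatorname{SERT}(g^{(1)})(\nu^*, t) - m_{\nu^*}^{(1)}(t)]$ and $-[\operatorname{SERT}(g^{(2)})(\nu^*, t) - m_{\nu^*}^{(2)}(t)]$ plus the mean difference $m_{\nu^*}^{(1)}(t) - m_{\nu^*}^{(2)}(t)$, then integrating against $\phi_l$ and scaling by $(2\lambda_l)^{-1/2}$, linearity of the integral splits the left-hand side into exactly $\tfrac{1}{\sqrt 2}Z_l^{(1)}(g^{(1)})$, $-\tfrac{1}{\sqrt 2}Z_l^{(2)}(g^{(2)})$, and $\theta_l$, matching the definitions in Eq.~\eqref{eq: def of theta_l and Z}. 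Since the three integrals are finite on a probability-one event, the identity holds $\mathbb{P}^{(1)}\otimes\mathbb{P}^{(2)}$-almost surely. I would note that independence of $g^{(1)}$ and $g^{(2)}$ is used only to place the two coordinate families on separate factors of the product space, so that $Z_l^{(1)}(g^{(1)})$ and $Z_l^{(2)}(g^{(2)})$ depend on disjoint coordinates; it plays no role in the within-family properties.

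Third, I would verify the moment claims for $\{Z_l^{(j)}\}_{l}$ under a single $\mathbb{P}^{(j)}$. The mean vanishes because, after interchanging $\mathbb{E}^{(j)}$ and $\int_0^T$, the integrand becomes $\mathbb{E}^{(j)}[\operatorname{SERT}(g)(\nu^*, t)] - m_{\nu^*}^{(j)}(t) = 0$ by the definition of $m_{\nu^*}^{(j)}$. For the second moments I would compute $\mathbb{E}^{(j)}[Z_l^{(j)} Z_{l'}^{(j)}]$, move the expectation inside the resulting double integral by Fubini, and identify the inner expectation as the covariance kernel $\kappa_{\nu^*}^{(j)}(s,t)$, which equals the common kernel $\kappa$ under Assumption~\ref{assumption: equal covariance assumption}. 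Applying the eigenfunction equation $\int_0^T \kappa(s,t)\,\phi_l(s)\,ds = \lambda_l\,\phi_l(t)$ (valid since $\kappa$ is the symmetric kernel diagonalized by the $\phi_l$ with eigenvalues $\lambda_l$ via the Hilbert-Schmidt theorem) collapses one integral, and orthonormality of $\{\phi_l\}$ in $L^2(0,T)$ then gives $\mathbb{E}^{(j)}[Z_l^{(j)} Z_{l'}^{(j)}] = \delta_{ll'}$, yielding variance one and mutual uncorrelatedness at once.

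The main obstacle is not any single computation but the measure-theoretic bookkeeping underlying the two Fubini interchanges: one must confirm joint measurability of the SERT process and square-integrability of its sample paths before expectations can pass through the $t$-integrals. These are secured by continuity of the SERT sample paths together with Assumption~\ref{assumption: finite second moments of SERT} and the mean-square continuity that follows from Assumption~\ref{assumption: continuity of covariance function} via ``Lemma 4.2'' of \citep{alexanderian2015brief}. Since \citep{meng2022randomness} already executed this program for the SECT and the SERT inherits the same structural properties (continuous paths, finite second moments under the stated assumptions, a Hilbert-Schmidt covariance operator), the transfer is essentially routine, and the bulk of the argument may be imported by reference, with only the two-sample decomposition in the second step genuinely specific to the difference considered here.
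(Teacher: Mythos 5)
Your proposal is correct and follows essentially the same route as the paper, which gives no standalone argument but simply defers to the proof of the Karhunen--Lo\`{e}ve expansion in \cite{meng2022randomness}; your three steps (the measurability/Fubini bookkeeping, the linear decomposition of the integrated difference into $\theta_l$ plus the two centered coordinates, and the eigenfunction-equation/orthonormality computation of the moments) are exactly that argument transplanted to the SERT. You also implicitly correct the paper's typo $Z_l^{(2)}(g^{(1)})$ to $Z_l^{(2)}(g^{(2)})$ in Eq.~\eqref{eq: Karhunen–Loève expansion}, which is the intended statement.
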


Following the discussion in \cite{meng2022randomness}, one can show that the null hypothesis $H_0$ in Eq.~\eqref{eq: hypotheses on means} is equivalent to $\theta_l=0$ for all $l=1,2,3,\ldots$. It is infeasible to check $\theta_l$ for all positive integers $l$. In addition, a small $\lambda_l$ in the denominator (e.g., see Eq.~\eqref{eq: def of theta_l and Z}) induces numerical instability. Therefore, we only consider $\theta_l$ for $l=1,2,\ldots,L$, where $L$ is given as the following
\begin{align}\label{eq: def of L}
    L := \min\left\{ \, k\in\mathbb{N} \,:\, \frac{\sum_{k'=1}^k \lambda_{k'}}{\sum_{k''=1}^\infty \lambda_{k''}} > 0.99\right\}.
\end{align}
Here, 0.99 can be replaced with any value in $(0,\,1)$; we take 0.99 as an example. The $L$ defined in Eq.~\eqref{eq: def of L} is motivated by principal component analysis \citep{jolliffe2002principal} and indicates that we maintain at least $99\%$ of the cumulative variance in the data. Hence, to test the hypotheses in Eq.~\eqref{eq: hypotheses on means}, we may test the following approximate hypotheses
\begin{align}\label{eq: approximate hypotheses}
    \widehat{H_0}: \, \theta_0=\theta_1=\cdots=\theta_L=0 \ \ \ vs. \ \ \ \widehat{H_1}: \, \text{there exists }k'\in\{1,2,\ldots,L\} \text{ such that }\theta_{k'} \ne 0.
\end{align}
Given data $\{g^{(1)}_i\}_{i=1}^n \overset{iid}{\sim} \mathbb{P}^{(1)}$ and $\{g^{(2)}_i\}_{i=1}^n \overset{iid}{\sim} \mathbb{P}^{(2)}$, the approximate hypotheses in Eq.~\eqref{eq: approximate hypotheses} can be tested using the random variables $\{\xi_{l,i}:\,l=1,\ldots,L \text{ and } i=1,\ldots,n\}$ defined as follows
\begin{align*}
    \xi_{l,i} &:= \frac{1}{\sqrt{2\lambda_l}}\int_0^T \left\{\, \operatorname{SERT}(g_i^{(1)})(\nu^*,t) - \operatorname{SERT}(g_i^{(2)})(\nu^*,t) \,\right\}\cdot\phi_l(t) \,dt = \theta_l + \left(\frac{Z_l^{(1)}(g_i^{(1)})-Z_l^{(2)}(g_i^{(1)})}{\sqrt{2}}\right).
\end{align*}
Theorem \ref{eq: def of L} implies that the random variables $\xi_{l,i}$ satisfy the following properties:
\begin{itemize}
    \item For each $l\in\{1,\ldots,L\}$ and $i\in\{1,\ldots,n\}$, the random variable $\xi_{l,i}$ has mean $\theta_l$ and variance 1.
    \item For each fixed $i\in\{1,\ldots,n\}$, the random variables $\xi_{1,i},\ldots, \xi_{L,i}$ are mutually uncorrelated.
    \item For each fixed $l\in\{1,\ldots,L\}$, the random variables $\xi_{l,1},\ldots, \xi_{l,n}$ are iid.
\end{itemize}
The properties above indicate:
\begin{enumerate}
    \item For each fixed $l\in\{1,\ldots,L\}$, the standardized $\frac{1}{\sqrt{n}}\sum_{i=1}^n \xi_{l,i}$ asymptotically follows a standard Gaussian distribution $N(0,1)$ under the null hypothesis $\widehat{H_0}$ in Eq.~\eqref{eq: approximate hypotheses}.
    \item The asymptotic normality implies that random variables $\frac{1}{\sqrt{n}}\sum_{i=1}^n \xi_{1,i},\ldots, \frac{1}{\sqrt{n}}\sum_{i=1}^n \xi_{L,i}$ are asymptotically independent.
\end{enumerate}
Hence, $\sum_{l=1}^L\left(\frac{1}{\sqrt{n}}\sum_{i=1}^n \xi_{l,i}\right)^2$ is asymptotically $\chi^2_L$ under the null hypothesis $\widehat{H_0}$ in Eq.~\eqref{eq: approximate hypotheses}, and we reject $\widehat{H_0}$ with asymptotic significance $\alpha\in(0,1)$ if
\begin{align}\label{eq: chi-square test}
    \sum_{l=1}^L\left(\frac{1}{\sqrt{n}}\sum_{i=1}^n \xi_{l,i}\right)^2 > \chi_{L,1-\alpha}^2 =  \text{ the $1-\alpha$ lower quantile of the $\chi^2_L$ distribution}.
\end{align}
\noindent Overall, we summarize our hypothesis testing problem as follows. Our goal is to test the hypotheses in Eq.~\eqref{eq: hypotheses to test for the permutation-based method}. Through the Karhunen–Loève expansion (see Theorem \ref{thm: Karhunen–Loève expansion}), it suffices to test the approximate hypotheses in Eq.~\eqref{eq: approximate hypotheses}, which can be achieved by the $\chi^2$-test in Eq.~\eqref{eq: chi-square test}.

Suppose we have two groups of grayscale functions, $\{g^{(1)}_i\}_{i=1}^n \overset{iid}{\sim} \mathbb{P}^{(1)}$ and $\{g^{(2)}_i\}_{i=1}^n \overset{iid}{\sim} \mathbb{P}^{(2)}$. We calculate the discretized SERT of the grayscale images, denoted as $\mathcal{D}^{(j)}:=\{\operatorname{SERT}(g_i^{(j)})(\nu_p,t_q):\,p=1,\ldots,\Gamma \text{ and }q=1,\ldots,\Delta\}_{i=1}^n$ for $j\in\{1,2\}$. Then, we apply $\mathcal{D}^{(1)}$ and $\mathcal{D}^{(2)}$ as our input data to test the hypotheses in Eq.~\eqref{eq: hypotheses on means}, which are approximated by that in Eq.~\eqref{eq: approximate hypotheses}. We may apply ``Algorithm 1" in \cite{meng2022randomness} to test the hypotheses in Eq.~\eqref{eq: approximate hypotheses} by simply replacing the ``SECT of two collections of shapes" therein with the ``SERT of two collections of grayscale functions''. The replacing of the SECT with SERT approach is explicitly summarized in Algorithm \ref{algorithm: chi-sq test}.
\begin{algorithm}[H]
	\caption{: $\chi^2$-test}\label{algorithm: chi-sq test}
	\begin{algorithmic}[1]
		\INPUT
        \noindent (i) Two collections $\{g_i^{(1)}\}_{i=1}^{n}$ and $\{g_i^{(2)}\}_{i=1}^{n}$ of grayscale functions; (ii) desired asymptotic confidence level $1-\alpha$ with asymptotic significance $\alpha\in(0,1)$.
		\OUTPUT \texttt{Accept} or \texttt{Reject} the null hypothesis $\widehat{H_0}$ in Eq.~\eqref{eq: approximate hypotheses}. (Rejecting the $\widehat{H_0}$ implies rejecting the null hypothesis $H_0^*$ in Eq.~\eqref{eq: hypotheses to test for the permutation-based method}.)
		\State Compute the discretized SERT $\{\operatorname{SERT}(g_i^{(j)})(\nu_p,t_q):\,p=1,\ldots,\Gamma \text{ and }q=1,\ldots,\Delta\}_{i=1}^n$ for $j\in\{1,2\}$ of the input grayscale functions.
  \State Replace the input ``SECT" in ``Algorithm 1" of \cite{meng2022randomness} with the SERT computed in the previous step.
  \State Implement ``Algorithm 1" of \cite{meng2022randomness} and get the output.
		\end{algorithmic}
\end{algorithm}

Although the null hypothesis in Eq.~\eqref{eq: hypotheses to test for the permutation-based method} theoretically implies Assumption \ref{assumption: equal covariance assumption}, the finite sample size $n<\infty$ may numerically violate Assumption \ref{assumption: equal covariance assumption} due to the inaccuracy in the estimation of covariance functions. The (numerical) violation of Assumption \ref{assumption: equal covariance assumption} tends to lead to type-I error inflation. To reduce the type-I error rate,  we apply a permutation technique \citep{good2013permutation}. That is, we first apply Algorithm \ref{algorithm: chi-sq test} to our original grayscale images $g_i^{(j)}$ and then repeatedly re-apply Algorithm \ref{algorithm: chi-sq test} to the grayscale images with shuffled group labels $j$. Next, we compare how the $\chi^2$ statistic derived from the original data (see Eq.~\eqref{eq: chi-square test}) differs from that computed on the shuffled data. The idea behind the permutation approach is that shuffling the group labels $j$ of images $g_i^{(j)}$ should not significantly change the test statistic under the null hypothesis. The combination of the permutation technique and Algorithm \ref{algorithm: chi-sq test} is an analog of the permutation test proposed in \cite{meng2022randomness}. We summarize this method in Algorithm \ref{algorithm: permutation-based chi-sq test}. Among the algorithms we propose throughout, we particularly recommend using Algorithm \ref{algorithm: permutation-based chi-sq test} in practice --- this is also supported by simulation study results presented in Section \ref{section: Numerical Experiments}. Specifically, we will show that Algorithm \ref{algorithm: permutation-based chi-sq test} is uniformly powerful under the alternative hypotheses and does not suffer from type I error inflation.
\begin{algorithm}[H]
	\caption{: Permutation-based $\chi^2$-test}\label{algorithm: permutation-based chi-sq test}
	\begin{algorithmic}[1]
		\INPUT
        \noindent (i) Two collections $\{g_i^{(1)}\}_{i=1}^{n}$ and $\{g_i^{(2)}\}_{i=1}^{n}$ of grayscale functions; (ii) desired asymptotic confidence level $1-\alpha$ with asymptotic significance $\alpha\in(0,1)$; (iii) the number of permutations $\Pi$.
		\OUTPUT \texttt{Accept} or \texttt{Reject} the null hypothesis $\widehat{H_0}$ in Eq.~\eqref{eq: approximate hypotheses}. (Rejecting the $\widehat{H_0}$ implies rejecting the null hypothesis $H_0^*$ in Eq.~\eqref{eq: hypotheses to test for the permutation-based method}.)
		\State Compute the discretized SERT $\{\operatorname{SERT}(g_i^{(j)})(\nu_p,t_q):\,p=1,\ldots,\gamma \text{ and }q=1,\ldots,\Delta\}_{i=1}^n$ for $j\in\{1,2\}$ of the input grayscale functions.
  \State Replace the input ``SECT" in ``Algorithm 2" of \cite{meng2022randomness} with the SERT computed in the previous step.
  \State Implement ``Algorithm 2" of \cite{meng2022randomness} and get the output.
		\end{algorithmic}
\end{algorithm}

\subsection{Full Permutation Test}

In Section \ref{section: KL decomposition based SI}, we proposed two parametric-based approaches --- Algorithms \ref{algorithm: chi-sq test} and \ref{algorithm: permutation-based chi-sq test} --- to testing the hypotheses in Eq.~\eqref{eq: hypotheses to test for the permutation-based method}. Although Algorithm \ref{algorithm: permutation-based chi-sq test} involves a permutation-like technique, it still heavily depends on the $\chi^2$-test in Eq.~\eqref{eq: chi-square test}. In contrast, we also propose a full permutation hypothesis test. We will also compare this approach with Algorithms \ref{algorithm: chi-sq test} and \ref{algorithm: permutation-based chi-sq test} using simulations in Section \ref{section: Numerical Experiments}. The simulation studies therein indicate that our proposed Algorithms \ref{algorithm: chi-sq test} and \ref{algorithm: permutation-based chi-sq test} tend to be more powerful than the fully permutation-based test. Following a strategy proposed by \cite{robinson2017hypothesis}, we apply the full permutation test based on the following loss function
\begin{align}\label{eq: loss function for the permutation test}
    L\left(\{g_i^{(1)}\}_{i=1}^{n},\, \{g_i^{(2)}\}_{i=1}^{n}\right) := \frac{1}{2n(n-1)} \sum_{k,l=1}^{n} \left\{\, \operatorname{dist}\left(\, g_k^{(1)},\, g_l^{(1)}\,\right) + \operatorname{dist}\left(\, g_k^{(2)},\, g_l^{(2)}\,\right) \,\right\},
\end{align}
where $\operatorname{dist}\in\{  \operatorname{dist}^{\operatorname{ERT}}_{p,q}, \operatorname{dist}^{\operatorname{SERT}}_{p,q}, \operatorname{dist}^{\operatorname{SELECT}}_{p}, \operatorname{dist}^{\operatorname{MEC}}_{p}\}$ (see Eq.~\eqref{eq: semi distances between grayscale functions} and Eq.~\eqref{eq: metrics defined in kirveslahti2023representing}). The full permutation test based on Eq.~\eqref{eq: loss function for the permutation test} is summarized in Algorithm \ref{algorithm: permutation-based test}.

    

\begin{algorithm}[h]
	\caption{: Full Permutation Test}\label{algorithm: permutation-based test}
	\begin{algorithmic}[1]
		\INPUT
        \noindent (i) Two collections $\{g_i^{(1)}\}_{i=1}^{n}$ and $\{g_i^{(2)}\}_{i=1}^{n}$ of grayscale functions; (ii) desired asymptotic confidence level $1-\alpha$ with $\alpha\in(0,1)$; (iii) the number $\Pi$ of permutations; (iv) distance function $\operatorname{dist}\in\{  \operatorname{dist}^{\operatorname{ERT}}_{p,q}, \operatorname{dist}^{\operatorname{SERT}}_{p,q}, \operatorname{dist}^{\operatorname{SELECT}}_{p}, \operatorname{dist}^{\operatorname{MEC}}_{p}\}$ with prespecified parameters $p$ and $q$.
		\OUTPUT \texttt{Accept} or \texttt{Reject} the null hypothesis $H_0^*$ in Eq.~\eqref{eq: hypotheses to test for the permutation-based method}.
		\State Apply Eq.~\eqref{eq: loss function for the permutation test} to the original input grayscale functions and compute the value of the loss $$\mathfrak{S}_0 := L\left(\{g_i^{(1)}\}_{i=1}^{n},\, \{g_i^{(2)}\}_{i=1}^{n}\right).$$
\FORALL{$k=1,\cdots,\Pi$, }
\State Randomly permute the group labels $j\in\{1,2\}$ of the input grayscale functions where the permuted grayscale functions are denoted as $\{\Tilde{g}_i^{(1)}\}_{i=1}^{n}$ and $\{\Tilde{g}_i^{(2)}\}_{i=1}^{n}$.
\State Apply Eq.~\eqref{eq: loss function for the permutation test} to the permuted grayscale functions and compute the value of the loss $$\mathfrak{S}_k := L\left(\{\Tilde{g}_i^{(1)}\}_{i=1}^{n},\, \{\Tilde{g}_i^{(2)}\}_{i=1}^{n}\right).$$
\ENDFOR
\State Compute $k^* :=\lfloor\alpha\cdot\Pi\rfloor:=$ the largest integer smaller than $\alpha\cdot\Pi$. 
\State \texttt{Reject} the null hypothesis $H_0$ if $\mathfrak{S}_0<\mathfrak{S}_{k^*}$ and report the output.
		\end{algorithmic}
\end{algorithm}

\begin{figure}[h]
    \centering
    \includegraphics[scale=0.22]{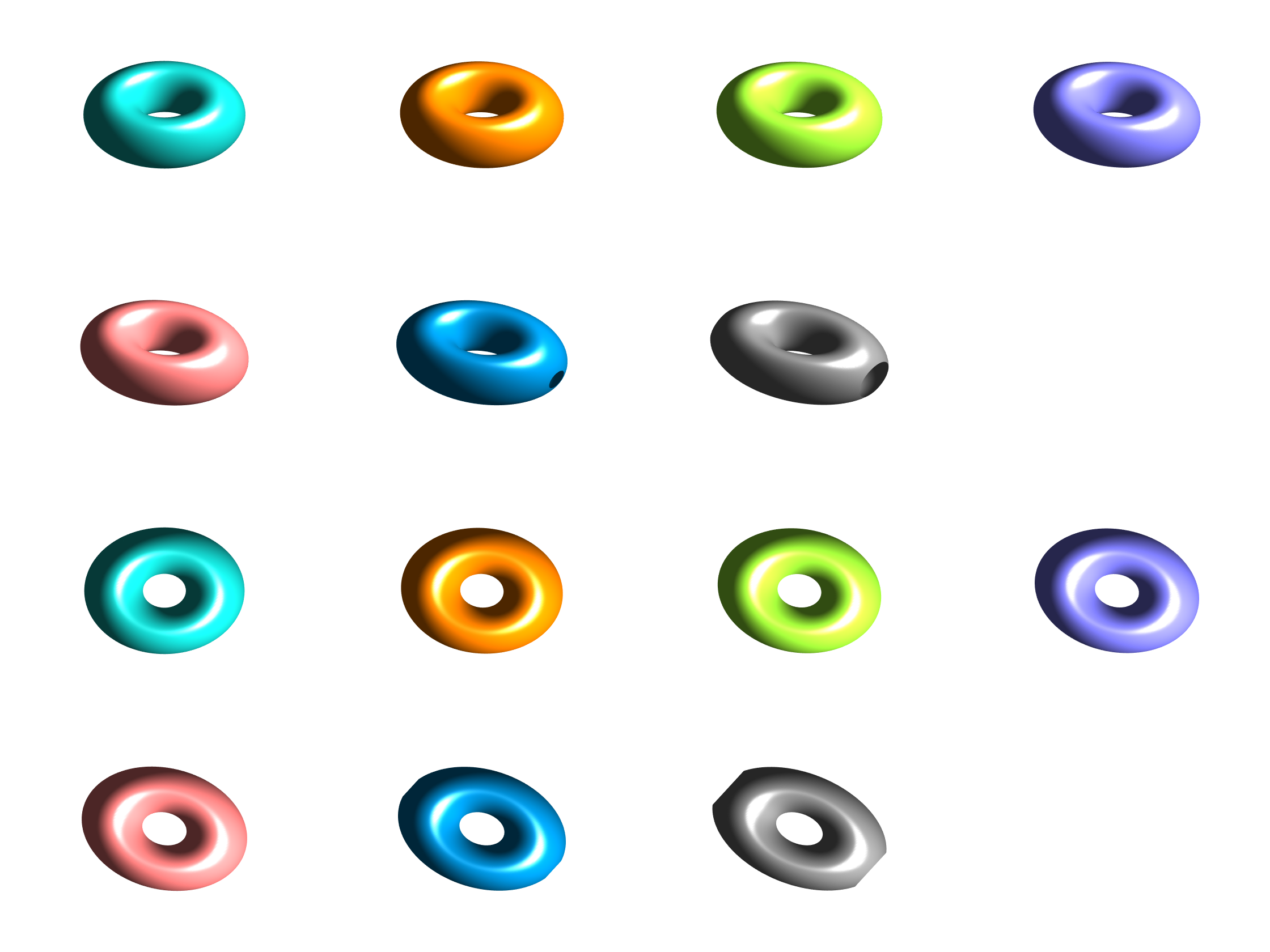}
    \caption{The first two rows and the last two rows present the same collection of seven surfaces from different angles. The seven surfaces in each collection represent the level set $\{x\in\mathbb{R}^3:\, h^{(\epsilon)}(x)=0.0834\}$ for seven indices $\epsilon\in\{0.7, \, 0.8, \, 0.85, \, 0.875, \, 0.9, \, 0.95, \, 1 \}$, respectively.}
    \label{fig: donuts}
\end{figure}

\begin{figure}[h]
    \centering
    \includegraphics[scale=0.8]{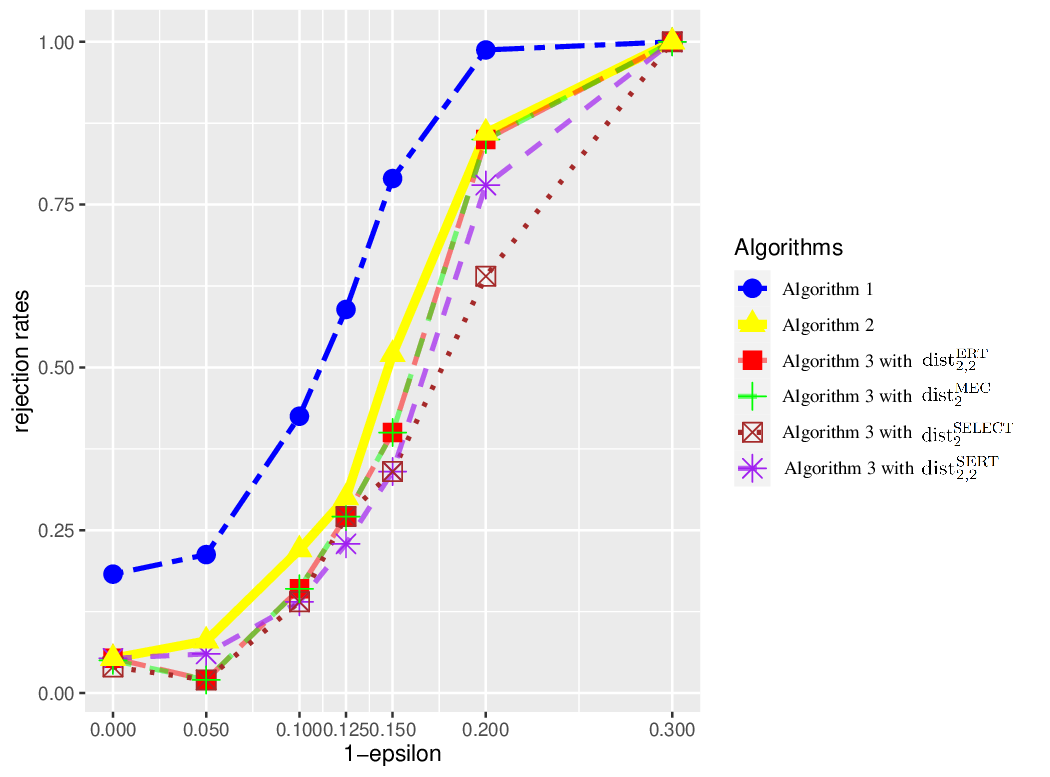}
    \caption{Rejection rates of Algorithms \ref{algorithm: chi-sq test}, \ref{algorithm: permutation-based chi-sq test}, and \ref{algorithm: permutation-based test} across different indices $\epsilon$. The verticle axis indicates the rejection rates across the 100 data replicates and the horizontal axis indicates $1-\epsilon$. In the figure, the label ``Algorithm \ref{algorithm: permutation-based test} with \(\operatorname{dist}_{2,2}^{\operatorname{ERT}}\)" refers to the implementation of Algorithm \ref{algorithm: permutation-based test} where the input is \(\operatorname{dist}=\operatorname{dist}_{2,2}^{\operatorname{ERT}}\). Analogously, labels with \(\operatorname{dist}_{2,2}^{\operatorname{SERT}}\), \(\operatorname{dist}_{2}^{\operatorname{SELECT}}\), and \(\operatorname{dist}_{2}^{\operatorname{MEC}}\) function in the same manner.}
    \label{fig: rates}
\end{figure}

\section{Numerical Experiments}\label{section: Numerical Experiments}

In this section, we show the performance of our proposed Algorithms \ref{algorithm: chi-sq test}, \ref{algorithm: permutation-based chi-sq test}, and \ref{algorithm: permutation-based test} using simulations. Specifically, we generate grayscale functions from a family of random fields and apply our proposed algorithms to them. Motivated by the simulation designs in \cite{meng2022randomness} and \cite{kirveslahti2023representing}, we apply Algorithms \ref{algorithm: chi-sq test}-\ref{algorithm: permutation-based test} to the following family of random grayscale functions
\begin{align*}
    h^{(\epsilon)}(x_1, x_2, x_3):= \left\{\left(\sqrt{\frac{\alpha}{\epsilon}\cdot x_1^2 + \epsilon\cdot\beta\cdot x_2^2}-\delta\right)^2 + \gamma\cdot x_3^2\right\}\cdot\mathbbm{1}_{\{-1\le x_1, x_2, x_3\le 1\}}, \ \ \ \text{ for }\epsilon\in[0.7, \, 1].
\end{align*}
where $\epsilon$ is a deterministic index, $\alpha, \beta, \gamma$ are iid $\operatorname{Unif}(0.5, 1)$ random variables, $\delta\sim \operatorname{Unif}(0.4, 0.6)$, and all the random variables are independent. Let $\mathbb{P}^{(\epsilon)}$ denote the underlying distribution corresponding to $h^{(\epsilon)}$. All the realizations of $h^{(\epsilon)}$ belong to $\mathfrak{D}_{R,d}$ with $R=2$ and $d=3$. The level sets $\{x\in\mathbb{R}^3:\, h^{(\epsilon)}(x)=0.0834\}$ for different indices $\epsilon\in\{0.7, \, 0.8, \, 0.85, \, 0.875, \, 0.9, \, 0.95, \, 1 \}$ are presented in Figure \ref{fig: donuts}.

\setlength{\extrarowheight}{2pt}
\begin{table}[h]
\centering
\caption{Rejection rates (RRs) of Algorithms \ref{algorithm: chi-sq test}, \ref{algorithm: permutation-based chi-sq test}, and \ref{algorithm: permutation-based test} across different indices $\varepsilon$. In the table, the label ``Algorithm \ref{algorithm: permutation-based test} with \(\operatorname{dist}_{2,2}^{\operatorname{ERT}}\)" refers to the implementation of Algorithm \ref{algorithm: permutation-based test} where the input is \(\operatorname{dist}=\operatorname{dist}_{2,2}^{\operatorname{ERT}}\). Analogously, labels with \(\operatorname{dist}_{2,2}^{\operatorname{SERT}}\), \(\operatorname{dist}_{2}^{\operatorname{SELECT}}\), and \(\operatorname{dist}_{2}^{\operatorname{MEC}}\) function in the same manner.}
    \label{table: epsilon vs. rejection rates}
    \vspace*{0.5em}
\begin{tabular}{|c||c||c|c|c|c|c|c|}
\hline
& \textbf{Null} $\boldsymbol{H_0}$ & \multicolumn{6}{c|}{\textbf{Alternative} $\boldsymbol{H_1}$} \\ [2pt]
\hline
\textbf{Indices $\boldsymbol{(1-\varepsilon)}$}  & \textbf{0.000}  & \textbf{0.050}  & \textbf{0.100} & \textbf{0.125} & \textbf{0.150} & \textbf{0.200} & \textbf{0.300}  \\ [2pt]\hline
RRs of Algorithm \ref{algorithm: chi-sq test} & 0.18 & 0.21 & 0.43 & 0.59   & 0.79 & 0.99 & 1.00 \\ [2pt]\hline
RRs of Algorithm \ref{algorithm: permutation-based chi-sq test} & 0.05 & 0.08 & 0.22 & 0.30  & 0.52 & 0.86 & 1.00 \\ [2pt]\hline
RRs of Algorithm \ref{algorithm: permutation-based test} with $\operatorname{dist}^{\operatorname{ERT}}_{2,2}$ & 0.05 & 0.02 & 0.16 & 0.27  & 0.40 & 0.86 & 1.00\\ [2pt]\hline
RRs of Algorithm \ref{algorithm: permutation-based test}  with $\operatorname{dist}^{\operatorname{SERT}}_{2,2}$  & 0.05 & 0.06 & 0.14 & 0.23  & 0.34 & 0.78 & 1.00 \\ [2pt]\hline
RRs of Algorithm \ref{algorithm: permutation-based test} with $\operatorname{dist}^{\operatorname{SELECT}}_{2}$& 0.04 & 0.02 & 0.14 & 0.27  & 0.34 & 0.64 & 1.00 \\ [2pt]\hline
RRs of Algorithm \ref{algorithm: permutation-based test} with $\operatorname{dist}^{\operatorname{MEC}}_{2}$& 0.05 & 0.02 & 0.16 & 0.27  & 0.40 & 0.86 & 1.00
 \\ [2pt]\hline
\end{tabular}
\end{table}

We apply our proposed algorithms to test the following hypotheses
\begin{align}\label{eq: hypotheses for the family indexed by epsilon}
    H_0: \ \ \mathbb{P}{^{(1)}} = \mathbb{P}{^{(\epsilon)}}\ \ vs. \ \ H_1: \ \ \mathbb{P}{^{(1)}} \ne \mathbb{P}{^{(\epsilon)}}.
\end{align}
The null hypothesis $H_0$ in Eq.~\eqref{eq: hypotheses for the family indexed by epsilon} is true if and only if $\epsilon=1$. We generate $n=30$ realizations of $h^{(1)}$. Then, for each $\epsilon\in[0.7, 1]$, we generate $n=30$ realizations of $h^{(\epsilon)}$. We apply Algorithms \ref{algorithm: chi-sq test}, \ref{algorithm: permutation-based chi-sq test}, and \ref{algorithm: permutation-based test} to the two collections of generated grayscale functions to test the hypotheses in Eq.~\eqref{eq: hypotheses for the family indexed by epsilon} with significance $0.05$ (i.e., the expected type I error rate is 0.05). We repeat this procedure 100 times, go through values of $\epsilon\in[0.7, 1]$, and present the rejection rates across the 100 repetitions in Table \ref{table: epsilon vs. rejection rates} and Figure \ref{fig: rates}. The numerical experiment results can be summarized as follows:
\begin{enumerate}
    \item Among all the algorithms, Algorithm \ref{algorithm: chi-sq test} is the most powerful under the alternative hypothesis where $\epsilon\ne1$. However, it suffers from type I error inflation --- meaning that the expected rejection rate when $\epsilon=1$ is 0.05 but the rejection rate of Algorithm \ref{algorithm: chi-sq test} is higher. As previously mentioned, the type I error inflation of Algorithm \ref{algorithm: chi-sq test} stems from the numerical violation of Assumption \ref{assumption: equal covariance assumption}.

    \item Algorithm \ref{algorithm: permutation-based chi-sq test}, which is a combination of the permutation technique and Algorithm \ref{algorithm: chi-sq test}, does not suffer from type I error inflation. While the power of Algorithm \ref{algorithm: permutation-based chi-sq test} is lower than that of Algorithm \ref{algorithm: chi-sq test} under the alternative hypothesis, it is still uniformly more powerful than the full permutation test in Algorithm \ref{algorithm: permutation-based test} with all the four distance inputs $\{  \operatorname{dist}^{\operatorname{ERT}}_{2,2}, \operatorname{dist}^{\operatorname{SERT}}_{2,2}, \operatorname{dist}^{\operatorname{SELECT}}_{2}, \operatorname{dist}^{\operatorname{MEC}}_{2}\}$.

    \item The four distance inputs for Algorithm \ref{algorithm: permutation-based test} result in comparable hypothesis testing performance. Particularly, the $\operatorname{dist}^{\operatorname{ERT}}_{2,2}$ and $\operatorname{dist}^{\operatorname{MEC}}_{2}$ result in the same performances when we apply them to Algorithm \ref{algorithm: permutation-based test}, which results from the similarity between the ERT and MEC. One theoretical advantage of the ERT over the MEC is that the ERT is homogeneous (i.e., $\operatorname{ERT}(\lambda\cdot g) = \lambda\cdot\operatorname{ERT}(g)$ for all $\lambda\in\mathbb{R}$).
\end{enumerate}
The results described above are similar to those shown in \cite{meng2022randomness} for numerical experiments conducted on random binary images. Based on the experiment results concluded above, we recommend Algorithm \ref{algorithm: permutation-based chi-sq test} in applications for the following reason: it is uniformly powerful (compared with the fully permutation-based Algorithm \ref{algorithm: permutation-based test}) and does not suffer from type I error inflation.

\section{Discussion}\label{section: Conclusions and Future Research}

The ultimate goal of our study is to generalize a series of ECT-based methods \citep{crawford2020predicting, wang2021statistical, meng2022randomness, marsh2022detecting} to the analysis of grayscale images. In this paper, we took an initial step towards this goal by proposing an ECT-like topological summary, the ERT. The framework proposed in \cite{baryshnikov2010euler} provides solid mathematical foundations for our proposed ERT. Building upon the ERT, we introduced the SERT as a generalization of the SECT \citep{crawford2020predicting, meng2022randomness}. Importantly, the SERT represents grayscale images as functional data. By applying the Karhunen–Loève expansion to the SERT, we have proposed effective statistical algorithms (see Algorithms \ref{algorithm: chi-sq test}-\ref{algorithm: permutation-based test}) designed to detect significant differences between two sets of grayscale images. Particularly, Algorithm \ref{algorithm: permutation-based chi-sq test} was shown in simulations to be uniformly powerful while not suffering from type I error inflation.

There are many motivating questions for future research. A few of them from the biomedical perspective include:
\begin{enumerate}
\item Significantly different images usually correspond to different clinical outcomes (e.g., survival rates). \cite{crawford2020predicting} used the SECT on binary images of GBM tumors as the predictors in statistical inference. Here, the authors showed that the SECT has the power to predict clinical outcomes better than existing tumor quantification approaches. A natural generalization of the approach in \cite{crawford2020predicting} is the development of an SECT-like statistic designed for grayscale images which could prove to be powerful in terms of predicting clinical outcomes. For instance, one may consider analyzing the grayscale images in Figure \ref{fig: Grayscale_image_from_Prof_Duan} to predict the clinical outcomes of the corresponding lung cancer patients.

\item Suppose grayscale images can successfully predict a clinical outcome of interest. In that case, a subsequent question from the sub-image analysis viewpoint is: \textit{can we identify the physical features in the grayscales image that are most relevant to the clinical outcome?} For binary images (equivalently, shapes), \cite{wang2021statistical} proposed an efficient method of seeking the desired physical features of shapes via the ECT. One may consider generalizing the method in \cite{wang2021statistical} to deal with grayscale images.

\item Tumors change over time. Hence, having the ability to study dynamically changing/longitudinal grayscale images is an area of interest. Using the ECT and SECT, \cite{marsh2022detecting} introduced the DETECT framework to analyze the dynamic changes in shapes. One may consider generalizing the DETECT approach to analyze the dynamic changes in grayscale images.
\end{enumerate}
Lastly, another future direction would be to employ statistical methods analogous to those described in Section \ref{section: Statistical Inference of Grayscale Functions} for the analysis of networks using curvature-based approaches \citep{wu2022subsampling}.

\section*{Software Availability}

Code for implementing the Euler-Radon transform (ERT), the smooth Euler-Radon transform (SERT), as well as the lifted Euler characteristic transform (LECT) and the super lifted Euler characteristic transform (SELECT) is freely available at \url{https://github.com/JinyuWang123/ERT}.

\section*{Acknowledgements}

LC would like to acknowledge the support of a David \& Lucile Packard Fellowship for Science and Engineering. Research reported in this publication was partially supported by the National Institute On Aging of the National Institutes of Health under Award Number R01AG075511. The content is solely the responsibility of the authors and does not necessarily represent the official views of the National Institutes of Health.

\section*{Statements and Declarations}

The authors declare no competing interests.


\newpage
\begin{appendix}

\section{Proofs}\label{Appendix: Proofs}

\subsection{Proof of Eq.~\eqref{eq: dX = [dX] for integer-valued functions}}\label{section: Proof of eq: dX = [dX] for integer-valued functions}

\begin{proof}
    For any $g\in\operatorname{CF}(X)$ and $n \in \Z$, we have $\lceil n\cdot g \rceil = n\cdot g = \lfloor n\cdot g \rfloor$. Hence,
    \begin{align*}
        \int_X g(x) \lceil \,d\chi(x) \rceil & = \lim_{n \to \infty} \frac{1}{n} \int_X \lceil n\cdot g(x) \rceil \,d\chi(x) \\
        & = \lim_{n \to \infty} \frac{1}{n} \int_X n\cdot g(x) \,d\chi(x) \\
        & = \lim_{n \to \infty} \int_X g(x) \,d\chi(x) \\
        & = \int_X g(x) \,d\chi(x).
    \end{align*}
Similarly, we have that $\int_X g(x) \lfloor d\chi(x) \rfloor = \int_X g(x) d\chi(x)$. Therefore, we have $\int_X g(x) [ d\chi(x) ] = \int_X g(x) d\chi(x)$. Since $\mathbbm{1}_K\in\operatorname{CF}(X)$ for all $K\in\operatorname{CS}(X)$, we obtain Eq.~\eqref{eq: dX = [dX] for integer-valued functions}.
\end{proof}

\subsection{Proof of Theorem \ref{thm: tameness theorem}}\label{proof of thm: tameness theorem}

We need the following lemma in \cite{van1998tame}.
\begin{lemma}[rephrased ``(2.10) Proposition," Chapter 4 of \cite{van1998tame}]\label{lemma: (2.10) Proposition of Dries}
    Let $S\subseteq\mathbb{R}^{m+d}$ be definable and $S_a:=\{x\in\mathbb{R}^d:\, (a,x)\in S\}$ for each $a \in \mathbb{R}^m$. Then, $\chi(S_a)$ takes only finitely many values as $a$ runs through $\mathbb{R}^m$. Furthermore, for each integer $z$, the set $\{a\in\mathbb{R}^m:\,\chi(S_a)=z\}$ is definable.
\end{lemma}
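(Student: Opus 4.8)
The plan is to prove this via the cell decomposition theorem (CDT) for o-minimal structures (Chapter 3 of \cite{van1998tame}), reducing the fiberwise Euler characteristic to a finite sum whose value depends only on a definable ``combinatorial type'' of the parameter $a$. The strategy is to decompose the definable family $S$ into finitely many cells compatible with the projection onto $\mathbb{R}^m$, read off $\chi(S_a)$ from the definition in Eq.~\eqref{eq: EC defined using o-min structures}, and then carve $\mathbb{R}^m$ into finitely many definable pieces on which this value is constant.

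First I would apply the CDT to $S\subseteq\mathbb{R}^{m+d}$ to obtain a finite partition of $\mathbb{R}^{m+d}$ into cells $C_1,\ldots,C_N$ with $S=\bigsqcup_{j\in J}C_j$ for some $J\subseteq\{1,\ldots,N\}$, chosen compatible with the coordinate projection $\pi:\mathbb{R}^{m+d}\rightarrow\mathbb{R}^m$, so that each $\pi(C_j)$ is itself a cell (each $\pi(C_j)$ being definable since o-minimal structures are closed under projection). Writing $J_a:=\{j\in J:\, a\in\pi(C_j)\}$, the fiber decomposes as $S_a=\bigsqcup_{j\in J_a}(C_j)_a$, where each nonempty fiber $(C_j)_a$ is again a cell in $\mathbb{R}^d$. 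The structural input I would extract from the CDT is that the dimension of $(C_j)_a$ is constant as $a$ ranges over the base $\pi(C_j)$; call this common value $e_j$. Since the cells $\{(C_j)_a\}_{j\in J_a}$ form a cell partition of $S_a$, the definition in Eq.~\eqref{eq: EC defined using o-min structures} gives directly
\begin{align*}
    \chi(S_a) = \sum_{j\in J_a} (-1)^{\dim (C_j)_a} = \sum_{j\in J_a} (-1)^{e_j},
\end{align*}
so $\chi(S_a)$ is completely determined by the index set $J_a$.

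Next I would use this formula to obtain both conclusions at once. For each subset $I\subseteq J$, set $A_I:=\left(\bigcap_{j\in I}\pi(C_j)\right)\setminus\left(\bigcup_{j\in J\setminus I}\pi(C_j)\right)$, a finite Boolean combination of the definable sets $\pi(C_j)$, hence definable. The family $\{A_I\}_{I\subseteq J}$ partitions $\mathbb{R}^m$, and on $A_I$ we have $J_a=I$, so $\chi(S_a)=\sum_{j\in I}(-1)^{e_j}$ is constant. As there are at most $2^{|J|}$ such pieces, $\chi(S_a)$ takes only finitely many values, which is the first claim. For the second, given $z\in\mathbb{Z}$ we have $\{a\in\mathbb{R}^m:\,\chi(S_a)=z\}=\bigcup_{I:\,\sum_{j\in I}(-1)^{e_j}=z}A_I$, a finite union of definable sets and therefore definable.

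I expect the main obstacle to be the precise invocation of the CDT in the form that simultaneously yields a decomposition compatible with $\pi$ and the constancy of the fiber dimension $e_j$ over each base $\pi(C_j)$: all of the genuine o-minimal content lives here, and it must be drawn carefully from the inductive construction of cells and the dimension theory in \cite{van1998tame} (Chapters 3 and 4). An alternative route would replace this step by the o-minimal trivialization (Hardt) theorem, producing a finite definable partition of $\mathbb{R}^m$ over which $S$ is definably trivial, so that all fibers over a given piece are pairwise definably homeomorphic and hence share one Euler characteristic (using that $\chi$ is a definable-homeomorphism invariant), after which the same bookkeeping finishes the argument. I would favor the cell-decomposition version, since it stays within the dimension/Euler-characteristic theory and avoids appealing to trivialization.
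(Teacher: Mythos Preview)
The paper does not give its own proof of this lemma: it is stated as a rephrasing of Proposition~(2.10) in Chapter~4 of \cite{van1998tame} and is invoked as a black box in the proof of Theorem~\ref{thm: tameness theorem}. So there is no in-paper argument to compare against.

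That said, your proposal is correct and is essentially the standard argument behind the cited result. The only point worth tightening is the sentence ``the dimension of $(C_j)_a$ is constant as $a$ ranges over the base $\pi(C_j)$'': this is not a separate fact to be ``extracted'' but is immediate from the inductive definition of cells. A cell $C\subseteq\mathbb{R}^{m+d}$ carries an index $(i_1,\ldots,i_{m+d})\in\{0,1\}^{m+d}$ recording at each coordinate whether $C$ is a graph ($0$) or a band ($1$); then $\dim C=\sum_k i_k$, the base $\pi(C)$ has index $(i_1,\ldots,i_m)$, and every fiber $(C)_a$ with $a\in\pi(C)$ is an $(i_{m+1},\ldots,i_{m+d})$-cell of dimension $e=\sum_{k>m}i_k$, independent of $a$. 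With that made explicit, your computation $\chi(S_a)=\sum_{j\in J_a}(-1)^{e_j}$ and the Boolean bookkeeping with the sets $A_I$ go through verbatim. Your alternative via definable trivialization (Hardt) is also valid and is in fact the route \cite{van1998tame} uses for some of the related invariance statements, but as you note it is heavier machinery than needed here.
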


With Lemma \ref{lemma: (2.10) Proposition of Dries}, we provide the proof of Theorem \ref{thm: tameness theorem} as follows
\begin{proof}
    (Proof of Theorem \ref{thm: tameness theorem}.) We implement Lemma \ref{lemma: (2.10) Proposition of Dries} by defining the following 
    \begin{enumerate}
        \item $m := d+1$;
        \item $a=(\nu,t)\in \Sbb^{d-1} \times [0, T] \subseteq \mathbb{R}^m=\mathbb{R}^{d}\times\mathbb{R}$;
        \item $S := \left\{(\nu,t,x)\in\mathbb{R}^{d}\times\mathbb{R}\times\mathbb{R}^d: x\in K \text{ and } x\cdot \nu\le t-R\right\}$. Since $S=\mathbb{R}^d\times\mathbb{R}\times K \cap \{(\nu,t,x)\in \mathbb{R}^d\times\mathbb{R}\times\mathbb{R}^d:\,x\cdot\nu-t+R\le 0\}$, the set $S$ is definable under Assumption \ref{Assumption: basic requirements for o-minimal structures of interest}.
    \end{enumerate} 
    Then, for each fixed $a=(\nu,t)\in \mathbb{R}^{d}\times\mathbb{R}=\mathbb{R}^m$, we have
    \begin{align*}
        S_a=S_{(\nu,t)}=\{x\in K:\, x\cdot\nu\le t-R\}=K_t^\nu.
    \end{align*}
    Lemma \ref{lemma: (2.10) Proposition of Dries} implies that $\chi(S_a)=\chi(K_t^\nu)$ takes only finitely many values as $a=(\nu,t)$ runs through $\mathbb{R}^m=\mathbb{R}^{d}\times\mathbb{R}$. Therefore, $\chi(K_t^\nu)$ takes only finitely many values as $(\nu,t)$ runs through $\mathbb{S}^{d-1}\times[0,T]$.

    Furthermore, Lemma \ref{lemma: (2.10) Proposition of Dries} indicates that $\{(\nu,t)\in\mathbb{R}^m:\, \chi(K_t^\nu)=z\}$ is definable for every integer $z$. Because $\mathbb{S}^{d-1}\times[0,T]$ is definable (under Assumption \ref{Assumption: basic requirements for o-minimal structures of interest}), we have that $\{(\nu,t)\in \mathbb{S}^{d-1}\times[0,T]:\, \chi(K_t^\nu)=z\}=\mathbb{S}^{d-1}\times[0,T]\cap\{(\nu,t)\in\mathbb{R}^m:\, \chi(K_t^\nu)=z\}$ is definable for every integer $z$. The proof of the first result of Theorem \ref{thm: tameness theorem} is completed.

    The second result is a straightforward corollary of the first. The third result of Theorem \ref{thm: tameness theorem} is implied by the first result and the ``monotonicity theorem" in Chapter 3 of \cite{van1998tame}.
\end{proof}

\subsection{Proof of Theorem \ref{thm: SERT preserves all information of ERT}}\label{eq: proof, ERT and SERT determine each other}

\begin{proof}
    It is straightforward that $\operatorname{ERT}(g)$ determines $\operatorname{SERT}(g)$. It suffices to show that $\operatorname{SERT}(g)$ determines $\operatorname{ERT}(g)$. 
    
    For every fixed direction $\nu\in\mathbb{S}^{d-1}$, the definition of $\operatorname{SERT}(g)$ implies
\begin{align*}
    \frac{d}{dt} \operatorname{SERT}(g)(\nu,t)= \operatorname{ERT}(g)(\nu,t) + \left( - \frac{1}{T} \int_0^T \operatorname{ERT}(g)(\nu,\tau) \,d\tau \right),
\end{align*}
for all $t\in[0,T]$ that are not discontinuities of $t\mapsto \operatorname{ERT}(g)(\nu,t)$. Recall that the support of $g$ is strictly smaller than the domain $B_{\mathbb{R}^d}(0,R)$. For any $t^*< \operatorname{dist}\left(\supp(g), \partial B_{\mathbb{R}^d}(0,R)\right)$, we have $g(x)\cdot R(x,\nu,t^*)=0$ for all $x\in B_{\mathbb{R}^d}(0,R)$, which indicates that $\operatorname{ERT}(g)(\nu,t^*)=\int_{B_{\mathbb{R}^d}(0,R)} g(x) \cdot R(x,\nu,t^*) \, [d\chi(x)]=0$. Hence, 
\begin{align*}
    \lim_{t\rightarrow0+}\frac{d}{dt} \operatorname{SERT}(g)(\nu,t)= - \frac{1}{T} \int_0^T \operatorname{ERT}(g)(\nu,\tau) \,d\tau,
\end{align*}
which implies
\begin{align}\label{eq: SERT determines ERT}
    \operatorname{ERT}(g)(\nu,t) = \frac{d}{dt} \operatorname{SERT}(g)(\nu,t) - \lim_{t\rightarrow0+}\frac{d}{dt} \operatorname{SERT}(g)(\nu,t),
\end{align}
for all $t\in[0,T]$ that are not discontinuities of $t\rightarrow \operatorname{ERT}(g)(\nu,t)$. The right continuity of $t \mapsto \operatorname{ERT}(g)(\nu,t)$ implies that Eq.~\eqref{eq: SERT determines ERT} holds for all $t\in[0,T]$. That is, $\operatorname{SERT}(g)$ determines $\operatorname{ERT}(g)$ through Eq.~\eqref{eq: SERT determines ERT}. The proof is completed.
\end{proof}

\subsection{Proof of Theorem \ref{thm: invertibility on piecewise images}}\label{proof of thm: invertibility on piecewise images}

Before the proof of Theorem \ref{thm: invertibility on piecewise images}, we suggest the audiences read Appendix \ref{section: Discussions on the Invertibility of ERT and SERT}, especially Proposition \ref{prop::fubini-assumption-invert}, as a prerequisite for the proof. A takeaway message from Appendix \ref{section: Discussions on the Invertibility of ERT and SERT} is that the ERT is invertible if the ``Fubini condition" (Eq.~\eqref{eq: the Fubini condition}) is satisfied. In addition to Appendix \ref{section: Discussions on the Invertibility of ERT and SERT}, we need the following lemma as a prerequisite
\begin{lemma}\label{eq: linearity in some sense}
    If $f,g\in\operatorname{CF}(X)$, we have
    \begin{align*}
        \int_X \left\{a\cdot f(x) + b\cdot g(x)\right\}\,[d\chi(x)] = a\cdot \int_X f(x)\,d\chi(x) + b\cdot \int_X g(x)\,d\chi(x)
    \end{align*}
    for all $a,b\in\mathbb{R}$.
\end{lemma}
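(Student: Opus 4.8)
The plan is to reduce the real‑linear combination $af+bg$ to a finite sum of scalar multiples of indicator functions sitting on a common definable partition, on which the averaged Euler integral becomes additive, and then to transport everything back to the classical integer‑valued Euler integral $\int_X(\cdot)\,d\chi$, which is genuinely linear on $\operatorname{CF}(X)$. The subtlety to keep in mind throughout is that $\int_X(\cdot)\,[d\chi]$ is \emph{not} additive for general definable integrands, so one cannot simply write $\int_X(af+bg)\,[d\chi]=\int_X af\,[d\chi]+\int_X bg\,[d\chi]$; the entire argument is about engineering a situation where additivity is restored.

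First I would isolate the one structural fact that makes the proof work: additivity of $\int_X(\cdot)\,[d\chi]$ over integrands with pairwise disjoint nonzero‑sets. If $u,v\in\operatorname{Def}(X)$ are nonzero on disjoint sets, then at every point at least one of $u,v$ vanishes, so $\lfloor n(u+v)\rfloor=\lfloor nu\rfloor+\lfloor nv\rfloor$ and $\lceil n(u+v)\rceil=\lceil nu\rceil+\lceil nv\rceil$ pointwise for every $n$. Using the linearity of the classical Euler integral on $\operatorname{CF}(X)$ (the standard fact that $\chi$ is an additive valuation on definable sets), the identity $\int_X\lfloor n(u+v)\rfloor\,d\chi=\int_X\lfloor nu\rfloor\,d\chi+\int_X\lfloor nv\rfloor\,d\chi$ holds; dividing by $n$ and letting $n\to\infty$ gives $\int_X(u+v)\lfloor d\chi\rfloor=\int_X u\lfloor d\chi\rfloor+\int_X v\lfloor d\chi\rfloor$, and likewise for the ceiling version, hence by averaging for $\int_X(\cdot)\,[d\chi]$.

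Next I would build the partition. Since $f,g\in\operatorname{CF}(X)$ are integer‑valued with compact support, the map $x\mapsto(f(x),g(x))$ takes only finitely many values in $\mathbb{Z}^2$, so the level sets $P_{i,j}:=\{x\in X:\,f(x)=i,\ g(x)=j\}$ form a finite definable partition of $X$ on which $f$, $g$, and therefore $af+bg$ are constant. Writing $af+bg=\sum_{i,j}(ai+bj)\,\mathbbm{1}_{P_{i,j}}$, the summands are nonzero on the pairwise disjoint sets $P_{i,j}$, so the additivity established above (applied finitely many times) yields $\int_X(af+bg)\,[d\chi]=\sum_{i,j}\int_X(ai+bj)\,\mathbbm{1}_{P_{i,j}}\,[d\chi]$. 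Then I would apply the homogeneity of $\int_X(\cdot)\,[d\chi]$ (Lemmas 4 and 6 of \cite{baryshnikov2010euler}) to pull the real scalar $ai+bj$ outside, and invoke Eq.~\eqref{eq: dX = [dX] for integer-valued functions} to identify $\int_X\mathbbm{1}_{P_{i,j}}\,[d\chi]=\int_X\mathbbm{1}_{P_{i,j}}\,d\chi=\chi(P_{i,j})$, so that the sum collapses to $\sum_{i,j}(ai+bj)\,\chi(P_{i,j})$.

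Finally I would reassemble using linearity of the classical integral: $\sum_{i,j}(ai+bj)\,\chi(P_{i,j})=a\sum_{i,j}i\,\chi(P_{i,j})+b\sum_{i,j}j\,\chi(P_{i,j})$, and since $f=\sum_{i,j}i\,\mathbbm{1}_{P_{i,j}}$ and $g=\sum_{i,j}j\,\mathbbm{1}_{P_{i,j}}$ in $\operatorname{CF}(X)$, linearity of $\int_X(\cdot)\,d\chi$ gives $\sum_{i,j}i\,\chi(P_{i,j})=\int_X f\,d\chi$ and $\sum_{i,j}j\,\chi(P_{i,j})=\int_X g\,d\chi$, which is exactly the claimed identity. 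I expect the only genuine obstacle to be conceptual: because $\int_X(\cdot)\,[d\chi]$ is not additive in general, the real work is setting up the common definable partition so that the integrand decomposes into disjoint‑support pieces where additivity holds — once that is in place, the remaining steps are just homogeneity of the averaged integral together with the already‑proved coincidence of $[d\chi]$ and $d\chi$ on constructible functions.
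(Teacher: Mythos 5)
Your proof is correct, but it takes a genuinely different route from the paper's. The paper disposes of this lemma in two lines by invoking Lemma~\ref{lem::new_prop_72} (additivity of $\lceil d\chi\rceil$ and $\lfloor d\chi\rfloor$ when one summand has discrete image and the other is bounded definable) together with homogeneity of $[d\chi]$; that lemma in turn rests on the critical-value formula of ``Proposition 2.4'' of \cite{bobrowski_borman_2012} and the tameness result of Proposition~\ref{prop::cont_definable_imply_tame}. You instead prove a disjoint-support additivity principle directly from the definition in Eq.~\eqref{eq: def of int [dx]}: the pointwise identities $\lfloor n(u+v)\rfloor=\lfloor nu\rfloor+\lfloor nv\rfloor$ and $\lceil n(u+v)\rceil=\lceil nu\rceil+\lceil nv\rceil$ on disjoint nonzero-sets, the (standard, though unstated in the paper) linearity of $\int_X(\cdot)\,d\chi$ on $\operatorname{CF}(X)$, and the existence of each limit let you split the limit; you then decompose $af+bg$ over the common level-set partition $P_{i,j}=\{f=i,\,g=j\}$ and finish with homogeneity and Eq.~\eqref{eq: dX = [dX] for integer-valued functions}. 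Your argument is more elementary and self-contained (no tameness or Morse-theoretic machinery), at the cost of exploiting that \emph{both} $f$ and $g$ are constructible, whereas the paper's Lemma~\ref{lem::new_prop_72} handles the more general situation of a discrete-image function plus an arbitrary bounded definable function --- generality the paper needs elsewhere (in Proposition~\ref{prop::fubini-assumption-invert}). One small point to tidy up: the cell $P_{0,0}$ need not have compact support, so $\mathbbm{1}_{P_{0,0}}$ need not lie in $\operatorname{CF}(X)$; you should drop it from the sum explicitly, which is harmless since its coefficient $a\cdot 0+b\cdot 0$ vanishes.
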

\begin{proof}
Both $a \cdot f(x)$ and $b \cdot g(x)$ are compactly supported real-valued definable functions; the images of $a \cdot f(x)$ and $b \cdot g(x)$ are both finite-point sets in $\R$. The equality above then follows from Lemma~\ref{lem::new_prop_72} and the homogeneity of $[d\chi]$.
\end{proof}

The following lemma implies the Fubini condition is satisfied by piecewise constant definable functions with compact support.
\begin{lemma}\label{lem::pwc-fubini}
    Suppose $X$ is a definable set and $Y$ is a bounded subset of $\mathbb{R}^N$ for some positive integer $N$. If function $f \in \operatorname{Def}(X;\mathbb{R})$ takes finitely many values, we have the following
    \[\int_Y \left( \int_{F^{-1}(y)} f(x) \,[d\chi(x)] \right) [d\chi(y)] = \int_X f(x) [d\chi(x)]  \]
    for all $F \in \operatorname{Def}(X; Y)$.
\end{lemma}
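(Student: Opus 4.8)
The plan is to exploit the one structural feature of $f$ that the Euler–calculus Fubini theorem is missing for arbitrary real integrands, namely that $f$ takes only finitely many values. Write $f=\sum_{i=1}^{m} c_i\,\mathbbm{1}_{A_i}$, where $c_1,\dots,c_m$ are the distinct nonzero values of $f$ and $A_i:=f^{-1}(c_i)$. Each $A_i$ is definable (it is a level set of the definable function $f$) and is contained in $\supp(f)$, which is compact; hence $A_i$ is bounded, its indicator has compact support, and $\mathbbm{1}_{A_i}\in\operatorname{CF}(X)$ for every $i$. Thus $f$ is a finite $\mathbb{R}$-linear combination of constructible indicator functions, and this is precisely the representation that lets us trade the real-valued integral $\int(\cdot)\,[d\chi]$ for the integer-valued integral $\int(\cdot)\,d\chi$ via Lemma~\ref{eq: linearity in some sense}.

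First I would handle the right-hand side. Iterating Lemma~\ref{eq: linearity in some sense} on $X$ gives
\begin{align*}
\int_X f(x)\,[d\chi(x)] = \sum_{i=1}^{m} c_i \int_X \mathbbm{1}_{A_i}(x)\,d\chi(x) = \sum_{i=1}^{m} c_i\,\chi(A_i).
\end{align*}
Next I would unwind the left-hand side from the inside out. For fixed $y$, the restrictions $\mathbbm{1}_{A_i}|_{F^{-1}(y)}$ are constructible on the definable set $F^{-1}(y)$, so Lemma~\ref{eq: linearity in some sense} yields
\begin{align*}
h(y):=\int_{F^{-1}(y)} f(x)\,[d\chi(x)] = \sum_{i=1}^{m} c_i\,\chi\!\left(A_i\cap F^{-1}(y)\right).
\end{align*}
Each map $g_i: y\mapsto \chi(A_i\cap F^{-1}(y))$ is the fiberwise Euler characteristic of the definable set $\{(y,x):x\in A_i,\ F(x)=y\}$, so Lemma~\ref{lemma: (2.10) Proposition of Dries} shows it takes finitely many integer values with definable level sets, and its support lies in the bounded set $Y$; hence $g_i\in\operatorname{CF}(Y)$, the combination $h=\sum_i c_i g_i$ is a finitely-valued definable integrand with compact support, and $\int_Y h\,[d\chi]$ is well-defined. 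Iterating Lemma~\ref{eq: linearity in some sense} a second time and then applying the classical Fubini theorem of Euler calculus for integer-valued integrands \citep[Section~3.8]{ghrist2014elementary} to each $\mathbbm{1}_{A_i}\in\operatorname{CF}(X)$ gives
\begin{align*}
\int_Y h(y)\,[d\chi(y)] = \sum_{i=1}^{m} c_i \int_Y \chi\!\left(A_i\cap F^{-1}(y)\right) d\chi(y) = \sum_{i=1}^{m} c_i \int_X \mathbbm{1}_{A_i}(x)\,d\chi(x) = \sum_{i=1}^{m} c_i\,\chi(A_i),
\end{align*}
which matches the right-hand side and closes the argument.

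I expect the main obstacle to be the middle step: verifying that the inner integral, viewed as a function of $y$, is again an admissible (finitely-valued, definable, compactly supported) integrand, so that the outer averaged integral is even well-defined and splits linearly. This is exactly where the ``finitely many values'' hypothesis is indispensable — it supplies the finite decomposition and, through Lemma~\ref{lemma: (2.10) Proposition of Dries}, the definability and finite-valuedness of $y\mapsto\chi(A_i\cap F^{-1}(y))$. Without it, $h$ could fail to be finitely valued, the outer $[d\chi]$-integral would be meaningless, and the term-by-term reduction to the integer-valued Fubini theorem — which is known to break down over general real integrands — would be unavailable. A minor additional point to address is that $Y$ is only assumed bounded: since every integrand appearing on $Y$ is definable and supported within $Y$, its support is compact, so both $\int_Y(\cdot)\,[d\chi]$ and $\int_Y(\cdot)\,d\chi$ are well-defined there.
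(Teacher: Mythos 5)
Your proof is correct, but it takes a genuinely different route from the paper's. The paper does not invoke the constructible Fubini theorem as a black box: it refines the level-set partition of $X$ by a cell decomposition on which $F$ is continuous, applies the trivialization theorem to each restricted map $F_i$ to get a finite definable partition $\{Y_{ij}\}_j$ of $Y$ over which all fibers are definably homeomorphic to a fixed $U_{ij}$, and then evaluates the outer integral by hand via $\chi(Y_{ij})\cdot\chi(U_{ij})=\chi(Y_{ij}\times U_{ij})=\chi(F_i^{-1}(Y_{ij}))$ --- in effect re-proving the integer-valued Fubini theorem inside the argument. You instead use only the level-set decomposition $f=\sum_i c_i\,\mathbbm{1}_{A_i}$, push everything down to $\operatorname{CF}$ via Lemma~\ref{eq: linearity in some sense} and Lemma~\ref{lemma: (2.10) Proposition of Dries}, and cite the Fubini theorem for constructible integrands \citep[Section~3.8]{ghrist2014elementary} term by term. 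Your version is shorter and isolates exactly where finite-valuedness enters (it supplies the finite linear reduction to $\operatorname{CF}$, which is the only place $[d\chi]$ can be traded for $d\chi$); the paper's version is self-contained and, by inserting the cell decomposition that restores continuity of $F$ before trivializing, avoids any question of whether the cited Fubini theorem applies to a merely definable, possibly discontinuous $F$. If you want your argument to be airtight on that point, either note that the constructible Fubini theorem holds for arbitrary definable maps (apply the fiberwise Euler-characteristic formula to the definable graph $\{(y,x):x\in A_i,\ F(x)=y\}$, which is definably bijective to $A_i$, and use invariance of the o-minimal $\chi$ under definable bijections), or perform the same continuity refinement the paper does. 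Two further small points you handled implicitly and should make explicit: the two-term Lemma~\ref{eq: linearity in some sense} must be iterated to $m$ terms (the paper does the same, so this is routine), and the supports of the $g_i$ lie in $\overline{Y}$, which is compact because $Y$ is bounded, so the outer integral is well-defined.
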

\begin{proof}
Suppose $f$ takes values in $\{a_i\}_{i=1}^m$ with $m<\infty$. Denote $D_i:=\{x\in X: f(x)=a_i\}$ for $i=1,\ldots,m$ and $\mathcal{D}:=\{D_i\}_{i=1}^m$. By the ``cell decomposition theorem" (see Chapter 3 of \cite{van1998tame}), there exists a finite decomposition $\mathcal{E}$ of $X$ such that, for each $E\in\mathcal{E}$, the restriction $F\vert_{E}$ is continuous. Denote $\mathcal{G}:=\{D\cap E:\, D\in\mathcal{D}\text{ and }E\in\mathcal{E}\}=:\{G_i\}_{i=1}^n$; obviously, $\mathcal{G}$ is a finite definable partition of $X$. Furthermore, for every component $G_i\in\mathcal{G}$, $F$ is continuous on $G_i\in\mathcal{G}$ and $f$ is a constant (say $b_i$) on $G_i$.
 
For each $i=1,\ldots,n$, define the restriction $F_i:=F\vert_{G_i}: G_i \to Y$. Since $F_i$ is continuous on $G_i$, the ``trivialization theorem" (see Chapter 9 of \cite{van1998tame}) implies that there exists a finite definable partition $\{Y_{ij}\}_{j}$ of $Y$ such that $F_i$ is definably trivial over $Y_{ij}$ for each $j$. That is, for each $j$, there exists a definable set $U_{ij}\subseteq\mathbb{R}^N$ for some $N$ and a definable map $\lambda_{ij}: F_i^{-1}(Y_{ij})\rightarrow U_{ij}$ such that $h_{ij}:=(F_i\vert_{F_i^{-1}(Y_{ij})},\,\lambda_{ij}): F_i^{-1}(Y_{ij}) \rightarrow Y_{ij}\times U_{ij}$ is a homeomorphism; furthermore, $F_i\vert_{F_i^{-1}(Y_{ij})}=\pi\circ h_{ij}$, where $\pi: Y_{ij}\times U_{ij}\rightarrow Y_{ij}$ is a projection map, and $F_i^{-1}(y)$ is definably homeomorphic to $U_{ij}$ for each $y\in Y_{ij}$.

The additivity of Euler characteristics implies
\begin{align*}
    \int_Y \left( \int_{F^{-1}(y)} f(x) \,[d\chi(x)] \right) [d\chi(y)] &=\int_Y \sum_{i=1}^n\left( \int_{F_i^{-1}(y)} f(x) \,[d\chi(x)] \right) [d\chi(y)]\\ 
    &= \int_Y \sum_{i=1}^n b_i\cdot\chi\left(F_i^{-1}(y) \right) [d\chi(y)]
\end{align*}
Lemma \ref{lemma: (2.10) Proposition of Dries} implies that $y\mapsto \chi(F_i^{-1}(y))=\chi(\{x\in X: F_i(x)=y\})$ is a constructible function defined on $Y$. Therefore, Lemma \ref{eq: linearity in some sense}, together with the additivity of Euler characteristics, implies the following
\begin{align*}
    \int_Y \left( \int_{F^{-1}(y)} f(x) \,[d\chi(x)] \right) [d\chi(y)] &= \sum_{i=1}^n b_i \int_Y \chi\left(F_i^{-1}(y) \right) d\chi(y) \\
    &= \sum_{i=1}^n b_i \sum_j \int_{Y_{ij}} \chi\left(F_i^{-1}(y) \right) d\chi(y)
\end{align*}
Since all fibers $F_i^{-1}(y)$ for $y\in Y_{ij}$ are all definably homeomorphic to $U_{ij}$, we have $\chi\left(F_i^{-1}(y) \right)=\chi(U_{ij})$ for all $y\in Y_{ij}$. Hence, we have
\allowdisplaybreaks
\begin{align*}
    \int_Y \left( \int_{F^{-1}(y)} f(x) \,[d\chi(x)] \right) [d\chi(y)] &= \sum_{i=1}^n b_i \sum_j \int_{Y_{ij}} \chi(U_{ij}) \,[d\chi(y)] \\
    &= \sum_{i=1}^n b_i \sum_j \chi(Y_{ij})\cdot\chi(U_{ij}) \\
    &= \sum_{i=1}^n b_i \sum_j \chi(Y_{ij}\times U_{ij}) \\
    &= \sum_{i=1}^n b_i \sum_j \chi\left(F_i^{-1}(Y_{ij})\right).
\end{align*}
Since, for each $i$, $\{Y_{ij}\}_j$ is a partition of $Y$, we have
\begin{align*}
    \int_Y \left( \int_{F^{-1}(y)} f(x) \,[d\chi(x)] \right) [d\chi(y)] &= \sum_{i=1}^n b_i \chi\left(F_i^{-1}(Y)\right) \\
    &= \sum_{i=1}^n b_i \chi\left(G_i\right) \\
    &= \sum_{i=1}^n b_i \int_X \mathbbm{1}_{G_i}(x) \,d\chi(x).
\end{align*}
Applying Lemma \ref{eq: linearity in some sense} again, we have
\begin{align*}
    \int_Y \left( \int_{F^{-1}(y)} f(x) \,[d\chi(x)] \right) [d\chi(y)] &= \int_X \sum_{i=1}^n b_i\cdot\mathbbm{1}_{G_i}(x) \,[d\chi(x)] \\
    & =\int_X f(x) \,[d\chi(x)] 
\end{align*}
This concludes the proof.
\end{proof}

\begin{proof}[Proof of Theorem~\ref{thm: invertibility on piecewise images}]
Our goal is to prove that Eq.~\eqref{eq: the Fubini condition} is true if $g\in\mathfrak{D}_{R,d}^{pc}$, which implies the desired invertibility via Proposition~\ref{prop::fubini-assumption-invert}. 

For the ease of notations, we will denote $S = B_{\R^d}(0, R)$ and $T = \mathbb{S}^{d-1}\times[0,T]$. Let $x'$ be any point in $S$ and fixed. The kernel functions $R(x, \nu, t)$ and $ R'(\nu, t, x')$ are defined in Eq.~\eqref{eq: Euler function representation of ECT} and Eq.~\eqref{eq: dual kernel R'}, respectively. Let $g \in \mathfrak{D}_{R, d}^{pc}$. Then, the function $(x,\nu,t)\mapsto g(x)\cdot R(x, \nu, t)\cdot R'(\nu, t, x')$ belongs to $\operatorname{Def}(S\times T;\mathbb{R})$ and takes finitely many values.
\allowdisplaybreaks
Consider the standard projection maps $p_1: S \times T \to S$ and $p_2: S \times T \to T$ as follows 
\begin{enumerate}
    \item Applying Lemma~\ref{lem::pwc-fubini} to $p_1$, we have the following
    \begin{align*}
        &\int_{S \times T} g(x)\cdot R(x, \nu, t)\cdot R'(\nu, t, x') \,[d\chi(x, \nu, t)] \\
        &= \int_{S} \left(\int_{p_1^{-1}(x)} g(x)\cdot R(x, \nu, t)\cdot R'(\nu, t, x') \,[d\chi(x, \nu, t)] \right) [d\chi(x)]\\
        &= \int_{S} \left(\int_{\{x\}\times T} g(x)\cdot R(x, \nu, t)\cdot R'(\nu, t, x') \,[d\chi(x, \nu, t)] \right) [d\chi(x)] \\
    &= \int_{S} g(x) \left(\int_{T} R(x, \nu, t)\cdot R'(\nu, t, x') \,[d\chi(\nu, t)] \right) [d\chi(x)].
    \end{align*}
    \item Applying Lemma~\ref{lem::pwc-fubini} to $p_2$, we have the following
    \begin{align*}
        &\int_{S \times T} g(x)\cdot R(x, \nu, t)\cdot R'(\nu, t, x') \,[d\chi(x, \nu, t)] \\
        &= \int_{T} \left(\int_{p_2^{-1}(\nu, t)} g(x)\cdot R(x, \nu, t)\cdot R'(\nu, t, x') \,[d\chi(x, \nu,t)] \right) [d\chi(\nu, t)]\\
        &= \int_{T} \left(\int_{S\times\{(\nu, t)\}} g(x)\cdot R(x, \nu, t)\cdot R'(\nu, t, x') \,[d\chi(x, \nu,t)] \right) [d\chi(\nu, t)]\\
 &= \int_{T} \left(\int_{S} g(x)\cdot R(x, \nu, t)\cdot R'(\nu, t, x') \,[d\chi(x)] \right) [d\chi(\nu, t)] \\
 &= \int_{T} \left(\int_{S} g(x)\cdot R(x, \nu, t) \,[d\chi(x)] \right) R'(\nu, t, x') \,[d\chi(\nu, t)].
    \end{align*}
\end{enumerate}
Combining the two equations above gives the desired Eq.~\eqref{eq: the Fubini condition}. 

Lastly, the invertibility of the SERT follows from Theorem \ref{thm: SERT preserves all information of ERT}. The proof is completed.
\end{proof}

\subsection{Proof of Theorem \ref{thm: relationship between our proposed ERT and the referred existing transforms}}\label{proof: relationship between our proposed ERT and the referred existing transforms}

\begin{proof}
Denote $S_{\nu, t} := B_{\mathbb{R}^d}(0, R) \cap \{x \in \R^d: x\cdot\nu\le t-R\}$. A direct computation shows that
\begin{align*}
    \int_{\mathbb{R}} s \cdot \operatorname{LECT}(g)(\nu, t, s) \, [d\chi(s)] &= \int_{\R} s \cdot \chi\left(\left\{x\in B_{\mathbb{R}^d}(0,R):\, x\cdot\nu\le t-R \text{ and } g(x)=s\right\}\right) \,[d\chi(s)]\\
    &= \int_{\R} s \cdot \chi\left(\left\{ x \in S_{\nu, t} : g(x) = s\right\}\right) \,[d\chi(s)].
\end{align*}
``Corollary 8" of \cite{baryshnikov2010euler} implies the following
\begin{align*}
    \int_{\R} s \cdot \chi\left(\left\{ x \in S_{\nu, t} : g(x) = s\right\}\right) \,[d\chi(s)] & = \int_{S_{\nu, t}} g(x) \,[d\chi(x)] \\
    &= \int_{B_{\mathbb{R}^d}(0,R)} g(x) \cdot R(x,\nu,t) \, [d\chi(x)] \\
    & = \operatorname{ERT}(g)(\nu,t).
\end{align*}
Then, Eq.~\eqref{eq: Euler representation of ERT via LECT} follows.

We can write ``Proposition 2" of \cite{baryshnikov2010euler} using the LECT and SELECT as follows
\begin{align*}
    \lfloor\operatorname{ERT}\rfloor(g)(\nu,t) &= \int_{B_{\mathbb{R}^d}(0,R)} g(x)\cdot R(x, \nu, t) \,\lfloor d\chi(x) \rfloor \\
    & = \int_{S_{\nu,t}} g(x) \,\lfloor d\chi(x) \rfloor \\
    & = \int_0^\infty \operatorname{SELECT}(g)(\nu, t, s) - \operatorname{SELECT}(-g)(\nu, t, s) + \operatorname{LECT}(-g)(\nu, t, s) \,ds.
\end{align*}
Similarly, we have
\begin{align*}
    \lceil\operatorname{ERT}\rceil(g)(\nu,t) = \int_0^\infty \operatorname{SELECT}(g)(\nu, t, s) - \operatorname{LECT}(g)(\nu, t, s) - \operatorname{SELECT}(-g)(\nu, t, s) \,ds.
\end{align*}
Thus, the proof of Eq.~\eqref{eq: Lebesgue representations of the floor and ceiling ERT} is completed. Taking the average of the two expressions in Eq.~\eqref{eq: Lebesgue representations of the floor and ceiling ERT} gives Eq.~\eqref{eq: Lebesgue representation of ERT}.
\end{proof}

    

\section{Definability vs. Tameness of Functions}\label{section: Definability vs. Tameness}

In the literature on TDA, one may often come across the concept of tameness. The word ``tame" is also often used interchangeably with ``definable." The concept of definability is presented in Definition \ref{def: definability}, and the concept of tameness can be found in \cite{bobrowski_borman_2012}. In this section, we analyze the relationship between them. To avoid confusion, we will not interchange the words ``definable" and ``tame" in this paper.

\subsection{Tameness}

We first go through the concept of tameness as follows, which is a generalized version of ``Definition 2.2" in \cite{bobrowski_borman_2012}. 
\begin{definition}\label{def: def of tameness}
    Let $X$ be a topological space with finite $\chi(X)$ and $f: X \to \R$ a continuous bounded function. For each $\alpha \in \R$, we define the \textit{super-level set} at $\alpha$ as $X_{\alpha^+}^f \coloneqq \{x \in X: \ f(x) \geq \alpha\}$ and the \textit{sub-level set} $\alpha$ $X_{\alpha^-}^f \coloneqq \{x \in X: \ f(x) \leq \alpha\}$. The function $f$ is said to be \textbf{tame} if it satisfies the following two conditions
    \begin{itemize}
        \item The homotopy types of $X_{\alpha^+}^f$ and $X_{\alpha^-}^f$ change finitely many times as $\alpha$ varies through $\R$;
        \item the homology groups of $X_{\alpha^+}^f$ and $X_{\alpha^-}^f$ are all finitely generated for all $\alpha \in \R$.
    \end{itemize}
\end{definition}
Similar to Definition \ref{def: def of tameness}, we define the following
\begin{definition}\label{def: def of EC-tameness}
    Let $X$ be a topological space with finite $\chi(X)$. A (not necessarily continuous) bounded function $f: X \to \R$ is said to be \textbf{EC-tame} if the Euler characteristics $\chi(X_{\alpha^+}^f)$ and $\chi(X_{\alpha^-}^f)$ are finite for all $\alpha\in\mathbb{R}$ and change only finitely many times as $\alpha$ varies through $\R$.
\end{definition}

Note that the definition of ``tame functions" in \cite{bobrowski_borman_2012} are equivalent to continuous EC-tame functions on compact topological space $X$ in our context.

\subsection{Relationship between Tameness and Definability}

In general, it is not the case that a tame function is definable in the o-minimal sense, which is illustrated by the following example. 
\begin{example}
    Let $W$ denote the \textit{Warsaw circle} (see Figure~\ref{fig:warsaw_cirlce}) defined as the union of the closed topologist's sine curve and an arc $J$ ``joining" the two ends of the topologist's sine curve:
    \[W \coloneqq \{(x, \sin(\frac{2\pi}{x})\ |\ x \in (0, 1]\} \cup \{(0, y)\ |\ -1 \leq y \leq 1\} \cup J\]
    \[J \coloneqq \{(0.5 + R \cos(t), -2 + R \sin(t)\ |\ \beta \leq t \leq 2\pi + \alpha\}\}\]
    \[R \coloneqq \sqrt{(\frac{1}{2})^2 + 2^2}, \alpha \coloneqq \arctan(\frac{2}{0.5}), \beta = \pi - \alpha\]
\begin{figure}[!tbp]
\centering
    \includegraphics[scale=0.3]{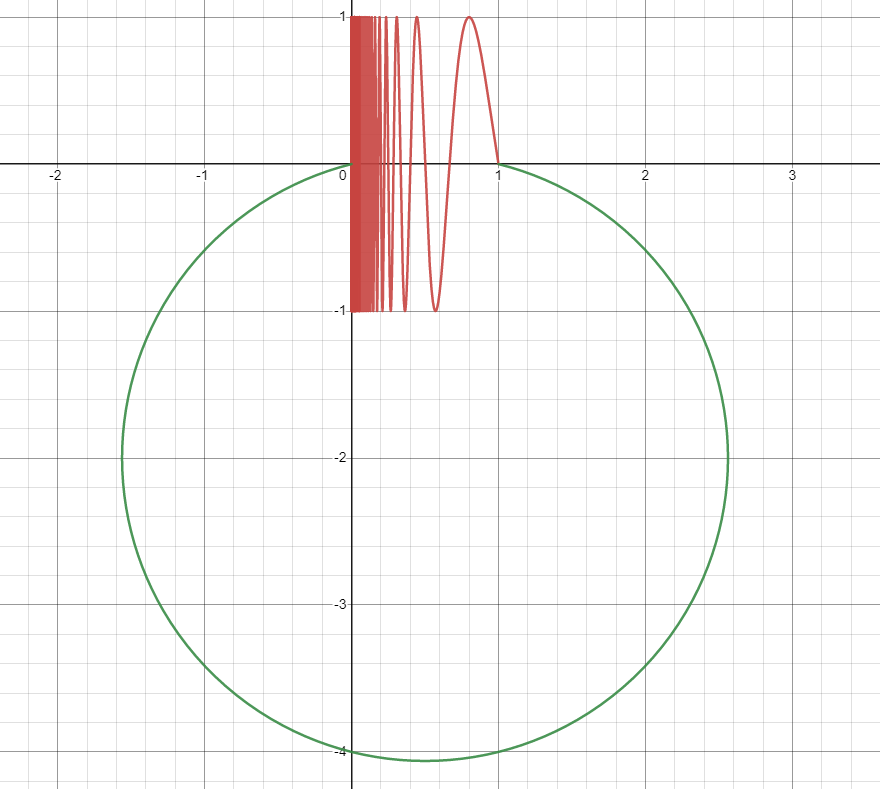}
    \caption{The Warsaw Circle $W$}
    \label{fig:warsaw_cirlce}
\end{figure}

    $W$ itself is not definable. However, $W$ is compact as it is bounded and is the union of two closed sets (the closed topologist's sine curve and $J$). Now consider the constant continuous function $f: W \to \R$ that sends every point to $0$ - the graph of this function is $W \times \{0\} \subseteq \R^3$ and is not definable. Hence, $f$ is not definable.
    
    On the other hand, the function is a tame function. This is because the Warsaw Circle is known to be simply connected and has all trivial homology groups beyond dimension $0$.
\end{example}

\begin{remark}
    If $f: X \to Y$ is a tame function between two definable sets, would $f$ be definable? The answer is no. Consider the indicator function $\mathbbm{1}_W: \R^2 \to \R$ on the Warsaw circle $W$. 
\end{remark}

Conversely, a function that is definable in the o-minimal sense does not have to be tame either. The most obvious obstruction comes from the distinction that definable functions need not be continuous nor bounded. However, when we remove the trivial distinctions between the two, we do have the following result:
\begin{proposition}\label{prop::cont_definable_imply_tame}
  Suppose $X \subseteq \R^n$ is definable. If the function $f: X \to \R$ is continuous, bounded, and definable, then $f$ is tame.  
\end{proposition}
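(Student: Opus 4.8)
The plan is to verify the two defining conditions of tameness (Definition \ref{def: def of tameness}) directly from the structural theorems for o-minimal structures: the triangulation theorem (Chapter 8 of \cite{van1998tame}) and the trivialization theorem (Chapter 9 of \cite{van1998tame}). First I would record a preliminary observation. Since $X$ is definable, its Euler characteristic $\chi(X)$ is finite by the cell decomposition theorem and Eq.~\eqref{eq: EC defined using o-min structures}, so $X$ is a legitimate domain for the notion of tameness. Moreover, for each fixed $\alpha\in\R$ the super-level set $X_{\alpha^+}^f=\{x\in X: f(x)\ge\alpha\}$ is definable, being the projection to $X$ of $\Gamma(f)\cap(X\times[\alpha,\infty))$, and likewise for $X_{\alpha^-}^f$; here continuity of $f$ guarantees these sets are closed in $X$.

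For the second bullet of Definition \ref{def: def of tameness}, I would invoke the triangulation theorem: every definable set is homeomorphic to a union of open simplices of a finite simplicial complex. Consequently each $X_{\alpha^+}^f$ and $X_{\alpha^-}^f$ has the homotopy type of a finite CW complex, so all of its homology groups are finitely generated, for every $\alpha\in\R$.

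The heart of the argument is the first bullet, that the homotopy types change only finitely often. Here I would apply the trivialization theorem to the definable map $f\colon X\to\R$. This yields a finite partition of the target $\R$ into definable pieces over which $f$ is definably trivial; refining via o-minimality in dimension one (axiom (vi) of Definition \ref{def: definability} together with cell decomposition), I may take this partition to consist of finitely many points $c_1<\dots<c_m$ and the complementary open intervals. Fix one such open interval $I$ and a definable trivialization $h\colon f^{-1}(I)\to I\times F$ with $\mathrm{pr}_1\circ h=f$. For $\alpha,\alpha'\in I$ with $\alpha<\alpha'$, the slab $f^{-1}([\alpha,\alpha'])$ lies in $f^{-1}(I)$ and is carried by $h$ onto $[\alpha,\alpha']\times F$. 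Pushing the first coordinate monotonically up to $\alpha'$, while keeping the identity on $f^{-1}([\alpha',\infty))$, defines a strong deformation retraction of $X_{\alpha^+}^f$ onto $X_{\alpha'^+}^f$; the two prescriptions agree along the common fiber $f^{-1}(\alpha')$, where the first coordinate already equals $\alpha'$, so the homotopy is continuous. Hence the homotopy type of $X_{\alpha^+}^f$ is constant as $\alpha$ ranges over $I$. Since there are only finitely many intervals and finitely many exceptional points $c_i$, the homotopy type of the super-level set changes only finitely many times; applying the identical reasoning to $-f$ handles the sub-level sets. Combining the two bullets establishes tameness.

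I expect the main obstacle to be the careful construction and gluing of this deformation retraction: one must check that the product structure supplied by the trivialization theorem over an open interval genuinely yields a continuous homotopy that restricts to the identity on the upper level set and matches it along the shared fiber $f^{-1}(\alpha')$. Boundedness of $f$ enters only to ensure that the relevant range of $\alpha$ is confined to a bounded definable subset of $\R$, so that the phrase ``finitely many changes'' is meaningful and no accumulation of critical values can occur at infinity.
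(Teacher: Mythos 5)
Your proposal is correct and follows essentially the same route as the paper's proof: the triangulation theorem gives finitely generated homology of the definable super- and sub-level sets, and the trivialization theorem applied to $f\colon X\to\R$ yields a finite definable partition of $\R$ over whose open intervals the homotopy type is constant. Your explicit deformation retraction of $X_{\alpha^+}^f$ onto $X_{\alpha'^+}^f$ is in fact a more careful justification of the step the paper only asserts (that the homotopy type changes only when $\alpha$ crosses isolated or boundary points of the partition pieces).
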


\begin{proof}
Let $\Gamma(f) \subset X \times \R$ be the graph of $f$ and let $\pi: \mathbb{R}^n\times\mathbb{R} \to \R^n$ be the standard projection function, we observe that $X_{\alpha^+}^f$ is the set $\pi(\Gamma(f) \cap X \times [\alpha, +\infty))$ and is thus definable. 

By the triangulation theorem (see Chapter 8 of \cite{van1998tame}), it follows that $X_{\alpha^+}^f$ is definably homeomorphic to a subcollection of open simplices in some finite Euclidean simplicial complex, which implies that the homology groups of $X_{\alpha^+}^f$ are all finitely generated. The case for $X_{\alpha^-}^f$ is similar.

Since $f: X \to \R$ is a continuous definable function between definable sets, the ``trivialization theorem" (see Chapter 9 of \cite{van1998tame}) asserts that there exists a definable partition of $\R$ into finitely many definable sets $R_1, ..., R_n$ such that, for any $i \in \{1, ..., n\}$, there exists a definable set $Y_i\subseteq\mathbb{R}^N$, for some dimension $N$, making the following diagram commute
\[\begin{tikzcd}
	{f^{-1}(R_i)} && {R_i \times Y_i} \\
	& {R_i}
	\arrow["{f_i}"', from=1-1, to=2-2]
	\arrow["{h_i=(f_i, \, \lambda_i)}", from=1-1, to=1-3]
	\arrow["{\pi_i}", from=1-3, to=2-2]
\end{tikzcd}\]
where $f_i$ is the function $f$ restricted to $f^{-1}(R_i)$ with $f_i:=f\vert_{f^{-1}(R_i)}$, $h_i$ is a homeomorphism, $\lambda_i: f^{-1}(R_i)\rightarrow Y_i$ is a continuous map, and $\pi_i: R_i \times Y_i\rightarrow R_i$ is the standard projection map.

It is straightforward to have the following disjoint union
\begin{align}\label{eq: a disjoint union of some super-level sets}
    \begin{aligned}
        X_{\alpha^+}^f &= \{x \in X: \ f(x) \geq \alpha\} \\
    & = \bigcup_{i=1}^n \left\{x\in f^{-1}(R_i): \, f_i(x)\ge \alpha\right\} \\
    & = \bigcup_{i=1}^n \left\{x\in f^{-1}(R_i): \, \pi_i\circ h_i(x)\ge \alpha\right\} \\
    & = \bigcup_{i=1}^n \left\{\xi\in h_i\left( f^{-1}(R_i)\right): \, \pi_i(\xi)\ge \alpha\right\} \\
    & = \bigcup_{i=1}^n \left\{\xi\in R_i\times Y_i: \, \pi_i(\xi)\ge \alpha\right\}.
    \end{aligned}
\end{align}
To show that the homotopy type of $X_{\alpha^+}^f$ changes finitely many times, Eq.~\eqref{eq: a disjoint union of some super-level sets} indicates that it suffices to verify that, for each projection map $\pi_i$, the homotopy type of the super-level sets of $\pi_i$ changes finitely many times. Indeed, since $R_i \in \mathcal{O}_1$, the set $R_i$ is a finite union of points and open intervals. The homotopy type of $\left\{\xi\in R_i\times Y_i: \, \pi_i(\xi)\ge \alpha\right\}$ changes only when $\alpha$ crosses the isolated points and boundary points of $R_i$. The verification for the sub-level sets is similar. Hence, $f$ is a tame function.
\end{proof}


\subsection{A Useful Formula}

We use Proposition \ref{prop::cont_definable_imply_tame} to prove a variant of ``Proposition 7.2" in \cite{bobrowski_borman_2012}, which will be implemented in Appendix \ref{section: Discussions on the Invertibility of ERT and SERT}.
\begin{lemma}\label{lem::new_prop_72}
    Suppose the topological space $X$ is definable, and functions $h, f: X \to \R$ are definable. If the image of $h$ is discrete and $f$ is bounded, then we have the following formula
    \[\int_X (h + f) \lceil d\chi \rceil = \int_X h \lceil d\chi \rceil + \int_X f \lceil d\chi \rceil\]
    The formula holds similarly for $\lfloor d\chi \rfloor$.
\end{lemma}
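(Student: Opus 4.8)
The plan is to reduce the claim to a single ``constant-shift'' identity and then to prove that identity through the Lebesgue-integral representation of the ceiling Euler integral. First I would observe that, since $h$ is definable, its image $h(X)$ is a definable subset of $\R$; being discrete, and since every definable subset of $\R$ is a finite union of points and open intervals (see \cite{van1998tame}), $h(X)$ must be a finite set of points, say $\{c_1,\dots,c_m\}$. Thus the level sets $D_i:=h^{-1}(c_i)$ form a finite definable partition of $X$, and on each $D_i$ we have $h+f=c_i+f$ and $\int_{D_i} h\,\lceil d\chi\rceil=\int_{D_i} c_i\,\lceil d\chi\rceil=c_i\,\chi(D_i)$ (the last equality being immediate from $\lceil nc_i\rceil/n\to c_i$ in the definition of $\lceil d\chi\rceil$). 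Because the plain Euler integral $\int(\cdot)\,d\chi$ is additive over the partition $\{D_i\}$ (additivity of $\chi$ as a valuation, applied for each fixed $n$ to the constructible approximant $\lceil n(h+f)\rceil$) and each resulting limit exists by ``Lemma 3'' of \cite{baryshnikov2010euler}, the functional $\int(\cdot)\,\lceil d\chi\rceil$ is itself additive over $\{D_i\}$. Hence the lemma reduces to the constant-shift identity
\[\int_D (c+f)\,\lceil d\chi\rceil = c\,\chi(D) + \int_D f\,\lceil d\chi\rceil\]
for a constant $c\in\R$, a definable set $D$, and a bounded definable $f$ on $D$.

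To prove this identity --- the main step --- I would use the Lebesgue-integral representation of the ceiling integral that underlies Eq.~\eqref{eq: Lebesgue representations of the floor and ceiling ERT} (namely ``Proposition 2''/``Corollary 8'' of \cite{baryshnikov2010euler}), in the form $\int_D g\,\lceil d\chi\rceil=\int_0^\infty \chi\{g>s\}\,ds-\int_0^\infty\chi\{g\le -s\}\,ds$, which converges since boundedness of $g$ makes both integrands compactly supported in $s$ and Lemma \ref{lemma: (2.10) Proposition of Dries} makes $s\mapsto\chi\{g>s\}$ a finite-valued measurable function. Writing $\{c+f>s\}=\{f>s-c\}$ and $\{c+f\le -s\}=\{f\le -s-c\}$ and performing the substitutions $u=s-c$ and $v=s+c$ yields $\int_{-c}^\infty\chi\{f>u\}\,du-\int_c^\infty\chi\{f\le -v\}\,dv$. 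Subtracting $c\,\chi(D)+\int_D f\,\lceil d\chi\rceil$ and folding the first integral via $u=-v$, the difference collapses to $\int_0^c[\chi\{f>-v\}+\chi\{f\le -v\}]\,dv-c\,\chi(D)$; since $\{f>-v\}$ and $\{f\le -v\}$ partition $D$, additivity of $\chi$ gives $\chi\{f>-v\}+\chi\{f\le -v\}=\chi(D)$, so the bracket equals $c\,\chi(D)$ and the difference vanishes (the case $c<0$ is symmetric and $c=0$ is trivial). I expect this substitution-and-valuation computation to be where the real work lies; the delicate point is keeping the $>$ versus $\le$ conventions in the representation consistent so that the partition identity applies exactly.

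Finally, summing the constant-shift identity over $\{D_i\}$ and invoking the additivity of $\int(\cdot)\,\lceil d\chi\rceil$ over the partition gives $\int_X(h+f)\,\lceil d\chi\rceil=\sum_{i}\big[c_i\chi(D_i)+\int_{D_i}f\,\lceil d\chi\rceil\big]=\int_X h\,\lceil d\chi\rceil+\int_X f\,\lceil d\chi\rceil$, which is the claim for $\lceil d\chi\rceil$. The floor version then follows by duality: since $\lfloor -a\rfloor=-\lceil a\rceil$ for every real $a$, one has $\int_X g\,\lfloor d\chi\rfloor=-\int_X(-g)\,\lceil d\chi\rceil$, so applying the ceiling case to $-h$ and $-f$ (noting that $-h$ again has finite image and $-f$ is bounded) and negating yields the $\lfloor d\chi\rfloor$ statement, completing the proof.
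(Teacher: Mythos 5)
Your proof is correct, and it reaches the same reduction as the paper --- partition $X$ by the finitely many level sets of $h$ (using that a discrete definable subset of $\R$ is a finite point set), push the additivity of $\chi$ through the constructible approximants $\lceil n(h+f)\rceil$ to get additivity of $\int(\cdot)\,\lceil d\chi\rceil$ over the partition, and thereby reduce everything to the constant-shift identity $\int_D (c+f)\,\lceil d\chi\rceil = c\,\chi(D)+\int_D f\,\lceil d\chi\rceil$ --- but you prove that identity by a genuinely different mechanism. The paper first refines the partition by the cell decomposition theorem so that $f$ is continuous on each piece, invokes Proposition \ref{prop::cont_definable_imply_tame} to conclude $(c+f)\vert_{D}$ is tame, and then applies the critical-value representation $\int f\,\lceil d\chi\rceil=\sum_{v\in\CV(f)}\Delta_\chi(f,v)\,v$ of \cite{bobrowski_borman_2012}, with the shift absorbed by the telescoping identity $\sum_{v}\Delta_\chi(f,v)=\chi(D)$. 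You instead work from the Lebesgue-integral representation $\int_D g\,\lceil d\chi\rceil=\int_0^\infty \chi\{g>s\}\,ds-\int_0^\infty\chi\{g\le -s\}\,ds$ of ``Proposition 2'' of \cite{baryshnikov2010euler} (the same representation the paper uses in the proof of Theorem \ref{thm: relationship between our proposed ERT and the referred existing transforms}), and the shift comes out of a change of variables together with the two-set additivity $\chi\{f>-v\}+\chi\{f\le-v\}=\chi(D)$, which plays exactly the role of the paper's telescoping sum. Your route buys a shorter argument that needs neither the continuity refinement nor the tameness detour through Appendix \ref{section: Definability vs. Tameness}; the paper's route keeps the argument within the critical-value formalism it develops there. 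Your substitution computation checks out (for $c>0$ the difference of the two representations is $\int_{-c}^0\chi\{f>u\}\,du+\int_0^c\chi\{f\le-v\}\,dv=c\,\chi(D)$), and the duality $\int_X g\,\lfloor d\chi\rfloor=-\int_X(-g)\,\lceil d\chi\rceil$ correctly delivers the floor version, which the paper leaves implicit. One small wording slip: after folding, it is the \emph{integral} of the bracket over $[0,c]$, not the bracket itself, that equals $c\,\chi(D)$.
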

\begin{proof}
    Since $h(X)$ belongs to $\mathcal{O}_1$ and is discrete, $h(X)$ must be a finite point set, say $\{a_1, ..., a_n\}$. We can then partition $X$ into $A_1, ..., A_n$ such that
    \[h(x) = \sum_{i = 1}^n a_i \mathbbm{1}_{A_i}(x).\]
In addition, the ``cell decomposition theorem" (see Chapter 3 of \cite{van1998tame}) indicates that there exists a cell decomposition $\mathcal{D}$ of $X$ such that $f$ is continuous on each cell in $\mathcal{D}$. Hence, without loss of generality, we may assume that $f$ is continuous on each $A_i$.
    
    By additivity of Euler characteristics, we can decompose $\int_X (h + f) \lceil d\chi \rceil$ as follows
    \begin{align*}
        \int_X (h + f) \lceil d\chi \rceil & = \int_{X} \sum_{i = 1}^n (a_i + f)\cdot\mathbbm{1}_{A_i} \lceil d\chi \rceil\\
        &= \lim_{k\rightarrow\infty}\frac{1}{k} \int_X \left\lceil \sum_{i=1}^n k\cdot(a_i+f)\cdot\mathbbm{1}_{A_i}\right\rceil  \,d\chi\\
        &= \lim_{k\rightarrow\infty}\frac{1}{k} \int_X  \sum_{i=1}^n \left\lceil k\cdot(a_i+f)\right\rceil\cdot\mathbbm{1}_{A_i}  \,d\chi \\
        &= \lim_{k\rightarrow\infty}\frac{1}{k} \sum_{i=1}^n \int_{X}  \left\lceil k\cdot(a_i+f)\right\rceil\cdot\mathbbm{1}_{A_i}  \,d\chi \\
        &= \sum_{i=1}^n \lim_{k\rightarrow\infty}\frac{1}{k} \int_{A_i}  \left\lceil k\cdot(a_i+f)\right\rceil  \,d\chi  \\
        & = \sum_{i = 1}^n \int_{A_i} (a_i + f) \lceil d\chi \rceil.
    \end{align*}
    It then suffices to verify $\int_{A_i} a_i + f \lceil d\chi \rceil = \int_{A_i} f \lceil d\chi \rceil + \int_{A_i} a_i \lceil d\chi \rceil$. Since $a_i + f$ is continuous on $A_i$ for each $i$, Proposition \ref{prop::cont_definable_imply_tame} implies that $(a_i + f)\vert_{A_i}$ is tame. Then, it follows from ``Proposition 2.4" of \cite{bobrowski_borman_2012} that
    \begin{align*}
        \int_{A_i} (a_i + f) \, \lceil d\chi \rceil &= \sum_{v \in \CV(a_i + f)} \Delta_\chi(a_i+f, v) v\\
        &= \sum_{v \in \CV(f)} \Delta_\chi(f, v) (v + a_i) \\
        &= \sum_{v \in \CV(f)} \Delta_\chi(f, v) v + a_i \sum_{v \in \CV(f)} \Delta_\chi(f, v)\\
        &= \int_{A_i} f \lceil d\chi \rceil + a_i \sum_{v \in \CV(f)} \Delta_\chi(f, v)
    \end{align*}
where $\CV(f)$ is the set of values $\alpha$ (referred to as critical values) at which the homotopy type of $\{x\in A_i:\, f(x)\le \alpha\}$ changes; and $\Delta_\chi(f, v)$ is the change in Euler characteristic:
\begin{align}\label{eq: Delta(f,v)}
\Delta_\chi(f, v)=\chi(\{x\in A_i:\, f(x)\le v+\varepsilon\})-\chi(\{x\in A_i:\, f(x)\le v-\varepsilon\})    
\end{align}
for sufficiently small $\varepsilon$.
    
Since $f$ is bounded, so there exists $a \le b$ such that $\{x \in A_i:\, f(x) \leq b\} = X$ and $\{x \in A_i:\, f(x) \leq a\} = \emptyset$. The sum $\sum_{v \in \CV(f)} \Delta_\chi(f, v)$ then collapse as a telescoping sum to $\chi(A_i) - \chi(\emptyset) = \chi(A_i)$ (see Eq.~\eqref{eq: Delta(f,v)}), hence
    \[\int_{A_i} a_i + f \lceil d\chi \rceil = \int_{A_i} f \lceil d\chi \rceil + a_i \chi(A_i) = \int_{A_i} f \lceil d\chi \rceil + \int_{A_i} a_i \lceil d\chi \rceil\]
    The proof is completed.
\end{proof}

\section{Discussions on the Invertibility of the ERT}\label{section: Discussions on the Invertibility of ERT and SERT}

In this section, we discuss the invertibility of the ERT, especially its dependence on the ``Fubini condition." This section provides a prerequisite for the proof of Theorem \ref{thm: invertibility on piecewise images}.

Let $\mu = 1 - (-1)^{d}$ and $\lambda = 1$. \cite{schapira1995tomography} and the proof of ``Theorem 5" in \cite{ghrist2018persistent} show the following
\begin{align}\label{eq: Schapira inversion}
    \int_{\mathbb{S}^{d-1}\times[0,T]} R(x, \nu, t) \cdot R'(\nu, t, x') \, d\chi(\nu, t) = (\mu - \lambda) \delta_{\Delta}(x,x') + \lambda,
\end{align}
where $R'(\nu, t, x')$ is the dual kernel defined in Eq.~\eqref{eq: dual kernel R'}, and $\delta_{\Delta}(x,x') = 1$ if $x = x'$ and is $0$ otherwise.

We recall the dual Euler-Radon transform (DERT) in Equation~\ref{eq: def of dual Euler-Radon transform} as follows
\begin{align*}
\begin{aligned}
\operatorname{DERT}:\ \ & \operatorname{Def}(\mathbb{S}^{d-1}\times [0,T]) \rightarrow \mathbb{R}^{B_{\R^d}(0, R)},\\
& h \mapsto \operatorname{DERT}(h)=\left\{\operatorname{DERT}(h)(x):=\int_{\mathbb{S}^{d-1} \times [0, T]} h(\nu, t) \cdot R'(\nu, t, x) \, [d\chi(\nu, t)]\right\}_{x \in B_{\R^d}(0, R)}.
\end{aligned}
\end{align*}
The following proposition shows the relationship between the ERT and DERT, which is the core of the proof of Theorem \ref{thm: invertibility on piecewise images}.
\begin{proposition}\label{prop::fubini-assumption-invert}
    Suppose $g \in \mathfrak{D}_{R,d}$. If the following condition (referred to as the ``Fubini condition" hereafter) holds
\begin{align}\label{eq: the Fubini condition}
    \begin{aligned}
        &\int_{\mathbb{S}^{d-1}\times[0,T]} \left( \int_{B_{\R^d}(0, R)} g(x)\cdot R(x, \nu, t) [d\chi(x)] \right) R'(\nu, t, x') [d\chi(v, t)] \\
    &=  \int_{B_{\R^d}(0, R)} g(x) \left( \int_{\mathbb{S}^{d-1}\times[0,T]} R(x, \nu, t) \cdot R'(\nu, t, x') \, d\chi(\nu, t) \right) [d\chi(x)],
    \end{aligned}
\end{align}
we have the following formula
\begin{align}\label{eq: inversion formula of ERT}
    (\operatorname{DERT} \circ \operatorname{ERT})(g)(x') = (\mu - \lambda)\cdot g(x') + \lambda \left(\int_{B_{\R^d}(0, R)} g [d\chi] \right),\ \ \ \text{for all }x'\in B_{\R^d}(0, R),
\end{align}
where $\mu = 1 - (-1)^{d}$ and $\lambda = 1$.
\end{proposition}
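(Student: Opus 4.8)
The plan is to prove the formula by unfolding the two compositions, invoking the Fubini condition to exchange the order of the inner and outer Euler integrals, and then substituting the Schapira inversion formula in Eq.~\eqref{eq: Schapira inversion}. Starting from the definitions of the DERT and ERT in Eqs.~\eqref{eq: def of dual Euler-Radon transform} and~\eqref{eq: def of Euler-Radon transform}, I would first write
\begin{align*}
    (\operatorname{DERT}\circ\operatorname{ERT})(g)(x') = \int_{\mathbb{S}^{d-1}\times[0,T]} \left(\int_{B_{\R^d}(0,R)} g(x)\cdot R(x,\nu,t)\,[d\chi(x)]\right) R'(\nu,t,x')\,[d\chi(\nu,t)].
\end{align*}
This is exactly the left-hand side of the Fubini condition in Eq.~\eqref{eq: the Fubini condition}, so under that hypothesis it equals $\int_{B_{\R^d}(0,R)} g(x)\left(\int_{\mathbb{S}^{d-1}\times[0,T]} R(x,\nu,t)\cdot R'(\nu,t,x')\,d\chi(\nu,t)\right)[d\chi(x)]$. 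Because $R\cdot R'$ is integer-valued, the inner integral is a genuine $d\chi$-integral matching Eq.~\eqref{eq: Schapira inversion}, which replaces the parenthetical factor by $(\mu-\lambda)\delta_\Delta(x,x')+\lambda$.

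The second step is an algebraic rewrite of the resulting integrand. Since $(\mu-\lambda)\delta_\Delta(x,x')+\lambda$ equals $\mu$ at $x=x'$ and $\lambda$ elsewhere, a pointwise check shows
\begin{align*}
    g(x)\cdot\Big((\mu-\lambda)\delta_\Delta(x,x')+\lambda\Big) = \lambda\cdot g(x) + (\mu-\lambda)\,g(x')\cdot\mathbbm{1}_{\{x'\}}(x),
\end{align*}
where the second summand is a definable function whose image $\{0,(\mu-\lambda)g(x')\}$ is finite, hence discrete. This decomposition is engineered precisely so that Lemma~\ref{lem::new_prop_72} applies: taking $f=\lambda\cdot g$ (bounded, definable) and $h=(\mu-\lambda)g(x')\mathbbm{1}_{\{x'\}}$ (discrete image), the lemma yields additivity of $\int_{B_{\R^d}(0,R)}(h+f)\lceil d\chi\rceil$ and of the floor analog, and averaging the two gives the same splitting for $\int_{B_{\R^d}(0,R)}(\cdot)\,[d\chi]$.

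The main obstacle — and the reason the above decomposition is essential — is that $\int(\cdot)[d\chi]$ is homogeneous but not linear on general real-valued integrands, so the $\delta_\Delta$-term cannot be separated from the constant term by naive linearity. Routing the split through Lemma~\ref{lem::new_prop_72} circumvents this exactly because one of the two summands has discrete image. Once the integral is split, homogeneity of $[d\chi]$ extracts the scalars $\lambda$ and $(\mu-\lambda)g(x')$, and $\int_{B_{\R^d}(0,R)}\mathbbm{1}_{\{x'\}}(x)\,[d\chi(x)]=\chi(\{x'\})=1$ by Eq.~\eqref{eq: dX = [dX] for integer-valued functions}. Collecting the two pieces produces $(\operatorname{DERT}\circ\operatorname{ERT})(g)(x')=(\mu-\lambda)\,g(x')+\lambda\int_{B_{\R^d}(0,R)} g\,[d\chi]$, which is the asserted identity.
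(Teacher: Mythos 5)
Your proposal is correct and follows essentially the same route as the paper's proof: unfold the composition, apply the Fubini condition, substitute the Schapira inversion formula, and then split the integral via Lemma~\ref{lem::new_prop_72} (using that the $\delta_\Delta$-term has discrete image) before finishing with homogeneity and $\chi(\{x'\})=1$. Your explicit remark on why the non-linearity of $\int(\cdot)[d\chi]$ forces the detour through that lemma is a point the paper leaves implicit, but the argument is the same.
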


Before providing the proof of Proposition~\ref{prop::fubini-assumption-invert}, we explain how Eq.~\eqref{eq: inversion formula of ERT} implies the invertibility of the ERT. Since $g$ has compact support, we have $\lim_{\xi\rightarrow R\mathbb{S}^{d-1}} g(\xi)=0$, where $\lim_{\xi\rightarrow R\mathbb{S}^{d-1}}$ means that $\xi$ converges to a point on the sphere $R\mathbb{S}^{d-1}=\{x\in\R^d:\, \Vert x\Vert=R\}$. Therefore, we have
\begin{align*}
    \lim_{\xi\rightarrow R\mathbb{S}^{d-1}} \frac{1}{\mu-\lambda}\cdot (\operatorname{DERT} \circ \operatorname{ERT})(g)(\xi)= \lim_{\xi\rightarrow R\mathbb{S}^{d-1}} g(\xi) + \frac{\lambda}{\mu-\lambda} \left(\int_{B_{\R^d}(0, R)} g [d\chi] \right)= \frac{\lambda}{\mu-\lambda} \left(\int_{B_{\R^d}(0, R)} g [d\chi] \right).
\end{align*}
The limit above implies
\begin{align*}
    g(x') = \frac{1}{\mu-\lambda}\cdot (\operatorname{DERT} \circ \operatorname{ERT})(g)(x') - \lim_{\xi\rightarrow R\mathbb{S}^{d-1}} \frac{1}{\mu-\lambda}\cdot (\operatorname{DERT} \circ \operatorname{ERT})(g)(\xi),
\end{align*}
which is the inversion formula in Eq.~\eqref{eq: formula inversion formula of ERT} and shows the invertibility of the ERT.

We provide the proof of Proposition~\ref{prop::fubini-assumption-invert} as follows
\begin{proof}[Proof of Proposition~\ref{prop::fubini-assumption-invert}]
For ease of notation, let $X = B_{\R^d}(0, R)$ and $Y = \Sbb^{d-1} \times [0, T]$. Then, we have
\allowdisplaybreaks
\begin{align*}
    \left(\operatorname{DERT} \circ \operatorname{ERT}\right)(g)(x') &= \operatorname{DERT}\left(\int_X g(x)\cdot R(x, \cdot, \cdot)\, [d\chi(x)]\right)(x')\\
    &= \int_{Y} \left(\int_X g(x)\cdot R(x, \nu, t) [d\chi(x)] \right) R'(\nu, t, x') \,[d\chi(\nu, t)].
\end{align*}
The Fubini condition in Eq.~\eqref{eq: the Fubini condition} implies
\begin{align*}
    \left(\operatorname{DERT} \circ \operatorname{ERT}\right)(g)(x') &= \int_X g(x) \left(\int_Y  R(x, \nu, t) R'(\nu, t, x') [d\chi(\nu, t)] \right) \,[d\chi(x)].
\end{align*}
Eq.~\eqref{eq: Schapira inversion} indicates the following
\begin{align*}
    \left(\operatorname{DERT} \circ \operatorname{ERT}\right)(g)(x') &= \int_X g(x) \left\{(\mu - \lambda) \delta_{\Delta}(x, x') + \lambda \right\} \,[d\chi(x)] \\
    &= \int_X (\mu - \lambda)\cdot g(x)\cdot \delta_{\Delta}(x, x') + \lambda\cdot g(x) \, [d\chi(x)]
\end{align*}
For each fixed $x'$, the function $(\mu - \lambda) g(x) \delta_{\Delta}(x, x')$ of $x$ is clearly discrete. Then, Lemma~\ref{lem::new_prop_72} implies
\begin{align*}
    \left(\operatorname{DERT} \circ \operatorname{ERT}\right)(g)(x') = \int_X (\mu - \lambda) g(x) \delta_{\Delta}(x, x') [d\chi(x)] + \int_X \lambda g(x) [d\chi(x)].
\end{align*}
Evaluating the two integrals above and keeping in mind that $\int (\cdot) [d\chi(x)]$ is homogeneous, we have that
\[ (\operatorname{DERT} \circ \operatorname{ERT})(h)(x') = (\mu - \lambda) g(x') + \lambda \left(\int_{B_{\R^d}(0, R)} g [d\chi] \right), \]
that is, the proof of Eq.~\eqref{eq: inversion formula of ERT} is completed.
\end{proof}

The Fubini condition specified above does fail in general. Plenty of examples are given in ``Corollary 6" of \cite{baryshnikov2010euler}. In ``Theorem 7" of \cite{baryshnikov2010euler}, this condition does hold when the definable function preserves fibers, ie. if $F: X \to Y$ is definable and $h \in \Def(X, \R)$ is constant on the fibers of $F$, then
\[ \int_X h [d\chi(x)] =  \int_Y \left(\int_{F^{-1}(y)} h(x) [d\chi(x)] \right) [d\chi(y)] \]
Unfortunately, this does not help much in the discussion of invertibility. The typical Fubini's Theorem for $d\chi$ that swaps the order of integration
\[\int_X \int_Y f(x, y) d\chi(y) d\chi(x) = \int_Y \int_X f(x, y) d\chi(x) d\chi(y)\]
is a consequence of choosing $F$ to be the projection maps $p_X: X \times Y \to X$ and $p_Y: X \times Y \to Y$. However, if we additionally impose the constraint that $f$ is constant on the fibers of $p_X$ and $p_Y$, this is the same as requiring $f$ to be identically constant on $X \times Y$.

\section{Discussion on $\lim_{\sigma\rightarrow0} \operatorname{ERT}(\phi_\sigma * \mathbbm{1}_K)$}\label{section: discussion on the limit of ERT as sigma -> 0}

Let $\phi_\sigma$ be a kernel function with a bandwidth $\sigma$ (e.g., see Eq.~\eqref{eq: an example of kernel functions}), and $\phi_\sigma * \mathbbm{1}_K$ the convolution of two functions. Although $\lim_{\sigma\rightarrow0} \phi_\sigma * \mathbbm{1}_K(x) = \mathbbm{1}_K(x)$ almost everywhere (e.g., see ``Theorem 4.1" of \cite{stein2011fourier}), it is not generally true that $\lim_{\sigma\rightarrow0} \operatorname{ERT}(\phi_\sigma * \mathbbm{1}_K)=\operatorname{ERT}(\mathbbm{1}_K)=\operatorname{ECT}(K)$. This phenomenon is symbolic of the general principle that a set of Lebesgue measure zero may not be of Euler characteristic zero. 

Let $\phi_\sigma$ be a kernel function with a bandwidth $\sigma$. For example
\begin{align}\label{eq: an example of kernel functions}
    \begin{aligned}
        & \phi_\sigma(x) = \begin{cases}
    \frac{C_d}{\sigma^d}\cdot \exp\left(-\frac{\sigma^2}{\sigma^2 - \|x\|^2}\right), & \|x\| < \sigma\\
    0, & \|x\| \geq \sigma
    \end{cases},\\ 
    & \text{where } C_d =\left(\int_{\|x\| < 1} e^{-\frac{1}{1 - \|x\|^2}}dx \right)^{-1}.
    \end{aligned}
\end{align}
Let $g_\sigma:=\phi_\sigma * \mathbbm{1}_K$ denote the convolution of two functions,
\begin{align*}
    g_\sigma(x) := \phi_\sigma * \mathbbm{1}_K(x) =\int_{\mathbb{R}^d} \phi_\sigma(y)\cdot\mathbbm{1}_K(x-y) dy.
\end{align*}
Furthermore, we set $\sigma\in[0,\frac{1}{10}]$, $d=2$, and $R=2$ for $\mathfrak{D}_{R,d}$.

Although $\lim_{\sigma\rightarrow0} \phi_\sigma * \mathbbm{1}_K(x) = \mathbbm{1}_K(x)$ almost everywhere (e.g., see ``Theorem 4.1" of \cite{stein2011fourier}), the following limit is generally not true
\begin{align}\label{eq: failure of an approximation to the identity}
    \lim_{\sigma\rightarrow0} \operatorname{ERT}(\phi_\sigma * \mathbbm{1}_K)=\operatorname{ERT}(\mathbbm{1}_K)=\operatorname{ECT}(K)
\end{align}
The failure of Eq.~\eqref{eq: failure of an approximation to the identity} is emblematic of the general principle that a set of Lebesgue measure zero may not be of Euler characteristic zero. The failure of Eq.~\eqref{eq: failure of an approximation to the identity} is illustrated by the following example.
\begin{example}\label{exp::ect_not_ert}
Consider the shape $K$ defined by the following where
\[\begin{aligned}
    & K  =  \left\{x\in\mathbb{R}^2 \,: \, \inf_{y\in S}\Vert x-y\Vert\le \frac{1}{10}\right\},\\ 
    & \text{where }\ \ S  =  \left\{\left(\frac{9}{10} \cos t, \frac{9}{10} \sin t\right) \,: \, 0\le t\le 2\pi \right\}.
\end{aligned}\]  

\begin{figure}[!tbp]
\centering
    \includegraphics[scale=0.6]{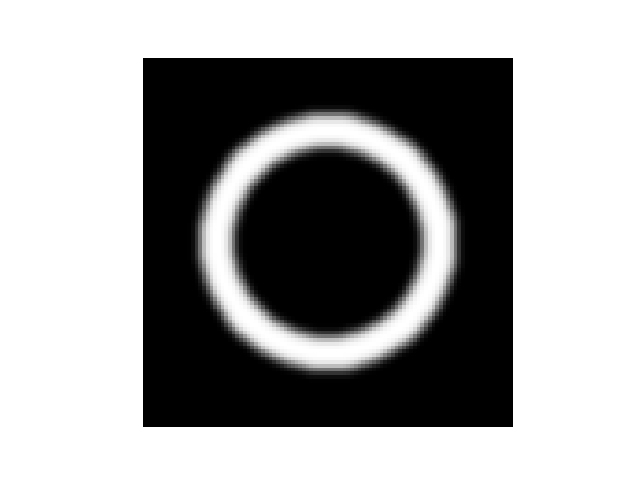}
    \caption{The grayscale image of $\phi_\sigma * \mathbbm{1}_K$ with small $\sigma$}
    \label{fig:gray_circle}
\end{figure}

Choose $\nu = (0, 1) \in \R^{2}$ and any $t \in (2-\frac{1}{100}, 2+\frac{1}{100})$, then $K \cap \{\nu \cdot x \leq t-R\}=K \cap \{\nu \cdot x \leq t-2\}$ has the homotopy type of $[0,1]$. Hence, $\operatorname{ECT}(K)((0,1),t) = 1$. 

On the other hand, since $0 \leq \phi_\sigma *\mathbbm{1}_K \leq 1$, it follows from Theorem~\ref{thm: relationship between our proposed ERT and the referred existing transforms} that,
\begin{align}
    \begin{aligned}
        &\operatorname{ERT}(\phi_\sigma * \mathbbm{1}_K)((0, 1), t) \\&= \int_0^1 \left\{\operatorname{SELECT}(\phi_\sigma * \mathbbm{1}_K)((0, 1), t, s) - \frac{1}{2} \operatorname{LECT}(\phi_\sigma * \mathbbm{1}_K)((0, 1), t, s)\right\} \,ds. 
    \end{aligned}
\end{align}
Omitting endpoint behaviors, for $0 < s < 1$ and sufficiently small $\sigma$ (see Figure~\ref{fig:gray_circle}), the level set for $\operatorname{LECT}(\phi_\sigma * \mathbbm{1}_K)$  has the homotopy type of the disjoint union of two circular arcs, hence its Euler characteristic is $2$. On the other hand, the super-level set for $\operatorname{SELECT}(\phi_\sigma * \mathbbm{1}_K)$ has the homotopy type of a solid circular arc, hence its Euler characteristic is $1$. Thus, we find that
\begin{align*}
    \operatorname{ERT}(\phi_\sigma * \mathbbm{1}_K)((0, 1), t) = \int_0^1 1 - \frac{1}{2} (2) \,ds = 0,\ \ \ \text{ for all }t \in (2-\frac{1}{100}, 2+\frac{1}{100}).
\end{align*}
That is, $0=\lim_{\sigma\rightarrow0} \operatorname{ERT}(\phi_\sigma * \mathbbm{1}_K)((0, 1), t)\ne\operatorname{ERT}(\mathbbm{1}_K)((0, 1), t)=\operatorname{ECT}(K)((0, 1), t)=1$ for all $t \in (2-\frac{1}{100}, 2+\frac{1}{100})$.
\end{example}

\section{$\operatorname{ERT}(g)(\nu, -)$ is not left continuous}\label{section: ECT is not left continuous}

In this section, we provide two simple examples that $\operatorname{ERT}(g)(\nu, -): [0, T] \to \R$ is not left continuous for some fixed direction $\nu \in \mathbb{S}^{d-1}$.

\begin{example}
    Let $R = 2$ and consider the indicator function on $K \coloneqq \{(x_1, x_2) \in \R^2\ |\ x_1^2 + x_2^2 = 1\}$. Since the input function is integer-valued, we have that $\operatorname{ERT}(\mathbbm{1}_{K})(\nu, -) = \operatorname{ECT}(K)(\nu, -)$ for any direction $\nu$. Computing $\operatorname{ECT}(K)(\nu, t)$ for all $t \in [0, 4]$, we have that $\operatorname{ECT}(K)(\nu, t) = 0$ for $t \in [0, 1) \cup [3, 4]$ and $\operatorname{ECT}(K)(\nu, t) = 1$ for $t \in [1, 3)$. In particular, this shows that $\operatorname{ERT}(g)(\mathbbm{1}_{K})(\nu, -)$ is right continuous but not left continuous.
\end{example}

\begin{example}
    Let $R = 2$ and $\nu = (1, 0)$. We consider the grayscale function defined as follows
    \[g: B_{\R^2}(0, 2) \to \R,\quad g(x_1, x_2) = \begin{cases}
        x_1+2,\quad (x_1,x_2) \in \mathbb{D}^2 \coloneqq \{(a, b) \in \R^2\ |\ a^2 + b^2 \leq 1\}\\
        0,\quad  \text{otherwise.}
    \end{cases}\]
Clearly, $g(x_1, x_2)=0$ whenever $x_1<-1$. For $t \in [0, 1)$, we have $g(x_1,x_2)\cdot\mathbbm{1}_{\{x_1\le t-2\}}=0$ for all $(x_1, x_2)\in B_{\R^2}(0, 2)$. Therefore, $\operatorname{ERT}(g)(\nu, t) = 0$ for all $t \in [0, 1)$. 
    
    Now for $t \in [1, 3)$, since $0 \leq g \leq 3$, by Eq.~\eqref{eq: Lebesgue representation of ERT}, we can write
    \[\operatorname{ERT}(g)(\nu, t) = \int_{0}^3 \operatorname{SELECT}(g)(\nu, t, s) - \frac{1}{2} \operatorname{LECT}(g)(\nu, t, s) \,ds.\]
    Recall from Equation~\ref{eq: def of LECT} that 
    \begin{align}\label{eq: LECT; example}
        \operatorname{LECT}(g)(\nu,t,s) = \chi\left(\left\{x\in B_{\mathbb{R}^2}(0,R):\, x_1\le t-R \text{ and } g(x)=s\right\}\right).
    \end{align}
    For any $t\in[1,3)$ and $s\in(0,3)$, the level set in Eq.~\eqref{eq: LECT; example} is either the empty set or is compact contractible. Specifically, we have the following 
    \begin{align*}
    & \operatorname{LECT}(g)(\nu,t,s) = 0, \ \ \text{ for all }t\in[1,3) \text{ and }s\in(0, 1), \\
        & \operatorname{LECT}(g)(\nu,t,s) = 1, \ \ \text{ for all }t\in[1,3) \text{ and }s\in[1, t], \\
        & \operatorname{LECT}(g)(\nu,t,s) = 0, \ \ \text{ for all }t\in[1,3) \text{ and }s\in(t, 3).
    \end{align*}
    Similarly, from Equation~\ref{eq: def of SELECT}, we have that
    \begin{align*}
        & \operatorname{SELECT}(g)(\nu,t,s) = 1, \ \ \text{for all }t\in[1,3) \text{ and }s\in(0,t), \\
        & \operatorname{SELECT}(g)(\nu,t,s) = 0, \ \ \text{for all }t\in[1,3) \text{ and }s\in(t,3).
    \end{align*}
    It follows that for $t \in [1, 3)$, 
    \[\operatorname{ERT}(g)(\nu, t) = \int_{0}^1 (1 - 0) ds + \int_{1}^{t} (1 - \frac{1}{2}(1)) ds = 1 + \frac{t-1}{2} = \frac{t + 1}{2}.\]
    Finally, for $t \in [3, 4]$, we have $g(x_1, x_2)\cdot\mathbbm{1}_{x_1\le t-2}=g(x_1, x_2)$. Hence, $\operatorname{ERT}(g)(\nu, t) = \operatorname{ERT}(g)(\nu, 3) = \frac{3 + 1}{2} = 2$. We conclude that $\operatorname{ERT}(g)(\nu, -)$ is right continuous but not left continuous.
\end{example}

\commentout{

\section{Numerical Computation and Visualizations}\label{section: Numerical Computation and Visualizations}

In this section, we begin by introducing two methods for computing the ERT and SERT in practice. Next, we provide a proof-of-concept example along with visualization. Finally, we propose an ERT-based method for aligning grayscale images, serving as a preprocessing step for subsequent statistical analysis.

\subsection{Numerical Computation of the ERT and SERT}\label{section: Numerical Computation of ERT and SERT}

We propose two approaches to calculating the ERT. The SERT is then calculated using standard numerical integration techniques. 

\paragraph*{The First Approach.} Our first approach is based on \cite{kirveslahti2023representing} and Eq.~\eqref{eq: Lebesgue representation of ERT}. Here we compute the LECT and SELECT using the strategies developed in \cite{kirveslahti2023representing}. Next, we apply the Lebesgue integral representation in Eq.~\eqref{eq: Lebesgue representation of ERT} to compute the ERT via the LECT and SELECT.

\paragraph*{The Second Approach.} Our second approach is based on the numerical computation of the Euler integrations defined in Eq.~\eqref{eq: def of int [dx]}. The computation method is motivated by the following result in \cite{bobrowski_borman_2012}.
\begin{proposition}[Proposition 2.4 of \cite{bobrowski_borman_2012}, generalized]\label{prop: Proposition 2.4 of of bobrowski_borman_2012}
    Let $f: X \to \R$ be a bounded definable function such that (i) $\chi(f^{-1}(\infty, u])$ changes only finitely many times as $u$ ranges through $\R$, and (ii) $\chi(f^{-1}(\infty, u])$ is finite for all $u \in \R$. Let $\CV(f)$ be the set of values where $\chi(f^{-1}(\infty, u])$ changes, then
    \[\int_X f \lceil  d\chi \rceil = \sum_{v \in \CV(f)} \Delta_\chi(f, v) \, v\]
    where $\Delta_\chi(f, v) = \chi(f \leq v + \epsilon) - \chi(f \leq v - \epsilon)$ is the change in Euler characteristics.
\end{proposition}
\noindent Note that, while \cite{bobrowski_borman_2012} required the function to be continuous, the actual proof of the statement did not rely on this property and instead was based on a combinatorial perspective on where the Euler characteristic changes. 

We assume the conditions in Proposition \ref{prop: Proposition 2.4 of of bobrowski_borman_2012} to be true in applications. Let $h: X \to [a, b] \subseteq \R$ be a definable function. We then can compute $\int_X h \lceil d\chi \rceil$ using Algorithm \ref{algorithm:ceiling_calculation}.
\begin{algorithm}[h]
	\caption{: Numerical Calculation of $\int_X h \lceil d\chi \rceil$}\label{algorithm:ceiling_calculation}
	\begin{algorithmic}[1]
		\INPUT
        \noindent Definable function $h: X \to [a, b]$ and an integer $N > 0$.
		\OUTPUT Approximate the value of $\int_X h \lceil d\chi \rceil$.
		\State Divide $[a, b]$ into intervals of length $(b-a)/N$ and let $p_1, ..., p_{N+1}$ be the end-points of these intervals.
  \State For each $p_i$, compute the Euler characteristic of $c_i \coloneqq \chi(h^{-1}(-\infty, p_i])$.
  \State Initialize $\CV(h)$ and Diff as an empty list.
  \FORALL{$i = 1, \cdots, N$, }
  \State If $c_i \neq c_{i+1}$, add $(c_i + c_{i+1})/2$ to $\CV(h)$ and $c_{i+1} - c_i$ to Diff.
\ENDFOR
\State Compute $\int_X h \lceil d\chi \rceil$ as the dot product of $\CV(h)$ and Diff.
		\end{algorithmic}
\end{algorithm}
\noindent We can also numerically compute $\int_X h \lfloor d\chi \rfloor = - \int_X -h \lceil d\chi \rceil$ using the duality of $\lceil d\chi \rceil$ and $\lfloor d\chi \rfloor$ given in \cite{baryshnikov2010euler}. Note that the algorithm above only needs to keep track of where the Euler characteristic changes, rather than uniformly across the interval $[a, b]$. An optimization that may fair better in practice is as follows:
\begin{enumerate}
    \item We uniformly sample $N+1$ points from $[a, b]$ and we check the first half of the points say $a_1, ..., a_{\lfloor N/2\rfloor}$.
    \item Then we focus on the intervals $[a_j, a_{j+1}]$ where the Euler characteristics at $a_j$ is different than that at $a_{j+1}$. We then sample $f\lfloor N/4\rfloor$ points uniformly on each of these intervals. We continue this procedure until we run out of points.
\end{enumerate}
The advantage of this change is that it takes advantage of the general sparsity of $\CV(h)$ and concentrates on where changes can occur.



}

\end{appendix}

\maketitle

\bibliography{sample}

\begin{thebibliography}{48}
\providecommand{\natexlab}[1]{#1}
\providecommand{\url}[1]{\texttt{#1}}
\expandafter\ifx\csname urlstyle\endcsname\relax
  \providecommand{\doi}[1]{doi: #1}\else
  \providecommand{\doi}{doi: \begingroup \urlstyle{rm}\Url}\fi

\bibitem[Adler et~al.(2007)Adler, Taylor, et~al.]{adler2007random}
R.~J. Adler, J.~E. Taylor, et~al.
\newblock \emph{Random fields and geometry}, volume~80.
\newblock Springer, 2007.

\bibitem[Aerts et~al.(2014)Aerts, Velazquez, Leijenaar, Parmar, Grossmann,
  Carvalho, Bussink, Monshouwer, Haibe-Kains, Rietveld,
  et~al.]{aerts2014decoding}
H.~J. Aerts, E.~R. Velazquez, R.~T. Leijenaar, C.~Parmar, P.~Grossmann,
  S.~Carvalho, J.~Bussink, R.~Monshouwer, B.~Haibe-Kains, D.~Rietveld, et~al.
\newblock Decoding tumour phenotype by noninvasive imaging using a quantitative
  radiomics approach.
\newblock \emph{Nature communications}, 5\penalty0 (1):\penalty0 4006, 2014.

\bibitem[Alexanderian(2015)]{alexanderian2015brief}
A.~Alexanderian.
\newblock A brief note on the {K}arhunen-{L}oève expansion.
\newblock \emph{arXiv preprint arXiv:1509.07526}, 2015.

\bibitem[Ashburner(2007)]{ashburner2007fast}
J.~Ashburner.
\newblock A fast diffeomorphic image registration algorithm.
\newblock \emph{Neuroimage}, 38\penalty0 (1):\penalty0 95--113, 2007.

\bibitem[Bankman(2008)]{bankman2008handbook}
I.~Bankman.
\newblock \emph{Handbook of medical image processing and analysis}.
\newblock Elsevier, 2008.

\bibitem[Baryshnikov and Ghrist(2010)]{baryshnikov2010euler}
Y.~Baryshnikov and R.~Ghrist.
\newblock Euler integration over definable functions.
\newblock \emph{Proceedings of the National Academy of Sciences}, 107\penalty0
  (21):\penalty0 9525--9530, 2010.

\bibitem[Baryshnikov et~al.(2011)Baryshnikov, Ghrist, and
  Lipsky]{baryshnikov2011inversion}
Y.~Baryshnikov, R.~Ghrist, and D.~Lipsky.
\newblock Inversion of {E}uler integral transforms with applications to sensor
  data.
\newblock \emph{Inverse problems}, 27\penalty0 (12):\penalty0 124001, 2011.

\bibitem[Bobrowski and Borman(2012)]{bobrowski_borman_2012}
O.~Bobrowski and M.~S. Borman.
\newblock Euler integration of gaussian random fields and persistent homology.
\newblock \emph{Journal of Topology and Analysis}, 04\penalty0 (01):\penalty0
  49–70, 2012.
\newblock \doi{10.1142/s1793525312500057}.

\bibitem[Bredon(2012)]{bredon2012sheaf}
G.~E. Bredon.
\newblock \emph{Sheaf theory}, volume 170.
\newblock Springer Science \& Business Media, 2012.

\bibitem[Brezis(2011)]{brezis2011functional}
H.~Brezis.
\newblock \emph{Functional analysis, Sobolev spaces and partial differential
  equations}, volume~2.
\newblock Springer, 2011.

\bibitem[Brooks and Grigsby(2013)]{brooks2013quantification}
F.~J. Brooks and P.~W. Grigsby.
\newblock Quantification of heterogeneity observed in medical images.
\newblock \emph{BMC medical imaging}, 13\penalty0 (1):\penalty0 1--12, 2013.

\bibitem[Carlsson(2009)]{carlsson2009topology}
G.~Carlsson.
\newblock Topology and data.
\newblock \emph{Bulletin of the American Mathematical Society}, 46\penalty0
  (2):\penalty0 255--308, 2009.

\bibitem[Chen and Rabad{\'a}n(2017)]{chen2017fast}
A.~X. Chen and R.~Rabad{\'a}n.
\newblock A fast semi-automatic segmentation tool for processing brain tumor
  images.
\newblock In \emph{Towards Integrative Machine Learning and Knowledge
  Extraction: BIRS Workshop, Banff, AB, Canada, July 24-26, 2015, Revised
  Selected Papers}, pages 170--181. Springer, 2017.

\bibitem[Crawford et~al.(2020)Crawford, Monod, Chen, Mukherjee, and
  Rabad{\'a}n]{crawford2020predicting}
L.~Crawford, A.~Monod, A.~X. Chen, S.~Mukherjee, and R.~Rabad{\'a}n.
\newblock Predicting clinical outcomes in glioblastoma: an application of
  topological and functional data analysis.
\newblock \emph{Journal of the American Statistical Association}, 115\penalty0
  (531):\penalty0 1139--1150, 2020.

\bibitem[Curry et~al.(2022)Curry, Mukherjee, and Turner]{curry2022many}
J.~Curry, S.~Mukherjee, and K.~Turner.
\newblock How many directions determine a shape and other sufficiency results
  for two topological transforms.
\newblock \emph{Transactions of the American Mathematical Society, Series B},
  9\penalty0 (32):\penalty0 1006--1043, 2022.

\bibitem[Dunson and Wu(2021)]{dunson2021inferring}
D.~B. Dunson and N.~Wu.
\newblock Inferring manifolds from noisy data using gaussian processes.
\newblock \emph{arXiv preprint arXiv:2110.07478}, 2021.

\bibitem[Eloyan et~al.(2020)Eloyan, Yue, and Khachatryan]{eloyan2020tumor}
A.~Eloyan, M.~S. Yue, and D.~Khachatryan.
\newblock Tumor heterogeneity estimation for radiomics in cancer.
\newblock \emph{Statistics in medicine}, 39\penalty0 (30):\penalty0 4704--4723,
  2020.

\bibitem[Ghrist et~al.(2018)Ghrist, Levanger, and Mai]{ghrist2018persistent}
R.~Ghrist, R.~Levanger, and H.~Mai.
\newblock Persistent homology and {E}uler integral transforms.
\newblock \emph{Journal of Applied and Computational Topology}, 2\penalty0
  (1):\penalty0 55--60, 2018.

\bibitem[Ghrist(2014)]{ghrist2014elementary}
R.~W. Ghrist.
\newblock \emph{Elementary applied topology}, volume~1.
\newblock Createspace Seattle, 2014.

\bibitem[Good(2013)]{good2013permutation}
P.~Good.
\newblock \emph{Permutation tests: a practical guide to resampling methods for
  testing hypotheses}.
\newblock Springer Science \& Business Media, 2013.

\bibitem[Howell(2006)]{howell2006handbook}
S.~B. Howell.
\newblock \emph{Handbook of CCD astronomy}, volume~5.
\newblock Cambridge University Press, 2006.

\bibitem[Hsing and Eubank(2015)]{hsing2015theoretical}
T.~Hsing and R.~Eubank.
\newblock \emph{Theoretical foundations of functional data analysis, with an
  introduction to linear operators}, volume 997.
\newblock John Wiley \& Sons, 2015.

\bibitem[Jiang et~al.(2020)Jiang, Kurtek, and Needham]{jiang2020weighted}
Q.~Jiang, S.~Kurtek, and T.~Needham.
\newblock The weighted {E}uler curve transform for shape and image analysis.
\newblock In \emph{Proceedings of the IEEE/CVF Conference on Computer Vision
  and Pattern Recognition Workshops}, pages 844--845, 2020.

\bibitem[Jolliffe(2002)]{jolliffe2002principal}
I.~T. Jolliffe.
\newblock \emph{Principal component analysis for special types of data}.
\newblock Springer, 2002.

\bibitem[Just(2014)]{just2014improving}
N.~Just.
\newblock Improving tumour heterogeneity mri assessment with histograms.
\newblock \emph{British journal of cancer}, 111\penalty0 (12):\penalty0
  2205--2213, 2014.

\bibitem[Kidder and Haar(1995)]{kidder1995satellite}
S.~Q. Kidder and T.~H.~V. Haar.
\newblock \emph{Satellite meteorology: an introduction}.
\newblock Gulf Professional Publishing, 1995.

\bibitem[Kirveslahti and Mukherjee(2023)]{kirveslahti2023representing}
H.~Kirveslahti and S.~Mukherjee.
\newblock Representing fields without correspondences: the lifted euler
  characteristic transform.
\newblock \emph{Journal of Applied and Computational Topology}, pages 1--34,
  2023.

\bibitem[Klenke(2020)]{klenke2013probability}
A.~Klenke.
\newblock \emph{Probability theory: a comprehensive course}.
\newblock Springer, 2020.

\bibitem[Li et~al.(2022)Li, Mukhopadhyay, and Dunson]{li2022efficient}
D.~Li, M.~Mukhopadhyay, and D.~B. Dunson.
\newblock Efficient manifold approximation with spherelets.
\newblock \emph{Journal of the Royal Statistical Society Series B: Statistical
  Methodology}, 84\penalty0 (4):\penalty0 1129--1149, 2022.

\bibitem[Maldonado et~al.(2021)Maldonado, Varghese, Rajagopalan, Duan, Balar,
  Lakhani, Antic, Massion, Johnson, Karwoski, et~al.]{maldonado2021validation}
F.~Maldonado, C.~Varghese, S.~Rajagopalan, F.~Duan, A.~B. Balar, D.~A. Lakhani,
  S.~L. Antic, P.~P. Massion, T.~F. Johnson, R.~A. Karwoski, et~al.
\newblock Validation of the broders classifier (benign versus aggressive nodule
  evaluation using radiomic stratification), a novel hrct-based radiomic
  classifier for indeterminate pulmonary nodules.
\newblock \emph{European Respiratory Journal}, 57\penalty0 (4), 2021.

\bibitem[Marsh and Beers(2023)]{marsh2023stability}
L.~Marsh and D.~Beers.
\newblock Stability and inference of the {E}uler characteristic transform.
\newblock \emph{arXiv preprint arXiv:2303.13200}, 2023.

\bibitem[Marsh et~al.(2022)Marsh, Zhou, Quin, Lu, Byrne, and
  Harrington]{marsh2022detecting}
L.~Marsh, F.~Y. Zhou, X.~Quin, X.~Lu, H.~M. Byrne, and H.~A. Harrington.
\newblock Detecting temporal shape changes with the {E}uler characteristic
  transform.
\newblock \emph{arXiv preprint arXiv:2212.10883}, 2022.

\bibitem[Meng and Eloyan(2021)]{meng2021principal}
K.~Meng and A.~Eloyan.
\newblock Principal manifold estimation via model complexity selection.
\newblock \emph{Journal of the Royal Statistical Society: Series B (Statistical
  Methodology)}, 83\penalty0 (2):\penalty0 369--394, 2021.

\bibitem[Meng et~al.(2022)Meng, Wang, Crawford, and Eloyan]{meng2022randomness}
K.~Meng, J.~Wang, L.~Crawford, and A.~Eloyan.
\newblock Randomness and statistical inference of shapes via the smooth {E}uler
  characteristic transform.
\newblock \emph{arXiv preprint arXiv:2204.12699}, 2022.

\bibitem[Milnor(1963)]{milnor1963morse}
J.~W. Milnor.
\newblock \emph{Morse theory}.
\newblock Number~51. Princeton university press, 1963.

\bibitem[Reed(2012)]{reed2012methods}
M.~Reed.
\newblock \emph{Methods of modern mathematical physics: Functional analysis}.
\newblock Elsevier, 2012.

\bibitem[Reed(2005)]{reed2005electron}
S.~J.~B. Reed.
\newblock \emph{Electron microprobe analysis and scanning electron microscopy
  in geology}.
\newblock Cambridge university press, 2005.

\bibitem[Robinson and Turner(2017)]{robinson2017hypothesis}
A.~Robinson and K.~Turner.
\newblock Hypothesis testing for topological data analysis.
\newblock \emph{Journal of Applied and Computational Topology}, 1:\penalty0
  241--261, 2017.

\bibitem[Schapira(1995)]{schapira1995tomography}
P.~Schapira.
\newblock Tomography of constructible functions.
\newblock In \emph{Applied Algebra, Algebraic Algorithms and Error-Correcting
  Codes: 11th International Symposium, AAECC-11 Paris, France, July 17--22,
  1995 Proceedings 11}, pages 427--435. Springer, 1995.

\bibitem[Stein and Shakarchi(2011)]{stein2011fourier}
E.~M. Stein and R.~Shakarchi.
\newblock \emph{Fourier analysis: an introduction}, volume~1.
\newblock Princeton University Press, 2011.

\bibitem[Taylor and Worsley(2008)]{taylor2008random}
J.~E. Taylor and K.~J. Worsley.
\newblock Random fields of multivariate test statistics, with applications to
  shape analysis.
\newblock 2008.

\bibitem[Turner et~al.(2014)Turner, Mukherjee, and Boyer]{turner2014persistent}
K.~Turner, S.~Mukherjee, and D.~M. Boyer.
\newblock Persistent homology transform for modeling shapes and surfaces.
\newblock \emph{Information and Inference: A Journal of the IMA}, 3\penalty0
  (4):\penalty0 310--344, 2014.

\bibitem[van~den Dries(1998)]{van1998tame}
L.~P.~D. van~den Dries.
\newblock \emph{Tame topology and o-minimal structures}, volume 248.
\newblock Cambridge university press, 1998.

\bibitem[Vishwanath et~al.(2020)Vishwanath, Fukumizu, Kuriki, and
  Sriperumbudur]{vishwanath2020limits}
S.~Vishwanath, K.~Fukumizu, S.~Kuriki, and B.~Sriperumbudur.
\newblock On the limits of topological data analysis for statistical inference.
\newblock \emph{arXiv e-prints}, pages arXiv--2001, 2020.

\bibitem[Wang et~al.(2021)Wang, Sudijono, Kirveslahti, Gao, Boyer, Mukherjee,
  and Crawford]{wang2021statistical}
B.~Wang, T.~Sudijono, H.~Kirveslahti, T.~Gao, D.~M. Boyer, S.~Mukherjee, and
  L.~Crawford.
\newblock A statistical pipeline for identifying physical features that
  differentiate classes of 3d shapes.
\newblock \emph{The Annals of Applied Statistics}, 15\penalty0 (2):\penalty0
  638--661, 2021.

\bibitem[Wang et~al.(2023)Wang, Meng, and Duan]{wang2023hypothesis}
J.~Wang, K.~Meng, and F.~Duan.
\newblock Hypothesis testing for medical imaging analysis via the smooth euler
  characteristic transform.
\newblock \emph{arXiv preprint arXiv:2308.06645}, 2023.

\bibitem[Wu et~al.(2022)Wu, Cheng, Cai, Ma, and Zhong]{wu2022subsampling}
S.~Wu, H.~Cheng, J.~Cai, P.~Ma, and W.~Zhong.
\newblock Subsampling in large graphs using ricci curvature.
\newblock In \emph{The Eleventh International Conference on Learning
  Representations}, 2022.

\bibitem[Yue et~al.(2016)Yue, Zipunnikov, Bazin, Pham, Reich, Crainiceanu, and
  Caffo]{yue2016parameterization}
C.~Yue, V.~Zipunnikov, P.-L. Bazin, D.~Pham, D.~Reich, C.~Crainiceanu, and
  B.~Caffo.
\newblock Parameterization of white matter manifold-like structures using
  principal surfaces.
\newblock \emph{Journal of the American Statistical Association}, 111\penalty0
  (515):\penalty0 1050--1060, 2016.

\end{thebibliography}

\end{document}